\DeclareMathOperator{\Tr}{Tr}
\renewcommand*{\Re}{\mathop{\mathrm{Re}}\nolimits}
\theoremstyle{remark}
\newtheorem{remark}{Remark}
\theoremstyle{plain}
\newtheorem{theorem}{Theorem}
\newtheorem{proposition}{Proposition}
\newtheorem{corollary}{Corollary}
\begin{document}
\title{
\vspace{1cm} {\bf Semiclassical description of collapses and revivals of quantum wave packets in bounded domains}
}
\author{A.\,S.~Trushechkin$^{1,2}$ and I.\,V. Volovich$^{1}$\bigskip
 \\
{\it  $^{1}$Steklov Mathematical Institute of the Russian Academy of Sciences}
\\ {\it Gubkina St. 8, 119991 Moscow, Russia}\medskip\\
{\it  $^{2}$National Research Nuclear University ``MEPhI''}
\\ {\it Kashirskoe Highway 31, 115409 Moscow, Russia}\bigskip
\\ e-mail:\:\href{mailto:trushechkin@mi.ras.ru}{\texttt{trushechkin@mi.ras.ru}}, \: \href{mailto:volovich@mi.ras.ru}{\texttt{volovich@mi.ras.ru}}}

\date{}
\maketitle

\begin{abstract}
We study a special kind of semiclassical limit of quantum dynamics on a circle and in a box (infinite  potential well with hard walls) as the Planck constant tends to zero and time tends to infinity. The results give detailed information about all stages of evolution of quantum wave packets: semiclassical motion, collapses, revivals, as well as intermediate stages. In particular, we rigorously justify the fact that the spatial distribution of a wave packet is most of the time close to uniform. This fact was previously known only from numerical calculations. 

We apply the obtained results to a problem of classical mechanics: deciding whether recently suggested functional classical mechanics is preferable to traditional Newtonian one from the quantum-mechanical point of view. To do this, we study the semiclassical limit of the Husimi functions of quantum states. We show that functional mechanics remains valid at larger time scales than Newtonian one and, therefore, is preferable. 

Finally, we analyse the quantum dynamics in a box in case when the size of the box is known with a random error. We show that, in this case, the probability distribution of the position of a quantum particle is not almost periodic, but tends to a limit distribution as time indefinitely increases.
\end{abstract}

\section{Introduction}

In this paper we study some aspects of the dynamics of quantum wave packets in bounded domains. These themes are related to fundamental problems of theoretical and mathematical physics. The dynamics of quantum systems in bounded domains has been studied for several decades \cite{1,2,3} and continues to attract attention \cite{4,AronStroud00,5,6,AronStroud05}. Among the important results obtained are an analogue of the Poincar\'{e} recurrence theorem for quantum systems with discrete energy spectrum and a detailed description of the structure of the revival phenomenon in such systems \cite{2,3,AronStroud05}, including the case of the infinite square potential well (box) \cite{4} as well as the case of the finite square potential well \cite{AronStroud00}. Numerical simulation enables one to study in more detail the dynamical properties concerning collapses and revivals of quantum wave packets in bounded domains \cite{5,6}. Understanding quantum dynamics in bounded domains is important in condensed state physics and the physics of nanosystems \cite{7,8}.

There are some open problems in the quantum and classical dynamics of particles in bounded domains. In particular, it is pointed out in \cite{5} that the issue of collapse of quantum wave packets has not yet been adequately studied. Another question: on which time intervals do the quantum and classical descriptions agree? The explicit form of the uncertainty relations for bounded domains is still an open question \cite{9}.
The asymptotic properties of classical dynamics for collisionless continuous media in a box form the subject of papers by Poincar\'{e} and Kozlov (see \cite{10,Kozlov3,11,12}). Here we obtain analogues of the Kozlov's  theorems on diffusion for quantum systems.

We study the dynamics of quantum states on a circle and in a box using a special semiclassical limit as the Planck constant tends to zero and time tends to infinity (a similar procedure is used in the method of the stochastic limit \cite{13}). The results give detailed information about all stages of evolution of quantum wave packets: semiclassical motion, collapses, revivals, as well as intermediate stages. In particular, we rigorously justify the fact (previously known only from numerical calculations) that the spatial distribution of a wave packet is most of the time close to the uniform distribution (an analogue of the Kozlov's theorem on diffusion). This is done in Section~\ref{s2} (the circle case, the main result being Theorem~\ref{PropDynAsymp}) and Section~\ref{s3} (the box case, the main result being Theorem~\ref{PropDynAsympBox}). We prove the theorems for coherent states on a circle and in a box and use the fact that an arbitrary wave function can be represented by an integral over coherent states.

We then apply the  obtained results to a problem of classical mechanics: deciding whether one should prefer  recently suggested functional classical mechanics \cite{14,15} (see also \cite{16,Mikh,30,Pisk,17,18}) to traditional Newtonian one. The basic concept of functional mechanics is not a material point or a trajectory but a probability density function in a phase space. Accordingly, the fundamental dynamical equations are not Newton (or, equivalently, Hamilton) equations but the Liouville equation (even if we consider just one particle, not an ensemble). The Newton (Hamilton) equations become approximate equations for the mean values of distributions of the positions and momenta. Corrections to solutions of the Newton equations have been calculated in some particular cases \cite{14,15,30,Pisk}.

Functional mechanics was suggested in an attempt to solve the irreversibility problem (or reversibility paradox), that is, to make the reversible microscopic dynamics compatible with the irreversible macroscopic dynamics (see \cite{11} as well as \cite{12,19}). This paradox is absent from functional mechanics since both the macro- and microscopic dynamical pictures become irreversible in some sense.

A motivation of functional mechanics comes from the fact that arbitrary real numbers, being infinite decimals, are non-observable (and hence, so are individual trajectories). Therefore, it is more natural to consider bunches of trajectories (or the dynamics of the probability density) than individual trajectories of a material point. Each individual trajectory is a kind of ``hidden variable'' and has no direct physical meaning.

A procedure of constructing the density function of a physical system from directly observable quantities (results of measurements) is described in \cite{17}. The dynamical interaction of the system and the measuring instrument is studied from the point of view of functional mechanics in \cite{18}.

In Section~\ref{s4} we try to approach the problem of choosing a preferable formulation of classical mechanics from a quantum-mechanical perspective. We ask whether Newtonian or functional classical dynamics remains consistent with quantum dynamics for longer time. To answer this, we study the semiclassical limit of the Husimi functions of quantum states of particles on a circle and in a box. For every $\hbar>0$ a quantum density operator determines a classical density function on the phase space, and we pass to the limit as $\hbar\to0$. Note that the evolution of the Wigner function and diffusion in collisionless media consisting of quantum particles on a non-compact space was considered in \cite{20}.

As a result, we obtain (Theorems~\ref{TheoCoherDynClCirc} and \ref{TheoCoherDynClSegm}) that both formulations of classical mechanics adequately describe the system when time is not arbitrarily large. But functional mechanics remains valid at larger time scale than traditional one. Hence, it is preferable in this aspect.

Finally, in Section~\ref{s5} we we analyse the quantum dynamics in a box in case when the size of the box is known with a random error (as we said before, we cannot know the exact size as an infinite decimal). We show that, in this case, the probability distribution of the position of a quantum particle is not almost periodic, but tends to a limit distribution as time indefinitely increases.

\section{Coherent states on a circle}\label{s2}

\subsection{Definition of coherent states on a circle}

Consider a family of functions $\eta_{qp}(x)\in L_2(\mathbb R)$,
$(q,p)\in\mathbb R^2$:
\begin{equation}\label{1} 
\eta_{qp}(x)=\frac{1}{\sqrt[4]{2\pi\alpha^2}}\exp\left\{-\frac{(x-q)^2}{4\alpha^2}+
\frac{ip(x-q)}{\hbar}\right\},
\end{equation}
where $\alpha>0$, $\hbar>0$. It satisfies a property known as the continuous resolution of
unity \cite{21}:
\begin{equation*}
\frac{1}{2\pi\hbar}\iint_{\mathbb R^2}P[\eta_{qp}]\,dqdp=1.
\end{equation*}
Here $P[\psi]$, $\psi\in L_2(\mathbb R)$, stands for the one-dimensional operator acting on any vector $\varphi\in
L_2(\mathbb R)$ by the rule $P[\psi]\varphi=(\psi,\varphi)\psi$, where
$(\cdot,\cdot)$ is the scalar product in $L_2(\mathbb R)$. ($P[\psi]$ is a projector whenever $\psi$ is a unit vector.) The equality is understood in the weak sense: for all $\psi,\chi\in L_2(\mathbb R)$
we have
\begin{equation}\label{2}
\frac{1}{2\pi\hbar}\iint_{\mathbb R^2}(\psi,P[\eta_{qp}]\chi)\,dqdp=
\frac{1}{2\pi\hbar}\iint_{\mathbb
R^2}(\psi,\eta_{qp})(\eta_{qp},\chi)\,dqdp=(\psi,\chi).
\end{equation}

In quantum mechanics, the functions $\eta_{qp}$ (with fixed $\alpha$) are called \textit{coherent states}. The most general formal definition of coherent states was given by Klauder and Skagerstam \cite{22}: a family of coherent states is defined as any family of vectors that continuously depend on their indices and form a resolution of unity. Another key feature of coherent states is that their properties are closest to those of classical particles among all pure quantum states (that is, all square-integrable functions).

The following analogue of the family of coherent states for the spaces $L_2(-l,l)$ was introduced in \cite{23}:
\begin{equation}\label{3}
\upsilon_{qp}(x)=\sum_{n=-\infty}^{+\infty}\eta_{qp}(x-2nl),
\end{equation}
where $q\in[-l,l]$, $p\in\mathbb R$.

\begin{proposition}\label{TheoUnit} The functions (\ref{3}) form a continuous resolution of unity in $L_2(-l, l)$:
$$\frac{1}{2\pi\hbar}\iint_{\Omega}P[\upsilon_{qp}]\,dqdp=1,$$
where $\Omega=\{(q,p)|\,q\in[-l,l],p\in\mathbb R\}$. The equality is understood in the weak sense: for all $\psi,\chi\in L_2(-l,l)$
we have
\begin{equation}\label{4}
\frac{1}{2\pi\hbar}\iint_{\Omega}(\psi,P[\upsilon_{qp}]\chi)\,dqdp=
\frac{1}{2\pi\hbar}\iint_{\Omega}(\psi,\upsilon_{qp})(\upsilon_{qp},\chi)\,dqdp=
(\psi,\chi).
\end{equation}
\end{proposition}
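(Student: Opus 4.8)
The plan is to deduce the resolution of unity on $L_2(-l,l)$ from the one already available on $L_2(\mathbb R)$, equation (\ref{2}), by ``unfolding'' both the periodizing sum in (\ref{3}) and the restriction $q\in[-l,l]$. For $\psi,\chi\in L_2(-l,l)$ let $\psi_0,\chi_0\in L_2(\mathbb R)$ denote their extensions by zero outside $[-l,l]$. Using the elementary identity $\eta_{qp}(x-2nl)=\eta_{q+2nl,\,p}(x)$, which follows at once from (\ref{1}), together with the fact that $\psi,\chi$ vanish outside $[-l,l]$, I would first rewrite the two scalar factors as $(\psi,\upsilon_{qp})=\sum_{n}(\psi_0,\eta_{q+2nl,\,p})$ and $(\upsilon_{qp},\chi)=\sum_{m}(\eta_{q+2ml,\,p},\chi_0)$, all inner products now living in $L_2(\mathbb R)$; the interchange of the sum in (\ref{3}) with the integration in $x$ is legitimate because the Gaussian series converges uniformly on $[-l,l]$.

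Substituting these expansions into the left-hand side of (\ref{4}) and changing the variable $q\mapsto q'=q+2nl$ in the $n$-th group of terms, the strip $q\in[-l,l]$ together with the shifts $2nl$ tiles the whole $q'$-axis; setting $k=m-n$, the left-hand side of (\ref{4}) becomes $\sum_k J_k$ with \[ J_k=\frac{1}{2\pi\hbar}\iint_{\mathbb R^2}(\psi_0,\eta_{q'p})(\eta_{q'+2kl,\,p},\chi_0)\,dq'\,dp. \] Each $J_k$ is then evaluated directly from (\ref{2}). For $k=0$ this is precisely the resolution of unity for the pair $\psi_0,\chi_0$, so $J_0=(\psi_0,\chi_0)_{L_2(\mathbb R)}=(\psi,\chi)_{L_2(-l,l)}$, the desired value. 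For $k\neq0$ I would move the translation off the second coherent state onto $\chi_0$: since $\eta_{q'+2kl,\,p}(x)=\eta_{q'p}(x-2kl)$, the substitution $x\mapsto x+2kl$ turns $(\eta_{q'+2kl,\,p},\chi_0)$ into $(\eta_{q'p},\chi_0(\cdot+2kl))$, so that (\ref{2}) applies again and yields $J_k=\int_{\mathbb R}\overline{\psi_0(x)}\,\chi_0(x+2kl)\,dx$. Here the restriction $q\in[-l,l]$ is used decisively: $\psi_0$ is supported in $[-l,l]$ while $x\mapsto\chi_0(x+2kl)$ is supported in $[-l,l]-2kl$, and for $k\neq0$ these two intervals overlap only in a set of measure zero, whence $J_k=0$. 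Summing, $\sum_kJ_k=J_0=(\psi,\chi)$, which is (\ref{4}).

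The step I expect to be the main obstacle is the justification of the rearrangements above, because for general $\psi,\chi\in L_2(-l,l)$ the family $\{J_k\}$ is not absolutely summable: by Cauchy--Schwarz and the translation invariance of the measure $dq'\,dp$, each $\iint_{\mathbb R^2}|(\psi_0,\eta_{q'p})|\,|(\eta_{q'+2kl,\,p},\chi_0)|\,dq'\,dp$ is bounded uniformly in $k$ but does not decay, so the collapse to $\sum_kJ_k$ holds only conditionally. To make everything rigorous I would first prove (\ref{4}) for $\psi,\chi\in C_c^\infty(-l,l)$, for which $(q',p)\mapsto(\psi_0,\eta_{q'p})$ decays faster than any power in $p$ (the compact support and smoothness of $\psi_0$ make the $p$-integral a Schwartz function) and like a Gaussian in $q'$; this rapid two-sided decay makes every sum and integral absolutely convergent and legitimizes Fubini and the change of variables, so that the computation above applies verbatim. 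I would then extend (\ref{4}) from this dense subspace to all of $L_2(-l,l)$ by continuity, using the elementary Bessel-type bound $\frac{1}{2\pi\hbar}\iint_\Omega|(\psi,\upsilon_{qp})|^2\,dq\,dp\le C\|\psi\|^2$ (again a consequence of the Gaussian decay of $\eta_{qp}$), which exhibits both sides of (\ref{4}) as bounded sesquilinear forms agreeing on a dense set. Finally, the continuity of $(q,p)\mapsto\upsilon_{qp}$ in $L_2(-l,l)$, needed for the word ``continuous'' in the statement, is immediate from the explicit form (\ref{3}) and the uniform convergence of the defining series.
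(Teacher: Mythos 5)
Your core computation is the paper's own proof in different clothing: you unfold the periodization in (\ref{3}), use $\eta_{qp}(x-a)=\eta_{q+a,p}(x)$ so that the strip $q\in[-l,l]$ together with the shifts $2nl$ tiles the whole line, apply the resolution of unity (\ref{2}) on $L_2(\mathbb R)$, and kill every term with $k\neq0$ by disjointness of supports (the paper carries the shift on $\psi$, calling the translate $\psi_k$, where you carry it on $\chi_0$ --- the same thing). The one place where you depart from the paper is the justification of the rearrangements, and there your diagnosis is wrong even though your workaround is not: the family $\{J_k\}$ \emph{is} absolutely summable for arbitrary $\psi,\chi\in L_2(-l,l)$. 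Apply Cauchy--Schwarz in the variable $p$ only: since $p\mapsto(\psi_0,\eta_{q'p})$ is, up to a phase, the Fourier transform of $\psi_0$ times a Gaussian centred at $q'$, Plancherel gives
\begin{equation*}
\int_{\mathbb R}\bigl|(\psi_0,\eta_{q'p})\bigr|\,\bigl|(\eta_{q'+2kl,p},\chi_0)\bigr|\,dp
\le 2\pi\hbar\sqrt{F_\psi(q')\,F_\chi(q'+2kl)},\qquad
F_\psi(u)=\int_{-l}^{l}\frac{|\psi(x)|^2}{\sqrt{2\pi\alpha^2}}\,e^{-\frac{(x-u)^2}{2\alpha^2}}\,dx,
\end{equation*}
and $F_\psi$ and $F_\chi(\cdot+2kl)$ have Gaussian tails around the intervals $[-l,l]$ and $[-l,l]-2kl$, which are separated by the distance $2(|k|-1)l$; integrating in $q'$ then yields $\iint_{\mathbb R^2}|(\psi_0,\eta_{q'p})|\,|(\eta_{q'+2kl,p},\chi_0)|\,dq'dp=O\bigl(e^{-(|k|-1)^2l^2/(4\alpha^2)}\bigr)$. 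Your global Cauchy--Schwarz over $dq'\,dp$ is simply too crude to see this decay: a uniform-in-$k$ upper bound does not show that the integrals fail to decay, and concluding ``only conditional convergence'' from it is a non sequitur. Consequently Tonelli licenses all your interchanges directly for all $\psi,\chi\in L_2(-l,l)$, and the detour through $C_c^\infty(-l,l)$ plus density, while correct, is unnecessary; note moreover that the Bessel-type bound you invoke for that extension has non-vanishing cross terms $k\neq0$ and so requires essentially the same estimate as above, so the detour saves no work. The paper performs the rearrangement without comment; your instinct that it needs justification is sound --- the justification is just direct rather than by approximation.
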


Here $P[\psi]$, $\psi\in L_2(-l,l)$, stands for the one-dimensional operator acting on any vector $\varphi\in L_2(-l,l)$
by the rule $P[\psi]\varphi=(\psi,\varphi)\psi$, where $(\cdot,\cdot)$
stands for the scalar product in $L_2(-l,l)$.
\begin{proof}
 We first establish the simple formula
 \begin{equation*}
(\psi,\eta_{qp}(y-2ml))\equiv\int_{-l}^l\overline\psi(y)\eta_{qp}(y-2ml)dy=
\int_{-\infty}^{+\infty}\overline{\psi_m}(y)\eta_{qp}(y)dy,
\end{equation*}
where
$$\psi_m(x)=\begin{cases}\psi(x+2ml),&x\in[-2ml-l,-2ml+l],\\
0,&x\notin[2ml-l,2ml+l].\end{cases}$$ Using this formula and the property
$\eta_{qp}(x-a)=\eta_{q+a,p}(x)$ for every $a\in\mathbb R$,
we get
\begin{equation*}\begin{split}
\iint_{\Omega}(\psi,\upsilon_{qp})(\upsilon_{qp},\chi)\,dqdp&=
\sum_{n,k=-\infty}^{+\infty}\iint_{\Omega}(\psi,\eta_{qp}(y-2(n+k)l))(\eta_{qp}(x-2nl),\chi)\,dqdp
\\&=\sum_{n,k=-\infty}^{+\infty}\iint_{\Omega}
(\psi_k,\eta_{q+2nl,p})_{\mathbb R}(\eta_{q+2nl,p},\chi_0)_{\mathbb
R}\,dqdp\\&= \sum_{k=-\infty}^{+\infty}\iint_{\mathbb R^2}
(\psi_k,\eta_{qp})_{\mathbb R}(\eta_{qp},\chi_0)_{\mathbb R}\,dqdp.
\end{split}
\end{equation*}
Here $(\cdot,\cdot)_{\mathbb R}$ is the scalar product in
$L_2(\mathbb R)$. Then the desired equality follows from
(\ref{2}):
$$\iint_{\Omega}(\psi,\upsilon_{qp})(\upsilon_{qp},\chi)\,dqdp=
\sum_{k=-\infty}^{+\infty}(\psi_k,\chi_0)_{\mathbb R}=
(\psi_0,\chi_0)_{\mathbb R}=(\psi,\chi).$$
The proposition is proved.
\end{proof}

The functions $\upsilon_{qp}$ may be called \textit{coherent states on a circle} since, first, they satisfy the Klauder--Skagerstam general definition (continuous dependence on the indices and resolution of unity), second, we shall see that the temporal evolution of these states on a circle tends to the dynamics of a classical particle on a circle in the semiclassical limit and, third, they converge to ordinary coherent states $\eta_{qp}$ on a line as $l\to\infty$.

In Section~\ref{s3} we show that the same properties hold for another family of functions in $L_2(-l, l)$:

\begin{equation*}
\omega_{qp}(x)=\sum_{n=-\infty}^{+\infty}(-1)^n\eta_{qp}[(-1)^n(x-2nl)].
\end{equation*}

These are \textit{coherent states in the infinite potential well (in a box)}.

\subsection{Spectral properties of coherent states on a circle} 
One can express the functions $\upsilon_{qp}$ in terms of the theta-function 
\begin{equation*}
\theta(x,\tau)=\sum_{k=-\infty}^{+\infty}\exp\{-\pi\tau k^2+2\pi ikx\}
\end{equation*}
($\Re\tau>0$) as follows:
\begin{equation}\label{5}
\upsilon_{qp}(x)= \frac{1}{\sqrt[4]{2\pi\alpha^2}}\,
\theta\left[\frac{(x-q)l}{2i\pi
\alpha^2}-\frac{pl}{\pi\hbar},\frac{l^2}{\pi\alpha^2}\right]
\exp\left\{-\frac{(x-q)^2}{4\alpha^2}+ \frac{ip(x-q)}{\hbar}\right\}.
\end{equation}

The theta-function has the so-called modular property (Jacobi identity) \cite{24,25}:
\begin{equation*}
\theta(\frac{x}{i\tau},\frac{1}{\tau})=\sqrt\tau e^{\frac{\pi x^2}{\tau}}\theta(x,\tau).
\end{equation*}
Using this identity in (\ref{5}), we obtain after some transformations that

\begin{equation}\label{6}
\begin{split}
\upsilon_{qp}(x)&=\sqrt[4]{\frac{\pi\alpha^2}{2l^4}}\,\theta\left(
-\frac{x-q}{2l}-\frac{p\alpha^2}{il\hbar},\frac{\pi\alpha^2}{l^2}\right)
\exp\left\{-\left(\frac{\alpha p}{\hbar}\right)^2\right\}\\&=
\sqrt[4]{\frac{\pi\alpha^2}{2l^4}}\sum_{k=-\infty}^{+\infty}
\exp\left\{-\alpha^2\left(\frac{\pi}{l}k-\frac{p}{\hbar}\right)^2+i\frac{\pi}{l}k(x-q)\right\}
\end{split}
\end{equation}
We have  got an expansion of $\upsilon_{qp}$ with respect to the orthonormal basis
$$e_k=\frac{1}{\sqrt{2l}}\,e^{i\frac{\pi}{l}kx},\quad k=0,\pm1,\pm2,\ldots,$$ of the space $L_2(-l,l)$. The same result could be obtained by the direct calculation of the scalar product of $\upsilon_{qp}$ with $e_k$. For this purpose, note that
\begin{equation}\label{7} 
\int_{-l}^l\upsilon_{qp}(x)e^{i\frac{\pi}{l}kx}\,dx=
\int_{-\infty}^{+\infty}\eta_{qp}(x)e^{i\frac{\pi}{l}kx}\,dx
\end{equation}
for integer $k$. Indeed,
\begin{multline*}
\int_{-l}^l\upsilon_{qp}(x)e^{i\frac{\pi}{l}kx}\,dx=
\sum_{n=-\infty}^{+\infty}\int_{-l}^l\eta_{qp}(x-2nl)e^{i\frac{\pi}{l}kx}\,dx\\=
\sum_{n=-\infty}^{+\infty}\int_{-2nl-l}^{-2nl+l}\eta_{qp}(x)e^{i\frac{\pi}{l}kx}\,dx=
\int_{-\infty}^{+\infty}\eta_{qp}(x)e^{i\frac{\pi}{l}kx}\,dx.
\end{multline*}

The integral on the right-hand side of (\ref{7}) is a Gaussian integral, whose calculation also yields (\ref{6}). Thus, the modular property of the theta-function turns out to be related to the Fourier transform of a periodic function defined by a sum of Gaussian functions. This is to be expected since the proof of the modular property involves the integration of the Gaussian function against trigonometric functions (see \cite{24} and the more general case in \cite{25}). In Appendix we give a proof of the modular property of the theta-function directly based on the Fourier series expansion of a periodic Gaussian function.

The functions $e_k$ are eigenfunctions of the self-adjoint operator
\begin{equation*}
H^c=-\frac{\hbar^2}{2m}\frac{d^2}{dx^2}\end{equation*}
on $L_2(-l,l)$ with the domain
$$D(H^c)=\{\psi\in AC^2(-l,l)|\,\psi(-l)=\psi(l),\,\psi'(-l)=\psi'(l)\},$$
where $m>0$ is a constant. Here $AC^2(-l,l)$ is the set of differentiable functions whose derivatives lie in $AC(-l, l)$, and $AC(-l, l)$ is the set of absolutely continuous functions whose derivatives lie in $L_2(-l, l)$.
The operator $H^c$ is a Hamiltonian (energy operator) for a free quantum particle of mass $m$ on a circle (see \cite{9}). Thus, the functions $\upsilon_{qp}(x)$ can be expanded in a uniformly convergent series with respect to the eigenfunctions of the Hamiltonian operator of a free particle on a circle.

\begin{remark}\label{r1}By definition, a quantum particle on a circle cannot be free since there must be a potential constraining it to the circle. Only particles on the whole space (on the line in our one-dimensional setting) can be free. But we use the expression ``free quantum particle on a circle'' here to indicate the absence of potentials other than the constraining one. Also, as $l\to\infty$, free motion on a circle becomes free motion on a line.\end{remark}

\subsection{General dynamical properties of the coherent states of a particle on a circle}\label{s23}
The temporal evolution of the state $\upsilon_{qp}$ is described by the formula
$$\upsilon_{qp,t}=U^c_t\upsilon_{qp},$$ where
\begin{equation}\label{8} 
U^c_t=\exp(-\frac{it}{\hbar}H^c)
\end{equation}
is the evolution operator for a free quantum particle on a circle.

The function $\upsilon_{qp,t}$ satisfies the Schr\"{o}dinger equation with the periodic boundary conditions:
\begin{equation}\label{9} 
\begin{split}
&i\hbar\frac{\partial\upsilon_{qp,t}}{\partial t}=
-\frac{\hbar^2}{2m}\frac{\partial^2\upsilon_{qp,t}}{\partial x^2},\\
&\upsilon_{qp,t}(-l)=\upsilon_{qp,t}(l),\,
\upsilon_{qp,t}'(-l)=\upsilon_{qp,t}'(l),\\
&\upsilon_{qp,0}(x)=\upsilon_{qp}(x),
\end{split}
\end{equation}
where $x\in[-l, l]$, $t\in\mathbb R$. Using the reflection method \cite{26}, we get
\begin{equation}\label{10} 
\upsilon_{qp,t}(x)=\sum_{n=-\infty}^{+\infty}\eta_{qp,t}(x-2nl)
\end{equation}
where
\begin{equation}\label{11} 
\eta_{qp,t}(x)=\frac{1}{\sqrt[4]{2\pi\alpha^2(1+i\gamma)^2}}
\exp\left\{-\frac{(x-q-\frac{pt}{m})^2}{4\alpha^2(1+i\gamma)}
+\frac{ip(x-q-\frac{pt}{2m})}{\hbar}\right\},
\end{equation}
is a well-known function in $L_2(\mathbb R)$ that describes the evolution of the initial wave packet $\eta_{qp}$. Here $\gamma=\frac{\hbar t}{2m\alpha^2}.$ We see from (\ref{11}) that under free motion on a line, the centre of the wave packet moves along the classical trajectory $q(t) = q + \frac{pt}m$ while its dispersion grows, that is, the wave packet indefinitely spread with time. If $\Delta q(0) = \alpha$ is the initial mean square deviation of the coordinate, then its value at time $t$ is equal to
\begin{equation}\label{12} 
\Delta q(t)=\sqrt{\alpha^2+\left(\frac{\hbar t}{2m\alpha}\right)^2}.
\end{equation}

On the circle, only the terms with $n = 0$ make an essentially non-zero contribution to the sum (\ref{10}) for small t (we assume that $\alpha\ll l$). Hence, we observe similar behaviour: the centre of the wave packet moves along the classical trajectory with period
\begin{equation*}
T_{cl}=\frac{2lm}p
\end{equation*}
(the period of the motion of a classical particle of mass $m$ and momentum $p$ around the circle of circumference $2l$). The wave packet eventually collapses. The numerical experiments reported in \cite{6} show that at time
\begin{equation}\label{13} 
T_{coll}=\frac{2ml\alpha}{\sqrt3\hbar}\end{equation}
one achieves an approximately uniform distribution of the position of a particle: $|\upsilon_{qp,t}(x)|^2\approx1/{2l}$. The value of $T_{coll}$ is heuristically obtained as
follows. The mean square deviation of the uniform spatial distribution on the circle $[-l,l]$ is equal to $\frac l{\sqrt 3}$. Suppose that $\Delta q(t) = \frac l{\sqrt3}$ , where $\Delta q(t)$ is defined by (\ref{12}). Then $t=T_{coll}$ is a solution of this equation for $t$. Of course, this argument is non-rigorous since $\Delta q(t)$ is the mean square deviation at time $t$ for a particle on a line, not on a circle. But this conclusion is approximately confirmed by numerical calculations. We also obtain some rigorous asymptotic estimates corresponding to the flattening of the spatial density: see the next section. In particular, we will see that the distribution at time $T_{coll}$ (in the semiclassical approach) is non-uniform, the uniform density is achieved at slightly later times. Nevertheless, we will refer to $T_{coll}$ as to a time scale of the wave packet collapse.

The situation at large values of time is quite different from that of dynamics on a line. We express the solution of (\ref{9}) as a series in eigenfunctions of $H^c$:
\begin{equation*}
\upsilon_{qp,t}(x)=\sum_{k=-\infty}^{+\infty}a_{k,qp}\exp\left\{i\frac{\pi}{l}kx-
\frac{i\hbar t}{2m}\left(\frac{\pi}{l}k\right)^2\right\}.
\end{equation*}
The coefficients $a_{k,qp}$ can be found from (\ref{6}). We see that the dynamics is periodic:
the wave packet is completely restored to its original form at the time
\begin{equation}\label{14} 
T_{rev}=\frac{4ml^2}{\pi\hbar}\end{equation}
that is, $\upsilon_{qp,T_{rev}}=\upsilon_{qp}$. This phenomenon is referred to as the \textit{full revival of the wave packet}. At time moments $\frac MN T_{rev}$ with integer $M$ and $N$ one observes the
so-called \textit{fractional revivals of the wave packet} (\cite{2,3}): a copy of the original packet arises simultaneously at several places on the circle. (See \cite{1,2,3,4,5,6} for more details on the dynamics of quantum systems with discrete energy spectrum, including the particular case of quantum particles on a circle.)

Thus, there are three time scales in the quantum dynamics of a particle on a circle \cite{6}:
\begin{enumerate}[1)]
\item $T_{cl}$, the classical period of motion,
\item $T_{coll}$, the characteristic time of collapse of the quantum wave packet,
\item $T_{rev}$, the full revival period of the quantum wave packet.
\end{enumerate}

In the next subsection we consider the semiclassical limit as $\hbar\to 0$, $\alpha \to 0$,
$\frac\hbar\alpha \to 0$ (the parameter $\alpha$ occurs in the definition of $\upsilon_{qp}$; see (\ref{1}) and (\ref{3})). The time scales have different asymptotic behaviour in this limit:
\begin{equation}\label{15} 
T_{cl}=C_1,\quad T_{coll}=C_2\frac{\alpha}\hbar,\quad
T_{rev}=\frac{C_3}\hbar
\end{equation}
where $C_1$, $C_2$, $C_3$ are constants.

\subsection{The semiclassical limit of the dynamics of coherent states on a circle}\label{s24} In this subsection we consider limits in the space of distributions. Let $K$ be a cylinder (the set $\Omega = [-l,l]\times\mathbb R \ni (q,p)$ with the points $(-l,p)$ and $(l,p)$ identified for all $p\in\mathbb R$). We are going to define the space of distributions on $K$. Introduce the set of test functions
\begin{equation}\label{16} 
\begin{split}
\mathscr S(K)=\{\sigma:\mathbb R^2\to\mathbb R|\quad &1)\: \sigma(q+2nl,p)=\sigma(q,p),\: n=0,\pm1,\pm2,\ldots;\\
&2)\: \sigma\in C^\infty(\mathbb R^2);\\
&3) \lim_{p\to\pm\infty}p^r\frac{\partial^{s_1+s_2}\sigma}{\partial
q^{s_1}\partial q^{s_2}}=0,\: r,s_1,s_2=0,1,2,\ldots\}.
\end{split}\end{equation}
Its topology is defined by the seminorms
\begin{equation}\label{17} 
P_N(\sigma)=\max_{s_1+s_2\leq N}\sup_{\mathbb R^2}(1+p^2)^{N/2}
\left|\frac{\partial^{s_1+s_2}\sigma}{\partial q^{s_1}\partial
q^{s_2}}\right|,\quad N=0,1,2,\ldots.\end{equation} 
This is the space of functions on $K$ that decay rapidly with respect to $p$. Let $\mathscr S'(K)$ be the space of distributions, that is, continuous linear functionals
on $\mathscr S(K)$.

Let $\sigma\in\mathscr S(K)$. We define the following distributions:
\begin{align*}
&(\delta(q-q_0,p-p_0),\sigma)=\sigma(q_0,p_0),\\
&(f(q)\delta(p-p_0),\sigma)=\int_{-\infty}^{+\infty}f(q)\sigma(q,p_0),\\
&(c\delta(p-p_0),\sigma)=c\int_{-l}^l\sigma(q,p_0)\,dq,
\end{align*}
where $(q_0,p_0)\in\mathbb R^2$, $f(q)$ is an integrable function on a line, and $c\in\mathbb R$.
Consider the function
\begin{equation}\label{18} 
\varphi_D(q)=\frac1{\sqrt{2\pi
D^2}}e^{-\frac{q^2}{2D^2}},\end{equation} where $D\in(0,\infty)$, and define it for $D = 0$ and $D = \infty$ by putting $\varphi_0(q)=\lim\limits_{D\to0}\varphi_D(q)=\delta(q)$ and
$\varphi_\infty(q)=\lim\limits_{D\to\infty}\varphi_D(q)=\frac1{2l}$
(the limits are taken in the space $\mathscr S'(K)$ of distributions).

The space $\mathscr D(K)$ of test functions is defined by the same formula as $\mathscr S(K)$ but
with the third condition of rapid decay with respect to $p$ in (\ref{16}) replaced by the condition of being compactly supported with respect to $p$. We similarly introduce the space $\mathscr D'(K)$ of distributions.

\begin{theorem}\label{PropDynAsymp}We have the following limit formula in $\mathscr S'(K)$ (where $(q, p)$ are fixed and $(q',p')$ are variables of integration with test functions $\sigma(q',p') \in\mathscr S(K)$):
\begin{equation}\label{19} 
\lim\{\frac{1}{2\pi\hbar}|(\upsilon_{qp},\upsilon_{q'p',t})|^2-
\frac1{N'}\sum_{k=0}^{N'-1}\varphi_D[q'-q-\frac{2kl}{N'}-a+\frac
pm(t-cT_{rev})]\delta(p'-p)\}=0.
\end{equation}
The limit is performed as follows: $\hbar\to0$, $\alpha\to0$,
$\frac\hbar\alpha\to0$, $t=t(\hbar)$, $\frac{\hbar}\alpha(t-c
T_{rev})\to2mD$, $\hbar(t-\frac cT_{rev})\to0$, where $c\in\mathbb R$,
$D\in[0,\infty]$, and the numbers $N'$ and $a$ depend on $c$. If c is rational (and hence, can be written as a reduced fraction $c=\frac MN$), then $N'=N$ for odd $N$ and $N'=\frac N2$ for even $N$. Further, $a=\frac{2l}N$ for $N\equiv 2\pmod 4$ and $a=0$ otherwise. If $c$ is irrational, then $N' = 1$, $a = 0$. The parameter $\alpha$ occurs in the definition of $\upsilon_{qp}$ (see formulae (\ref{1}) and (\ref{3})), $T_{rev}=\frac{4ml^2}{\pi\hbar}$ (see (\ref{14})). The convergence in (\ref{19}) is uniform with respect to $(q, p)\in\Omega$.
\end{theorem}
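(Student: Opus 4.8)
The plan is to reduce everything to the explicit spectral series (\ref{6}) and to read the fractional-revival pattern directly off the Fourier coefficients (in the variable $q'$) of the squared overlap, rather than splitting the evolved packet into individual copies -- which would force one to control their mutual overlaps. First I would compute the overlap from (\ref{6}): the $e_k$ are orthonormal and $U^c_t$ multiplies the $k$-th coefficient by $\exp\{-\frac{i\hbar t}{2m}(\frac\pi l k)^2\}$, so completing the square in the product of the two Gaussian envelopes gives, using $\frac{\hbar}{2m}(\tfrac\pi l)^2=2\pi/T_{rev}$,
\begin{equation*}
(\upsilon_{qp},\upsilon_{q'p',t})=\frac{\sqrt{2\pi}\,\alpha}{l}\,
e^{-\frac{\alpha^2(p-p')^2}{2\hbar^2}}
\sum_{k}\exp\Bigl\{-2\alpha^2\bigl(\tfrac\pi l k-\tfrac{p+p'}{2\hbar}\bigr)^2\Bigr\}\,
e^{i\frac\pi l k(q-q')}\,e^{-2\pi i\frac{t}{T_{rev}}k^2}.
\end{equation*}
The factor $e^{-\alpha^2(p-p')^2/\hbar^2}$ that appears after squaring is, once the correct normalizing constant (coming from the $q'$-profile below) is supplied, a nascent delta function converging to $\delta(p'-p)$ as $\hbar/\alpha\to0$; since the remaining sum depends on $p'$ only through $(p+p')/2\hbar$ and is uniformly bounded, I may freeze $p'=p$ there with vanishing error. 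This produces the factor $\delta(p'-p)$ and reduces the problem to the squared modulus $|S|^2$ of the $k$-sum $S$ evaluated at $p'=p$.

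Then I would expand $|S|^2$ in a Fourier series in $q'$ by putting $d=k-k'$: with $k^2-k'^2=d(2k'+d)$, the coefficient $C_d$ of $e^{-i\frac\pi l d q'}$ is a Gaussian-weighted sum over $k'$ carrying the linear phase $e^{-4\pi i(t/T_{rev})dk'}$. Because the envelope width in $k'$ is $\sim l/(\pi\alpha)\to\infty$, $C_d$ is negligible unless this phase is resonant, i.e. $2(t/T_{rev})d\in\mathbb Z$ in the limit; writing $t=cT_{rev}+\tau$ with $\tau/T_{rev}\to0$ this is $2cd\in\mathbb Z$. For $c=M/N$ in lowest terms this selects $N\mid d$ when $N$ is odd and $\frac N2\mid d$ when $N$ is even, so the surviving frequencies $\frac\pi l d$ make $|S|^2$ periodic in $q'$ with period $2l/N'$ -- precisely the origin of the $N'$ copies. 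Evaluating the residual quadratic phase $e^{-2\pi i cd^2}$ on the surviving sublattice gives $1$ when $N\equiv0\pmod4$ and the alternating sign $(-1)^{d'}$ (for $d=\frac N2 d'$) when $N\equiv2\pmod4$; the alternation shifts the whole train by a half-period, which is exactly the shift $a=2l/N$. For irrational $c$ only $d=0$ survives, so $N'=1$ and $a=0$.

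It remains to pin down the shape of each hump and its centre. On the surviving sublattice the Gaussian integration over the broad envelope (Poisson summation) yields $|C_d|\sim\exp\{-\frac{\pi^2 D^2}{4l^2}d^2\}$ once the scaling $\frac\hbar\alpha(t-cT_{rev})\to2mD$ is inserted, while the linear-in-$d$ phase accumulated along the way recentres the train at $q'=q-\frac pm(t-cT_{rev})$ modulo the period, i.e. reproduces the classical drift. Resumming the Fourier series (by Poisson summation again, now in $q'$) converts the Gaussian decay of $C_d$ into a periodic train of Gaussians of width $D$, which is precisely $\frac1{N'}\sum_{k=0}^{N'-1}\varphi_D[q'-q-\frac{2kl}{N'}-a+\frac pm(t-cT_{rev})]$; the endpoints $D=0$ and $D=\infty$ recover the sharp fractional revival and the uniform density $\varphi_\infty=1/2l$.

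Finally I would verify uniformity on $\Omega$ -- all bounds depend on $(q,p)$ only through bounded, $2l$-periodic quantities -- and use the rapid decay in $p$ of the test functions to justify interchanging the limit with the summation and the $p'$-integration (the total masses being controlled, via the resolution of unity (\ref{4}), by $\|\upsilon_{qp}\|^2\to1$). I expect the main obstacle to be the asymptotics of $C_d$: one must establish, uniformly in $(q,p)$ and simultaneously in the three coupled limits $\hbar\to0$, $\alpha\to0$, $\hbar/\alpha\to0$, both the arithmetic selection of the surviving $d$ (with the $\bmod 4$ parity that fixes $a$) and the Gaussian-integral asymptotics that produce the width $D$ and the drift, while proving that the non-resonant $d$ contribute $o(1)$ in $\mathscr S'(K)$.
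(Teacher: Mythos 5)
Your spectral route is a genuine dual of the paper's position--space computation, and most of it is sound: the overlap formula you derive from (\ref{6}) is correct, your resonance condition $2cd\in\mathbb Z$ is exactly the paper's selection rule (which appears in formula (\ref{24}) as the constraint $r=-2cj$ linking the image index $r$ to the Fourier index $j$), and your evaluation of the Gauss phase $e^{-2\pi icd^2}$ on the surviving sublattice yields the correct $N'$ and shift $a$. The genuine gap is the step ``I may freeze $p'=p$ there with vanishing error.'' That error does \emph{not} vanish. The spatial centre of the transition density at time $t$ is $q-\frac{(p+p')t}{2m}$: the drift involves the \emph{initial} momentum $p'$, not $p$. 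At the collapse scale $t\sim T_{coll}$, a variation of $p'$ across the width $\sim\hbar/\alpha$ of the nascent delta $e^{-\alpha^2(p-p')^2/\hbar^2}$ displaces that centre by $\frac{\hbar t}{2m\alpha}\to D$, i.e.\ by the same order as the width of the profile you are computing. Equivalently, in your coefficient $C_d$ the envelope centre $k_0=\frac{(p+p')l}{2\pi\hbar}$ enters the phase $e^{-4\pi i\epsilon dk_0}$ with $\epsilon=(t-cT_{rev})/T_{rev}$, and this phase oscillates by an amount of order one as $p'$ ranges over the nascent delta; uniform boundedness of the sum is irrelevant to whether such an oscillation can be frozen.

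The inconsistency is already visible inside your write-up: you claim $|C_d|\sim\exp\{-\frac{\pi^2D^2}{4l^2}d^2\}$ and that Poisson resummation gives the $\varphi_D$-train, but the Fourier coefficients of $\frac1{N'}\sum_k\varphi_D[\,\cdot\,]$ decay as $\exp\{-\frac{\pi^2D^2}{2l^2}d^2\}$; your coefficients are those of a $\varphi_{D/\sqrt2}$-train, so for $D\in(0,\infty)$ (the collapse scale, one of the central regimes of the theorem) your argument proves a limit inconsistent with (\ref{19}). The missing half of the variance is supplied precisely by the $p'$-integration that you suppressed. This is how the paper's proof gets it: computing the Fourier transform jointly in $(q',p')$, the coefficient in (\ref{24}) contains \emph{two} Gaussian factors,
\begin{equation*}
\exp\Bigl\{-\frac{\alpha^2(4+\gamma^2)}{4}\Bigl(\frac{\pi j}{l}\Bigr)^{2}\Bigr\}\cdot
\exp\Bigl\{-\frac{1}{4\alpha^2}\Bigl[\frac{\pi j\hbar t}{2lm}-\nu\hbar\Bigr]^{2}\Bigr\}
\;\longrightarrow\;
e^{-\frac14\bigl(\frac{\pi jD}{l}\bigr)^2}\cdot e^{-\frac14\bigl(\frac{\pi jD}{l}\bigr)^2}
=e^{-\frac12\bigl(\frac{\pi jD}{l}\bigr)^2},
\end{equation*}
the first coming from the $q'$-integration (spreading of the envelope, the only piece you keep) and the second from the $p'$-integration (momentum--drift coupling). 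To repair your proof, do not decouple $p'$: carry the $p'$-dependence of the envelope centre through the Poisson/stationary-phase analysis and integrate the resulting drift phase $e^{-i\frac{\pi}{l}d\,\frac{(p'-p)(t-cT_{rev})}{2m}}$ against the nascent delta and $e^{i\nu p'}$; this restores the second factor and yields $\varphi_D$. For $D=0$ and $D=\infty$ the freezing happens to be harmless (delta and uniform limits are insensitive to the factor), but the theorem's statement at scale $T_{coll}$ cannot be obtained by your argument as written.
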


Let us give some comments on Theorem~\ref{PropDynAsymp}. The quantity $\frac1{2\pi\hbar}|(\upsilon_{qp},\upsilon_{q'p',t})|^2$ is the probability density for a quantum particle to be in the state $\upsilon_{qp}$ at time $t$ under the condition that it was in the state $\upsilon_{q'p'}$ at time 0 (up to the norms $\|\upsilon_{qp}\|^2$ and $\|\upsilon_{q'p'}\|^2$, which tend to 1 in our limit; see Proposition~\ref{LemNormCirc} below).

The limit $\hbar \to 0$ corresponds to the semiclassical approximation. The limits $\alpha \to 0$, $\frac\hbar\alpha\to 0$ correspond to convergence of the mean square deviations of the position and momentum of the quantum wave packet $\upsilon_{qp}$ to zero. We actually have $\Delta q \sim \alpha$
and $\Delta p \sim\frac\hbar{2\alpha}$, where $\Delta q$ and $\Delta p$ are the mean square deviations of the position and momentum (respectively) for $\upsilon_{qp}$ for all $(q,p) \in\Omega$ (see \cite{9}). Here the notation $f \sim g$ means that $\lim \frac fg = 1$. Thus, in the semiclassical limit under consideration, a quantum particle in the state $\upsilon_{qp}$ has a well-defined position (equal to $q$) and
momentum (equal to $p$) just as classical particles do. Therefore, we may say that, in the semiclassical limit, $\frac1{2\pi\hbar}|(\upsilon_{qp},\upsilon_{q'p',t})|^2$ is the probability density for a quantum particle on a circle to be at the phase point $(q, p)$ at time $t$ under the condition that it was at the phase point $(q',p')$ at time $0$.

In Theorem~\ref{PropDynAsymp} we consider various rates of convergence of $t$ to infinity with respect to $\hbar$ and $\alpha$, that is, the various time scales 1)--3) listed in the end of Subsection~\ref{s23}.

We see from (\ref{15}) that the case when $c=0$ and $D=0$ ($\frac{\hbar t}\alpha \to0$) corresponds to the classical time scale $T_{cl}$: time is either fixed or increases slower than the decrease of the collapse velocity (proportional to $\frac\hbar\alpha$) of the packet. Then formula (\ref{19}) takes the form
\begin{equation*}
\lim[\frac{1}{2\pi\hbar}|(\upsilon_{qp},\upsilon_{q'p',t})|^2-
\delta(q'-q+\frac{p}mt,p'-p)]=0.
\end{equation*}
Here $\delta(q' - q + \frac{pt}m, p' - p)$ is the probability density for a classical particle on a circle to be at the phase point $(q, p)$ at time $t$ under the condition that it was at the phase point $(q',p')$ at time 0. Thus, in the semiclassical limit at time scale $T_{cl}$, we have classical dynamics: the quantum probability density of transition to the phase point $(q,p)$ for a particle that was at the phase point $(q',p')$ at time 0 is equal to the corresponding classical probability density.

The case when $c = 0$ and $D \in (0,\infty)$ ($\frac{\hbar t}\alpha \to 2mD \in (0,\infty)$) corresponds to the time scale $T_{coll}$. Formula (\ref{19}) takes the form
\begin{equation*}
\lim[\frac{1}{2\pi\hbar} |(\upsilon_{qp},\upsilon_{q'p',t})|^2-
\varphi_D(q'-q+\frac{p}mt)\,\delta(p'-p)]=0.
\end{equation*}
We see that the quantum probability density of transition from one point to another
is already different from the classical one: there is a spatial spread of probability
distribution. In particular, the case $t = T_{coll}$ corresponds to $D = \frac l{\sqrt 3}$ and, since
$D$ is related to the mean square deviation of the ``spreading function'' $\varphi_D$, this
deviation becomes approximately equal to the mean square deviation of the uniform
distribution on $[-l, l]$, which agrees with the numerical results in \cite{5,6}. However,
the semiclassical limit fails to provide the exact uniform distribution, in contrast to the following case.

The case when $c = 0$ and $D = \infty$ ($\frac{\hbar t}\alpha \to\infty$) corresponds to an intermediate
time scale between $T_{coll}$ and $T_{rev}$. Formula (\ref{19}) takes the form
\begin{equation}\label{20} 
\lim\frac{1}{2\pi\hbar}|(\upsilon_{qp},\upsilon_{q'p',t})|^2=
\frac1{2l}\delta(p'-p),
\end{equation}
so that Theorem~\ref{PropDynAsymp} corresponds to a complete flattening of the spatial probability density in this case. Therefore, we also associate this case with time scale $T_{coll}$ (corresponding to collapse of the localized wave packet). Note that here we get a mathematical justification of the (asymptotic) flattening of the spatial probability density for a quantum particle in a finite volume, which was previously known only from numerical calculations \cite{5,6}. This result may be regarded as a quantum analogue of the Kozlov's  theorems on diffusion for classical systems \cite{Kozlov3,11,12}.

The case $c \neq 0$ corresponds to the time scale $T_{rev}$. If $c$ is irrational, then (just as in the previous case) formula (\ref{19}) reduces to (\ref{20}), that is, one observes a complete flattening of the spatial probability density. The case of rational $c$ corresponds to a revival  (fractional or full) of the wave packet. We discuss this case in more detail.
First suppose that $t - \frac MN T_{rev} \to 0$, whence $D = 0$ (in the simplest case $t = \frac MN T_{rev}$). Then formula (\ref{19}) takes the form
\begin{equation*}
\lim\frac{1}{2\pi\hbar}|(\upsilon_{qp},\upsilon_{q'p',t})|^2=
\frac1{N'}\sum_{k=0}^{N'-1}\delta(q'-q-\frac{2kl}{N'}-a,p'-p),
\end{equation*}
where $N'$ and a are defined in Theorem~\ref{PropDynAsymp}. $N'=1$ (that is, $c$ is integer or half-integer) corresponds to a full revival of the packet, $N'>1$ corresponds to a fractional revival of the packet. This agrees with the results in \cite{2,3,4}. In the case $D \in (0,\infty)$ we obtain a sum of $N'$ ``spread'' wave packets (see (\ref{19}), the formula is not simplified in this case): the revived wave packets begin to spread. In the case $D = \infty$ we again get a complete flattening of the spatial distribution density (\ref{20}).

Since every irrational number can be approximated by rationals within any accuracy, we can (loosely) say that the case of irrational $c$ in Theorem~\ref{PropDynAsymp} is a limiting case of rational $c$ as $N \to\infty$: if we approximate an irrational number by a sequence of rationals, then their denominators increase, the distance between neighbouring terms in the sum of delta-functions in (\ref{19}) tends to zero, and the sum of the delta-functions tends to the uniform distribution (in the weak sense). In other words, the cases of an irrational $c$ and a very close rational $c'$ are almost indistinguishable.

Note that the distribution of the momentum is preserved in all these limiting cases.

Thus, we have traced the whole evolution of a quantum wave packet on a circle. A well-localized initial wave packet eventually collapses until there is a complete flattening of the spatial density. At certain moments we see that copies of the initial packet simultaneously arise at several points of the circle and then again eventually collapse until there is a complete flattening of the density. Since there are ``more'' irrationals than rationals, we can say that the particle most often stays in states whose spatial distribution is close to uniform.

Thus, Theorem~\ref{PropDynAsymp} completely describes the free quantum dynamics of a particle on a circle at all time scales in semiclassical limit. All stages are parametrized by the two real parameters $c$ and $D$.

Here is a simplified version of Theorem~\ref{PropDynAsymp}, which deals only with principal time scales (the classical motion, complete flattening, and exact revivals) without intermediate ones.

\begin{corollary}We have the following limit formulae in $\mathscr S'(K)$:

1) \begin{equation*}
\lim[\frac{1}{2\pi\hbar}|(\upsilon_{qp},\upsilon_{q'p',t})|^2-
\delta(q'-q+\frac{p}mt,p'-p)]=0
\end{equation*}
as $\hbar,\alpha,\frac\hbar\alpha\to0$, $t=const$;

2) $$\lim\frac{1}{2\pi\hbar}|(\upsilon_{qp},\upsilon_{q'p',t})|^2=
\frac1{2l}\delta(p'-p)$$
as $\hbar,\alpha,\frac\hbar\alpha\to0$, $t\to\infty$, $\frac{\hbar t}{\alpha}\to\infty$,\\
as well as $\hbar,\alpha,\frac\hbar\alpha\to0$, $t=cT_{rev}\to\infty$, where $c$ is irrational and $T_{rev}=\frac{16ml^2}{\pi\hbar}$;

3)
\begin{equation*}
\lim[\frac{1}{2\pi\hbar}|(\upsilon_{qp},\upsilon_{q'p',t})|^2
-\frac1{N'}\sum_{k=0}^{N'-1}\delta(q'-q-\frac{2kl}{N'}-a,p'-p)]=0
\end{equation*}
as $\hbar,\alpha,\frac\hbar\alpha\to0$, $t=cT_{rev}\to\infty$, where $c=\frac MN$ is rational and the numbers $N'$ and $a$ depend on $N$.

These limits are uniform with respect to $(q,p)\in \Omega$.
\end{corollary}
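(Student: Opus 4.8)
The plan is to obtain the Corollary directly from Theorem~\ref{PropDynAsymp} by specializing the two free parameters $c$ and $D$ to the extreme values $D\in\{0,\infty\}$ (discarding the intermediate regimes $D\in(0,\infty)$, which is exactly what ``only principal time scales'' means) and then simplifying formula (\ref{19}) using the distributional identities $\varphi_0=\delta$ and $\varphi_\infty=\frac1{2l}$ recorded after (\ref{18}). The only things I would verify are that each hypothesis bundle in the Corollary forces a definite admissible pair $(c,D)$ and the corresponding values of $N'$ and $a$, and that these collapse the right-hand side of (\ref{19}) to the stated form; the convergence stays in $\mathscr S'(K)$ and the uniformity in $(q,p)\in\Omega$ are inherited verbatim, so I would not reprove them.

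For part 1) I would take $c=0$ (the reduced fraction $\frac01$, whence $N'=1$, $a=0$) and note that $t=\mathrm{const}$ together with $\frac\hbar\alpha\to0$ yields $\frac\hbar\alpha(t-cT_{rev})=\frac{\hbar t}\alpha\to0$, i.e.\ $D=0$, while $\hbar(t-cT_{rev})=\hbar t\to0$; substituting $\varphi_0=\delta$ into (\ref{19}) and keeping the single surviving summand gives $\delta(q'-q+\frac pm t,\,p'-p)$. Part 3) is the same specialization with $D=0$ but $c=\frac MN$ rational, realized e.g.\ by $t=cT_{rev}$ so that $t-cT_{rev}=0$: here the theorem supplies the exact $N'$ and $a$, and $\varphi_0=\delta$ turns the average $\frac1{N'}\sum_{k=0}^{N'-1}\varphi_D[\cdots]$ into the stated sum of shifted delta-functions.

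Part 2) is where the two sub-cases meet. For the first regime ($t\to\infty$, $\frac{\hbar t}\alpha\to\infty$, understood within the theorem's window $\hbar t\to0$) I again set $c=0$, so $N'=1$ and $a=0$, but now $\frac\hbar\alpha(t-cT_{rev})\to\infty$ forces $D=\infty$; substituting the constant $\varphi_\infty=\frac1{2l}$ into (\ref{19}) gives exactly $\frac1{2l}\delta(p'-p)$, a complete flattening. For the second regime, $t=cT_{rev}$ with $c$ irrational, I would invoke the part of Theorem~\ref{PropDynAsymp} treating irrational $c$ (where $N'=1$, $a=0$ and formula (\ref{19}) reduces to (\ref{20})), which yields the same flattening.

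The main subtlety, and the one place where the specialization is not purely mechanical, is this irrational sub-case. Reading $t=cT_{rev}$ literally gives $t-cT_{rev}=0$ and hence $D=0$, which would naively point to a single delta rather than a uniform density; the resolution, already contained in the theorem, is that for irrational $c$ the flattening is produced by the equidistribution (Weyl) of the sequence $\{c k^2\}$ modulo $1$ appearing in the eigenfunction phases $\exp\{-\frac{i\hbar t}{2m}(\frac\pi l k)^2\}$, a dephasing mechanism rather than a spreading one, so that (\ref{19}) collapses to (\ref{20}) independently of the formal value of $D$. I expect this to be the only point needing care; once the map from each hypothesis bundle to the quadruple $(c,D,N',a)$ is fixed, the remaining cases are bookkeeping.
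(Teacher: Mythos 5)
Your derivation is correct and is, in substance, the paper's own: the Corollary is stated as a ``simplified version'' of Theorem~\ref{PropDynAsymp} with no separate proof, and your dictionary from each hypothesis bundle to the quadruple $(c,D,N',a)$, followed by the substitutions $\varphi_0=\delta$ and $\varphi_\infty=\frac1{2l}$, is exactly the intended specialization. The two caveats you flag are genuine imprecisions in the paper's statements, and you resolve both in the right way. In the first regime of part 2 the window $\hbar t\to0$ is not optional: without it the claim is false as written, since $t=T_{rev}=\frac{4ml^2}{\pi\hbar}$ satisfies $t\to\infty$ and $\frac{\hbar t}{\alpha}\to\infty$ yet produces a full revival rather than flattening; so the Corollary must indeed be read inside the theorem's hypotheses, as you do. (Relatedly, the value $T_{rev}=\frac{16ml^2}{\pi\hbar}$ appearing in part 2 is a slip --- it is the box value; this is harmless for irrational $c$ because $4c$ is irrational exactly when $c$ is, but part 3's rule for $N'$ and $a$ matches the theorem only with $T_{rev}=\frac{4ml^2}{\pi\hbar}$ from (\ref{14}).) Concerning the irrational sub-case: you are right that reading (\ref{19}) literally with $D=0$ would give a delta and that the true limit is (\ref{20}) independently of the formal value of $D$; however, the mechanism in the paper's proof of Theorem~\ref{PropDynAsymp} is not Weyl equidistribution of $\{ck^2\}$ but an integrality constraint --- in (\ref{24}) only terms with $r=-2cj$ survive the Gaussian suppression as $\alpha\to0$, and for irrational $c$ the equation $r=-2cj$ has no integer solutions with $j\neq0$, so only the constant Fourier mode $j=0$ remains, whatever $D$ is. Mere irrationality, not equidistribution, is what the argument uses; your dephasing heuristic is correct physics and could be developed into an alternative proof, but since you ultimately defer to the theorem's irrational-$c$ clause, this discrepancy does not affect the validity of your derivation.
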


\begin{proof}[Proof of Theorem~\ref{PropDynAsymp}]
Let us express the scalar product of $\upsilon_{qp}$ and
$\upsilon_{q'p',t}$ in $L_2(-l,l)$ in terms of the scalar products of $\eta_{qp}$ and $\eta_{q'+2kl,p',t}$,
$k=0,\pm1,\pm2,\ldots$, in $L_2(\mathbb R)$:
\begin{equation}\label{21} 
\begin{split}
(\upsilon_{qp},\upsilon_{q'p',t})&=\sum_{n,k=-\infty}^{+\infty}
\int_{-l}^l\overline{\eta_{qp}}(x-2kl)\eta_{q'p',t}(x-2(k+n)l)\,dx\\&=
\sum_{k=-\infty}^{+\infty}\int_{-\infty}^{+\infty}
\overline{\eta_{qp}}(x)\eta_{q'+2kl,p',t}(x)\,dx=
\sum_{k=-\infty}^{+\infty}(\eta_{qp},\eta_{q'+2kl,p',t}).
\end{split}\end{equation}

Substituting in (\ref{21}) the expression for the scalar product
\begin{multline}\label{22} 
(\eta_{qp},\eta_{q'p',t})=\sqrt{\frac{2}{2+i\gamma}}
\exp\left\{-\frac{(q'-q+\frac{(p'+p)t}{2m})^2}{4\alpha^2(2+i\gamma)}-
\frac{\alpha^2(p'-p)^2}{2\hbar^2}-\right.\\\left.-
\frac{i(p'+p)(q'-q)}{2\hbar}-\frac{it(p'+p)^2}{8m\hbar}\right\},
\end{multline}
where $\gamma=\frac{\hbar t}{2m\alpha^2}$ as above, we get
\begin{multline*}
(\upsilon_{qp},\upsilon_{q'p',t})= \sqrt{\frac{2}{2+i\gamma}}
\sum_{k=-\infty}^{+\infty} \exp\left\{
-\frac{(q'-q+2kl+\frac{(p'+p)t}{2m})^2}{4\alpha^2(2+i\gamma)}-
\frac{\alpha^2(p'-p)^2}{2\hbar^2}-\right.\\\left.-
\frac{i(p'+p)(q'-q+2kl)}{2\hbar}-\frac{it(p'+p)^2}{8m\hbar} \right\},
\end{multline*}
\begin{multline*}
|(\upsilon_{qp},\upsilon_{q'p',t})|^2= \frac{2}{\sqrt{4+\gamma^2}}
\sum_{k,n=-\infty}^{+\infty} \exp\left\{
-\frac{(q'-q+2kl+\frac{(p'+p)t}{2m})^2}{4\alpha^2(2+i\gamma)}-
\right.\\\left.
-\frac{(q'-q+2nl+\frac{(p'+p)t}{2m})^2}{4\alpha^2(2-i\gamma)}-
\frac{\alpha^2(p'-p)^2}{\hbar^2}- \frac{i(p'+p)(k-n)l}{\hbar}
\right\}.
\end{multline*}
Put $k=n+r$ and replace the sum over $k$ and $n$ by a sum over $r$ and $n$:
\begin{multline*}
|(\upsilon_{qp},\upsilon_{q'p',t})|^2= \frac{2}{\sqrt{4+\gamma^2}}
\sum_{r,n=-\infty}^{+\infty} \exp\left\{
-\frac{(q'-q+2nl+rl+\frac{(p'+p)t}{2m})^2}{\alpha^2(4+\gamma^2)}
-\right.\\\left. -\frac{r^2l^2}{\alpha^2(4+\gamma^2)}
+\frac{irl\gamma(q'-q+2nl+rl+\frac{(p'+p)t}{2m})}{\alpha^2(4+\gamma^2)}
-\frac{\alpha^2(p'-p)^2}{\hbar^2}- \frac{i(p'+p)rl}{\hbar} \right\}.
\end{multline*}

Let $\sigma(q',p')$ be an arbitrary test function from $\mathscr
S(K)$. We expand it in a Fourier series with respect to $q'$ and represent it by a Fourier integral with respect to $p'$:
\begin{equation}\label{23} 
\sigma(q',p')=\frac1{2\sqrt{\pi
l}}\sum_{j=-\infty}^{+\infty}\int_{-\infty}^{+\infty}
a_j(\nu)\exp\left\{i\frac{\pi}{l}jq'+i\nu p'\right\} d\nu.\end{equation} 
We calculate the following integral:
\begin{multline}\label{24} 
\iint_\Omega |(\upsilon_{qp},\upsilon_{q'p',t})|^2
\exp\left\{i\frac{\pi}{l}jq'+i\nu p'\right\}\,dq'dp'\\= \iint_\Omega
|(\upsilon_{qp},\upsilon_{q'-\frac{(p'+p)t}{2m},p',t})|^2
\exp\left\{i\frac{\pi}{l}j\left(q'-\frac{pt}{2m}\right)+ip'\left(\nu-\frac{\pi jt}{2lm}\right)\right\}\,dq'dp'\\=
\frac{2}{\sqrt{4+\gamma^2}} \sum_{r=-\infty}^{+\infty}
\iint_{\mathbb R^2} \exp\left\{
-\frac{(q'-q+rl)^2}{\alpha^2(4+\gamma^2)}
-\frac{r^2l^2}{\alpha^2(4+\gamma^2)}
+\frac{irl\gamma(q'-q+rl)}{\alpha^2(4+\gamma^2)} -\right.\\\left.
-\frac{\alpha^2(p'-p)^2}{\hbar^2} -\frac{i(p'+p)rl}{\hbar}
+i\frac\pi lj\left(q'-\frac{pt}{2m}\right) +ip'\left(\nu-\frac{\pi jt}{2lm}\right)
\right\}\,dq'dp'\\= 2\pi\hbar \sum_{r=-\infty}^{+\infty}
\exp\left\lbrace -\frac{\alpha^2(4+\gamma^2)}4
\left[\frac{rl\gamma}{\alpha^2(4+\gamma^2)}+\frac{\pi j}l\right]^2
-\frac1{4\alpha^2}\left[rl+\frac{\pi j\hbar
t}{2lm}-\nu\hbar\right]^2 -\right.\\\left.
-\frac{2iprl}\hbar-\frac{r^2l^2}{\alpha^2(4+\gamma^2)} +i\frac\pi
lj\left(q-rl-\frac{pt}m\right)+ip\nu \right\rbrace.
\end{multline}

We now realize all the passages to the limit. First consider the case when
$c=0$, $D\in(0,\infty)$. Only the term with $r=0$ remains non-zero in (\ref{24}). We have
\begin{multline*}
\lim\left[\frac1{2\pi\hbar}\iint_\Omega |(\upsilon_{qp},\upsilon_{q'p',t})|^2
\exp\left\{i\frac{\pi}{l}jq'+i\nu p'\right\}\,dq'dp'\right.\\\left.-\exp\left\{-\frac12\left(\frac{\pi j}{l}D\right)^2
+i\frac\pi lj\left(q-\frac pmt\right)+ip\nu\right\}\right]=0.
\end{multline*}

This means that
\begin{multline*}
\lim\left[\frac1{2\pi\hbar}
\iint_\Omega|(\upsilon_{qp},\upsilon_{q'p',t})|^2\sigma(q',p')\,dq'dp'\right.\\\left.-
\frac{1}{2\sqrt{\pi
l}}\sum_{j=-\infty}^{+\infty}\int_{-\infty}^{+\infty}
a_j(\nu)\exp\left\{-\frac12\left(\frac{\pi j}{l}D\right)^2+i\frac\pi
lj\left(q-\frac{pt}m\right)+ip\nu\right\}d\nu\right]=0.\end{multline*}
Clearly, the integrals and the series converge uniformly in $(q,p)\in\Omega$. Again expressing $a_j(\nu)$ in terms of $\sigma(q',p')$
by the formula
$$a_j(\nu)=\frac1{2\sqrt{\pi l}}\iint_\Omega\sigma(q',p')
\exp\left\{-i\frac\pi ljq'-i\nu p'\right\}dq'dp'$$ and using the modular property of the theta-function, we obtain (\ref{19}). A similar and simpler argument proves (\ref{19}) for $D=0$ and $D=\infty$
(here $c=0$ as above) as well as for irrational $c$.

Consider the case of an arbitrary rational $c=\frac MN$. We first assume for simplicity that $t-\frac MNT_{rev}\to0$ (whence $D=0$). Then the terms of
(\ref{24}) with non-zero limits are only those with
$r=-2cj$ (this can be seen from the two first terms in the exponent). Accordingly, the terms of the sum (\ref{23}) have zero limits unless $j$ is such that $2cj$ is an integer.
Namely, the terms with non-zero limits are those with $j=N'J$,
$J=0,\pm1,\pm2,\ldots$, where $N'=N$ for odd $N$ and $N'=\frac
N2$ for even $N$. We look at the term $-i\pi jr=2\pi
icj^2$ in the exponent of (\ref{24}) in more detail. If $N$ is odd, then
$\exp(2\pi icj^2)=\exp(2\pi iMNJ^2)=1$ for all $J$.
If $N$ is even, then $\exp(2\pi icj^2)=\exp(i\pi
MN'J^2)=(-1)^J=\exp(i\pi N'J)$ (if $N$ is even, $M$ must be odd since $M$ and $N$ are coprime). Thus, for odd $N$ we get
\begin{equation}\label{25} 
\lim\frac1{2\pi\hbar}
\iint_\Omega|(\upsilon_{qp},\upsilon_{q'p',t})|^2\sigma(q',p')\,dq'dp'=
\frac1{2\sqrt{\pi
l}}\sum_{J=-\infty}^{+\infty}\int_{-\infty}^{+\infty}
a_{NJ}(\nu)\exp\left\{i\frac\pi l NJq+ip\nu\right\}.\end{equation} This is equivalent to saying that
$$\lim\frac1{2\pi\hbar}
\iint_\Omega|(\upsilon_{qp},\upsilon_{q'p',t})|^2\sigma(q',p')\,dq'dp'=
\frac1{N}\sum_{k=0}^{N-1}\sigma(q+\frac{2kl}N,p).$$ This proves formula
(\ref{19}) for odd $N$. If $N$ is even, we get
\begin{multline}\label{26} 
\lim\frac1{2\pi\hbar}
\iint_\Omega|(\upsilon_{qp},\upsilon_{q'p',t})|^2\sigma(q',p')\,dq'dp'\\=
\frac1{2\sqrt{\pi
l}}\sum_{J=-\infty}^{+\infty}\int_{-\infty}^{+\infty}
a_{N'J}(\nu)\exp\left\{i\frac\pi l N'J(q+l)+ip\nu\right\}.\end{multline} This is equivalent to the equation
$$\lim\frac1{2\pi\hbar}
\iint_\Omega|(\upsilon_{qp},\upsilon_{q'p',t})|^2\sigma(q',p')\,dq'dp'=
\frac1{N'}\sum_{k=0}^{N'-1}\sigma(q+l+\frac{2kl}{N'},p).$$ If $N'$
is even (so that $N$ is divisible by 4), then $l+\frac{2kl}{N'}=0$ for some
 $k$. If $N'$ is odd (so that $N\equiv2\pmod 4$), then
$l+\frac{2kl}{N'}=\frac{2l}N$ for some $k$. Therefore, we can write
$$\lim\frac1{2\pi\hbar}
\iint_\Omega|(\upsilon_{qp},\upsilon_{q'p',t})|^2\sigma(q',p')\,dq'dp'=
\frac1{N'}\sum_{k=0}^{N'-1}\sigma(q+\frac{2kl}{N'}+a,p),$$ where $a=0$
if $N$ is divisible by 4 and $a=\frac{2l}N$ if $N\equiv 2\pmod 4$. This proves formula (\ref{19}) for even $N$.

The same argument works when the condition $t-\frac MNT_{rev}\to0$ does not hold, but still we have
$D=0$. The only difference is that the first argument of $\sigma$ acquires an additional summand
$\frac pm(t-\frac MNT_{rev})$ in the final expression (because the sum
$\frac{2iprl}\hbar+\frac{i\pi jpt}{lm}$ in the exponent of 
(\ref{24}) does not tend to zero in this case).

We similarly treat the case when $D\in(0,\infty)$. Again, the terms of
(\ref{24}) (resp. of the sum (\ref{23})) have zero limits unless
$r=-2cj$ (resp. $j$ is such that $2cj$ is an integer). However, just as in the case when $c=0$, the integrands of the right-hand sides of (\ref{25}) and (\ref{26}) acquire a factor $e^{-\frac12(\frac{\pi j}lD)^2}$, which results in the replacement
of the delta-functions by the functions $\varphi_D$. In the case when $D=\infty$, all terms of the sum over $k$ tend to $\frac1{2l}\delta(p'-p)$, which yields formula (\ref{19}) for the last limiting case.

The theorem is proved.
\end{proof}
We now prove another proposition to be used in what follows.

\begin{proposition}\label{LemNormCirc}The norm of $\upsilon_{qp}$ tends to unity uniformly on $\Omega$ as $\hbar \to 0$, $\alpha \to 0$, $\frac\hbar\alpha \to0$.
\end{proposition}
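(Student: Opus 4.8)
The plan is to reduce the squared norm to an explicit, rapidly convergent series and then isolate its leading term. Setting $t=0$, $q'=q$, $p'=p$ in the decomposition (\ref{21}) gives $\|\upsilon_{qp}\|^2=(\upsilon_{qp},\upsilon_{qp})=\sum_{k=-\infty}^{+\infty}(\eta_{qp},\eta_{q+2kl,p})$, and substituting the Gaussian overlap (\ref{22}) specialized to $t=0$ (so that $\gamma=0$), $q'-q=2kl$, $p'=p$, I expect the cross terms to collapse to
\begin{equation*}
\|\upsilon_{qp}\|^2=\sum_{k=-\infty}^{+\infty}\exp\left\{-\frac{k^2l^2}{2\alpha^2}-\frac{2ipkl}{\hbar}\right\}.
\end{equation*}
(Equivalently, one may read the Fourier coefficients $a_{k,qp}$ off (\ref{6}), apply Parseval's identity, and then invoke the modular property of the theta-function; the two routes yield the same series.)

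With this formula in hand, the key observation is that the $k=0$ term equals $1$ exactly, while every other term carries the factor $\exp\{-k^2l^2/(2\alpha^2)\}$, whose modulus does not depend on $(q,p)$ at all — the phase-space point enters only through the unimodular phase $\exp\{-2ipkl/\hbar\}$. Hence
\begin{equation*}
\bigl|\,\|\upsilon_{qp}\|^2-1\,\bigr|\le\sum_{k\neq0}e^{-k^2l^2/(2\alpha^2)}=2\sum_{k=1}^{\infty}e^{-k^2l^2/(2\alpha^2)},
\end{equation*}
and the right-hand side is a single $(q,p)$-independent quantity. Bounding it crudely by a geometric series, $\sum_{k\ge1}e^{-k^2l^2/(2\alpha^2)}\le\sum_{k\ge1}e^{-kl^2/(2\alpha^2)}=e^{-l^2/(2\alpha^2)}/(1-e^{-l^2/(2\alpha^2)})$, shows that it tends to $0$ as $\alpha\to0$ with $l$ fixed. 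Since this estimate is uniform in $(q,p)\in\Omega$, the norm converges to unity uniformly, which is precisely the claim.

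I do not expect a genuine obstacle here: the whole argument rests on the explicit overlap (\ref{22}) already derived above, and the uniformity is automatic because $q$ and $p$ survive only in a phase of unit modulus. The only points requiring a little care are the specialization of (\ref{22}) at $t=0$ (checking that the coefficient of $k^2$ comes out as $l^2/(2\alpha^2)$) and the justification of the term-by-term manipulation of the series, which is immediate from its absolute and in fact Gaussian (super-exponential) convergence. It is worth noting that, although the statement passes to the limit along $\hbar\to0$, $\alpha\to0$, $\hbar/\alpha\to0$ simultaneously, the estimate above uses only $\alpha\to0$; the remaining conditions are not needed for this particular proposition.
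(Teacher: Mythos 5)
Your proof is correct and takes essentially the same route as the paper: the paper's own proof consists precisely of specializing (\ref{21}) and (\ref{22}) to obtain the series $\|\upsilon_{qp}\|^2=\sum_{k=-\infty}^{+\infty}\exp\left\{-\frac{l^2k^2}{2\alpha^2}+\frac{2iplk}{\hbar}\right\}$ (its formula (\ref{27}), equal to yours after the substitution $k\mapsto-k$) and observing that it tends to $1$. Your isolation of the $k=0$ term and the $(q,p)$-independent Gaussian tail bound merely fill in the details the paper leaves implicit, including the correct observations that uniformity is automatic because $(q,p)$ enters only through a unimodular phase, and that only $\alpha\to0$ is actually needed.
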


\begin{proof} By formula (\ref{21}) we have
\begin{equation}\label{27} 
\|\upsilon_{qp}\|^2=
\sum_{k=-\infty}^{+\infty}\exp\left\{-\frac{l^2k^2}{2\alpha^2}+
\frac{2iplk}{\hbar}\right\}\to1.
\end{equation}
\end{proof}

Here are some remarks on Theorem~\ref{PropDynAsymp}.

\begin{remark}We refer to the limit performed as to the \textit{special} semiclassical limit. The usual well-known semiclassical limit (see, for example, \cite{27}) is the limit $\hbar\to0$ with constant $t$. In our case, this corresponds to the first time scale $T_{cl}$ (classical motion). We consider the simultaneous limits $\hbar\to0$ and $t\to\infty$, where $t$ and $\hbar$ are related to each other and to $\alpha\to0$ in certain different ways. This allowed us to investigate analytically not only the first (classical) time scale, but the other two time scales and the intermediate time scales as well.
\end{remark}

\begin{remark}\label{r2}We have proved the theorem on the semiclassical limit of dynamics on a circle only for the quantum states of special form. But the result obtained can be extended to more general states since, by formula (\ref{4}), every function $\psi \in L_2(-l,l)$ can be represented by an integral over coherent states:
$$\psi=\frac1{2\pi\hbar}\iint_\Omega f_\psi(q,p)\upsilon_{qp}\,dqdp,$$
where $f_\psi(q,p)=(\upsilon_{qp},\psi)$.


Nevertheless, it is worthwhile to give a formulation and a proof of Theorem~\ref{PropDynAsymp} which would not distinguish any particular form of wave packets at all.\footnote{The authors are grateful to J.\,R.~Klauder for this important remark.} This is a subject for further work.
\end{remark}

\begin{remark}\label{r3}Consider the dynamics of the mean position and mean momentum of the
state $\upsilon_{qp,t}$. The momentum is easily seen to be preserved under our semiclassical
limit $\hbar, \alpha, \frac\hbar\alpha \to 0$ at all time scales, that is, for any behaviour of the time variable $t$:
$$\overline
p=-i\hbar\int_{-l}^{l}\overline{\upsilon_{qp,t}(x)}\upsilon'_{qp,t}(x)\,dx\to
p.$$

We now look at the dynamics of the mean position. Since the position operator is
not well-defined for a particle on a circle, one usually considers the mean complex exponent of the position, which is uniquely determined:
$$\left(\overline{e^{i\frac\pi lq}}\right)_t=
\int_{-l}^le^{i\frac\pi lx}|\upsilon_{qp,t}(x)|^2\,dx.$$
Consider the time scale $T_{cl}$, that is, $\frac{\hbar t}\alpha\to0$. We assume for simplicity that $t=const$. Since
$|\upsilon_{qp,t}(x)|^2$ tends to the periodic delta-function
$\sum_n\delta(x-q-\frac pmt-2nl)$ at this time scale, we have
$$\left(\overline{e^{i\frac\pi lq}}\right)_t\to e^{i\frac\pi l(q+\frac pmt)}.$$
Thus, at time scale $T_{cl}$, the centre of the wave packet moves along the classical trajectory. At other times scales, the notions of ``centre of the packet'' and ``mean position'' have no physical meaning since the packet either collapses and has no definite centre or has several centres (fractional revivals). In the last case the usual expectation of the (complex exponent of the) position may lie between these centres. This also lacks physical meaning because the particle cannot be observed near this expected value. Only a full revival makes the notion of position meaningful again.
\end{remark}

\begin{remark}\label{r4}We use the limit $\hbar,\alpha,\frac\hbar\alpha\to 0$ as a mathematical tool. In particular, it allows us to formulate the result about the flattening of the spatial density. As already mentioned, the fact that the spatial density of a particle in a finite volume is most of the time close to uniform, has not yet been proved in the literature nor even stated in a mathematically rigorous manner: the uniform distribution is never achieved exactly for a localized initial packet. What then is the meaning of the ``closeness'' of our distribution to the uniform one? We have proved that the spatial density is exactly uniform in our limiting case at certain time scales (scale $T_{coll}$ and most of scale $T_{rev}$). This is a mathematical expression of the fact that the spatial density is close to uniform.

These limits have no direct meaning from the physical point of view: Planck constant $\hbar$ is a physical constant and cannot tend to zero (see \cite{28} for a discussion
of the semiclassical limit with the constant $\hbar$ for the baker's map). The parameter $\alpha$ is also
fixed for a fixed coherent state. From a physical point of view, it would be more
correct to consider limiting cases for dimensionless quantities \cite{29}. Let us restate
our results in terms of relations between the time scales $T_{cl}$, $T_{coll}$, $T_{rev}$ and time
parameter $t$. Using formula (\ref{15}), we can rewrite the limits $\hbar\to 0$, $\alpha\to 0$, $\frac\hbar\alpha\to 0$ in the form
$$\frac{T_{coll}}{T_{cl}}\to\infty,\quad\frac{T_{rev}}{T_{coll}}\to\infty,\quad \frac{T_{rev}}{T_{cl}}\to\infty,$$ or
$$T_{rev}\gg T_{coll}\gg T_{cl},$$
that is, the time scales (the classical period of motion, characteristic time of collapse,
and full revival time) are distant from each other on the time line. Moreover,
the limits $\frac{\hbar t}{\alpha}(t-cT_{rev})\to2mD$ and
$\hbar t(t-cT_{rev})\to0$ occurring in Theorem~\ref{PropDynAsymp} can be rewritten in the form
$$\frac{t-cT_{rev}}{T_{coll}}\to2mD,\quad\frac t{T_{rev}}\to c.$$
Thus, we see that the limits under consideration have a clear physical meaning.
\end{remark}

\section{Coherent states in the infinite well}\label{s3}
\subsection{A map of the dynamics of a particle moving on a circle to the dynamics of a particle moving in a box (classical mechanics)}\label{s31} Consider a particle moving freely in a one-dimensional infinite potential well $[-l, l]$ with rigid (elastic) walls. In classical mechanics, this case can be reduced to dynamics on a circle. To do this, one uses a two-sheeted covering of an interval by a circle (see, for example, \cite{Kozlov3,11}). If we are given an interval $[-l,l]$ with phase variables $(q, p)$ and a circle $[-2l, 2l]$ (the points $-2l$ and $2l$ being identified) with phase variables $(q', p')$, then the two-sheeted covering of the interval by the circle is determined by the formula
\begin{equation}\label{28} 
q'=\begin{cases}q-l,&p\geq0,\\
l-q,&p<0,\end{cases}\qquad p'=|p|.
\end{equation}
Here we first shift the interval $[-l,l]$ to $[-2l,0]$, then reflect it in the point 0 and obtain the interval $[-2l, 2l]$, and then glue the points $-2l$ and $2l$. Thus, instead of oscillations on the interval, we get a rotation in one direction on a circle. We have reduced the dynamics of a particle in a box to that of a particle on a circle. However, we note that the sign of the momentum is not uniquely determined at $q = \pm l$ since it jumps at these points. This is a consequence of the approximate nature of the model.

To define the spaces of test functions and distributions on the strip $\Omega$, we first introduce the space of rapidly decaying functions on $\Omega$:
\begin{equation}\label{29} 
\begin{split}
\mathscr S(\Omega)=\{\sigma:\mathbb R^2\to\mathbb R|\quad&1)\: \sigma[(-1)^n(q+2nl),(-1)^np]=\sigma(q,p),n=0,\pm1,\pm2,\ldots;\\
&2)\: \sigma\in C^\infty(\mathbb R^2);\\
&3)\: \lim_{p\to\pm\infty}p^r\frac{\partial^{s_1+s_2}\sigma}{\partial
q^{s_1}\partial q^{s_2}}=0,\: r,s_1,s_2=0,1,2,\ldots\}.
\end{split}\end{equation}
The topology on this space is given by seminorms (\ref{17}). The space $\mathscr S'(\Omega)$ of distributions is the space of continuous linear functionals on  $\mathscr S(\Omega)$. 

Let $\sigma\in\mathscr S(\Omega)$. We define the following distributions:
\begin{align*}
&(\delta(q-q_0,p-p_0),\sigma)=\sigma(q_0,p_0),\\
&(f(q)\delta(p-p_0),\sigma)=\int_{-\infty}^{+\infty}f(q)\sigma(q,p_0),\\
&(c\delta(p-p_0),\sigma)=c\int_{-l}^l\sigma(q,p_0)\,dq,
\end{align*}
where $(q_0,p_0)\in\mathbb R^2$, $f(q)$ is an integrable function on a line, and $c\in\mathbb R$.

We again consider the function
\begin{equation*}
\varphi_D(q)=\frac1{\sqrt{2\pi
D^2}}e^{-\frac{q^2}{2D^2}},\end{equation*} where $D\in(0,\infty)$, and define the distribution $\varphi_D(q)\delta(p-p_0)$,
$p_0\in\mathbb R$, at $D=0$ and $D=\infty$ using the corresponding limits in the new space of distributions:
$$\varphi_0(q)\delta(p-p_0)=\lim\limits_{D\to0}\varphi_D(q)=\delta(q,p-p_0),$$
$$\varphi_\infty(q)=\lim\limits_{D\to\infty}\varphi_D(q)=\frac1{4l}\delta(p-p_0)+
\frac1{4l}\delta(p+p_0)$$ (the limits are taken in $\mathscr S'(\Omega)$).

Functions in $\mathscr S(\Omega)$ have the following important property:
\begin{equation}\label{30} 
\sigma(\pm l,p)=\sigma(\pm l,-p).
\end{equation}

Let $K_{2l}$ be the cylinder $K$ (defined in Subsection~\ref{s24}) with $l$ replaced by $2l$. In other words, $K_{2l}$ is the set $\Omega_{2l} = [-2l, 2l]\times\mathbb R$ with the points $(2l, p)$ and $(-2l, p)$ identified for all $p \in\mathbb R$. The space $\mathscr S(K_{2l})$ is just the space $\mathscr S(K)$ (defined in (\ref{16})) with $l$ replaced by $2l$. We easily see that $\mathscr S(\Omega)$ is a subset of $\mathscr S(K_{2l})$. Namely, $\mathscr S(\Omega) = T\mathscr S(K_{2l})$, where the map $T$ is defined by the formula

\begin{equation}\label{31} 
\sigma(q,p)=T[\sigma](q,p)=\rho(q-l,p)+\rho(l-q,-p).
\end{equation}
We define
\begin{equation}\label{32} 
\sigma(q',p')=T^{-1}[\sigma](q',p')=\begin{cases}
\frac12\sigma(l+q',p'),&q'\leq0,\\
\frac12\sigma(l-q',-p'),&q'>0.\end{cases}
\end{equation}
Here $\rho\in\mathscr S(K_{2l})$, $\sigma\in\mathscr S(\Omega)$. The map $T^{-1}$ is not an inverse: we have $TT^{-1}\sigma = \sigma$ for all functions $\sigma$ on a circle but $T^{-1}T\rho = \rho$ only for even functions, that is, those with $\rho(q', p') = \rho(-q', -p')$. 

We also introduce the space of test functions $\mathscr D(\Omega)$ by the same formula (\ref{29}) as for $\mathscr S (\Omega)$ but with the third condition of rapid decay with respect to $p$ replaced by the condition of being compactly supported with respect to $p$. We similarly introduce the space of distributions $\mathscr D'(\Omega)$. We also have $\mathscr D(\Omega) = T\mathscr D(K_{2l})$.

\subsection{A map of the dynamics of a particle moving on a circle to the dynamics of a particle moving in a box (quantum mechanics)} 
We note that finding the position (on a circle) of an image under the map (\ref{28}) requires the simultaneous knowledge of the position $q$ and momentum $p$ (or rather, the direction of the momentum) in a box, which are, thus, presupposed to be simultaneously well defined. But this is known not to be the case in quantum theory by the uncertainty relations. Therefore, one cannot use the map (\ref{28}) directly to construct the corresponding map in quantum mechanics.

We define another map
$$\Theta:L_2(-2l,2l)\to L_2(-l,l),\quad\psi(x)\mapsto[\Theta\psi](y)=\frac{\sqrt2}2[\psi(y-l)-\psi(l-y)]$$
and a map
$$\Theta^{-1}:L_2(-l,l)\to L_2(-2l,2l),\quad\varphi(y)\mapsto[\Theta^{-1}\varphi](x)=\begin{cases}
\frac{\sqrt2}2\varphi(x+l),&x\leq0,\\-\frac{\sqrt2}2\varphi(l-x),&x>0.
\end{cases}$$

The map $\Theta$ is an analogue of $T$ and $\Theta^{-1}$ is an analogue of $T^{-1}$ (see (\ref{31}) and (\ref{32})). Clearly, we have $\Theta\Theta^{-1}\varphi = \varphi$ for all functions $\varphi \in L_2(-l,l)$ but $\Theta^{-1}\Theta\psi = \psi$ only for odd functions $\psi \in L_2(-2l,2l)$. Restricting $\Theta$ to the set of all odd functions in $L_2(-2l,2l)$, we get a one-to-one correspondence between odd functions in $L_2(-2l,2l)$ and all functions in $L_2(-l,l)$, which is just the usual odd extension of functions to the interval of twice the original length. However, we define $\Theta$ for all functions from $L_2(-2l,2l)$.

\begin{proposition} 1) The map $\Theta^{-1}$ preserves scalar products:
$$(\Theta^{-1}\varphi,\Theta^{-1}\varkappa)_{2l}=(\varphi,\varkappa)_l$$
for all $\varphi,\varkappa\in L_2(-l,l)$. Here the subscripts $2l$ and $l$ indicate the scalar products in the Hilbert spaces $L_2(-2l, 2l)$ and $L_2(-l, l)$, correspondingly.

2) The map $\Theta$ preserves scalar products:
$$(\Theta\psi,\Theta\chi)_l=(\psi,\chi)_{2l},$$
if at least one of the functions $\psi\in L_2(-2l,2l)$ or $\chi\in
L_2(-2l,2l)$ is odd.
\end{proposition}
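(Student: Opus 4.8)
The plan is to prove part~2 as a consequence of part~1 together with the observation that $\Theta^{-1}\Theta$ is the orthogonal projection onto the odd functions. Part~1 itself is a direct change of variables, so I would dispatch it first.

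For part~1, write out $(\Theta^{-1}\varphi,\Theta^{-1}\varkappa)_{2l}=\int_{-2l}^{2l}\overline{[\Theta^{-1}\varphi](x)}\,[\Theta^{-1}\varkappa](x)\,dx$ and split the interval at $x=0$. On each piece the prefactors $\pm\frac{\sqrt2}2$ multiply to $\frac12$ (the signs agreeing, since $\varphi$ and $\varkappa$ are transformed by the same formula), so the integrand is $\frac12\overline{\varphi(x+l)}\,\varkappa(x+l)$ for $x\le0$ and $\frac12\overline{\varphi(l-x)}\,\varkappa(l-x)$ for $x>0$. The substitution $y=x+l$ in the first integral and $y=l-x$ in the second each produce $\frac12\int_{-l}^l\overline{\varphi(y)}\,\varkappa(y)\,dy=\frac12(\varphi,\varkappa)_l$; adding them gives $(\varphi,\varkappa)_l$, which is part~1.

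For part~2, I would first compute $\Theta^{-1}\Theta\psi$ explicitly. Inserting the formula for $\Theta\psi$ into the definition of $\Theta^{-1}$ and simplifying on $x\le0$ and $x>0$ separately, both cases collapse to $\frac12[\psi(x)-\psi(-x)]$; hence $\Theta^{-1}\Theta=P$, the projection sending $\psi$ to its odd part (this is the precise form of the remark in the text that $\Theta^{-1}\Theta\psi=\psi$ exactly on odd functions). Applying part~1 with $\varphi=\Theta\psi$ and $\varkappa=\Theta\chi$ then yields $(\Theta\psi,\Theta\chi)_l=(\Theta^{-1}\Theta\psi,\Theta^{-1}\Theta\chi)_{2l}=(P\psi,P\chi)_{2l}$.

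It remains to remove the projections under the stated hypothesis, and this is where the only delicate point lies. If $\psi$ is odd, then $P\psi=\psi$, so the right-hand side equals $(\psi,P\chi)_{2l}$; writing $\chi$ as the sum of its odd part $P\chi$ and its even part $(\mathrm{id}-P)\chi$, the even part is orthogonal to the odd function $\psi$ over the symmetric interval $(-2l,2l)$ (the product $\overline{\psi}\cdot(\text{even})$ is odd), so $(\psi,P\chi)_{2l}=(\psi,\chi)_{2l}$. The case where $\chi$ is odd follows by conjugate symmetry of the inner product from the case already treated. The hypothesis of oddness is genuinely needed: because $\Theta$ annihilates even parts, the identity is false in general, and the assumption is used precisely to discard the even component via the orthogonality of the even and odd subspaces of $L_2(-2l,2l)$.
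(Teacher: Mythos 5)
Your proof is correct, and part 2 follows a genuinely different route from the paper's. The paper proves part 2 by direct computation: it expands the product $[\overline\psi(y-l)-\overline\psi(l-y)][\chi(y-l)-\chi(l-y)]$, uses the oddness of $\psi$ to convert the two cross terms into copies of the diagonal terms, and then changes variables in each piece to reassemble $\int_{-2l}^{2l}\overline\psi\chi$. You instead identify $\Theta^{-1}\Theta$ as the orthogonal projection $P$ onto the odd subspace of $L_2(-2l,2l)$ (a sharpening of the paper's remark that $\Theta^{-1}\Theta\psi=\psi$ holds exactly for odd $\psi$), feed $\varphi=\Theta\psi$, $\varkappa=\Theta\chi$ into part 1 to obtain the unconditional identity $(\Theta\psi,\Theta\chi)_l=(P\psi,P\chi)_{2l}$, and then strip the projections using $P\psi=\psi$ together with the orthogonality of the even and odd subspaces. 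The paper's computation is shorter and self-contained; your argument buys more: it reuses part 1 rather than repeating a change-of-variables computation, it produces the stronger statement $(\Theta\psi,\Theta\chi)_l=(P\psi,P\chi)_{2l}$ valid for arbitrary $\psi,\chi$, and it makes transparent both why the oddness hypothesis is needed and why it cannot be dropped ($\Theta$ annihilates even parts, so the identity fails precisely by the inner product of the even components). All the individual steps check out, including the computation $\Theta^{-1}\Theta\psi=\frac12[\psi(x)-\psi(-x)]$ on both half-intervals and the reduction of the ``$\chi$ odd'' case to the ``$\psi$ odd'' case by conjugate symmetry.
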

\begin{proof}
1) We have
\begin{equation*}\begin{split}
(\Theta^{-1}\varphi,\Theta^{-1}\varkappa)_{2l}&=\int_{-2l}^{2l}
\overline{[\Theta^{-1}\varphi]}(x)[\Theta^{-1}\varkappa](x)\,dx\\&=
\frac12\int_{-2l}^0\overline\varphi(x+l)\varkappa(x+l)\,dx+
\frac12\int_0^{2l}\overline\varphi(l-x)\varkappa(l-x)\,dx\\&=\int_{-l}^l
\overline\varphi(y)\varkappa(y)dy=(\varphi,\varkappa)_l;
\end{split}\end{equation*}
2) For definiteness, assume that
$\psi$ is odd. Then
\begin{equation*}\begin{split}
(\Theta\psi,\Theta\chi)_l&=
\int_{-l}^l\overline{\Theta\psi}(y)\Theta\chi(y)dy\\&=
\frac12\int_{-l}^l[\overline\psi(y-l)-\overline\psi(l-y)][\chi(y-l)-\chi(l-y)]dy\\
&= \int_{-l}^l\overline\psi(y-l)\chi(y-l)dy+
\int_{-l}^l\overline\psi(l-y)\chi(l-y)dy\\&=
\int_{-2l}^{2l}\overline\psi(x)\chi(x)\,dx=(\psi,\chi)_{2l}.
\end{split}\end{equation*}\end{proof}

\subsection{Coherent states in a box}
Consider the coherent states $\upsilon_{qp}$ (defined in (3)) on the space $L_2(-2l, 2l)$:
$$\upsilon_{qp}(x)=\sum_{n=-\infty}^{+\infty}\eta_{qp}(x-4nl),$$
where $(q,p)\in\Omega_{2l}$. Unless otherwise stated, we always assume in this subsection that $\upsilon_{qp}$ are the functions from $L_2(-2l,2l)$ defined by (\ref{3}) with $l$ replaced by $2l$.

Then we easily see that
\begin{equation}\label{33} 
[\Theta\upsilon_{qp}](y)=\begin{cases}
\frac{\sqrt2}2\,\omega_{l+q,p}(y),&q\leq0,\\
-\frac{\sqrt2}2\,\omega_{l-q,-p}(y),&q>0\end{cases}
\end{equation}
(we have used the property $\eta_{qp}(x)=\eta_{-q,-p}(-x)$).
Here
\begin{equation}\label{34} 
\omega_{qp}(y)=\sum_{n=-\infty}^{+\infty}[\eta_{qp}(y-4nl)-
\eta_{qp}(2l-y+4nl)]=
\sum_{n=-\infty}^{+\infty}(-1)^n\eta_{qp}[(-1)^n(y-2nl)],
\end{equation}
for $(q,p)\in\Omega$. One can similarly prove that
\begin{equation}\label{35} 
[\Theta^{-1}\omega_{qp}](x)=
\frac{\sqrt2}2[\upsilon_{q-l,p}(x)-\upsilon_{l-q,-p}(x)]\end{equation}
(that is, $\Theta^{-1}\Theta\upsilon_{qp}\neq\upsilon_{qp}$ since the function $\upsilon_{qp}$ is not odd). Note that $\omega_{\pm l,0}\equiv0$.

We now prove an analogue of Proposition~\ref{TheoUnit} (which actually follows from that proposition and properties of the map $\Theta$).

\begin{proposition}\label{TheoUnitSegm}The family of functions (\ref{34}) forms a continuous resolution of unity in $L_2(-l, l)$:
$$\frac{1}{2\pi\hbar}\iint_{\Omega}P[\omega_{qp}]\,dqdp=1.$$
Equality here is understood in the weak sense: for all $\varphi,\varkappa\in L_2(-l,l)$ we have
\begin{equation}\label{36} 
\frac{1}{2\pi\hbar}\iint_{\Omega}(\varphi,P[\omega_{qp}]\varkappa)\,dqdp=
\frac{1}{2\pi\hbar}\iint_{\Omega}
(\varphi,\omega_{qp})(\omega_{qp},\varkappa)\,dqdp=(\varphi,\varkappa).
\end{equation}
\end{proposition}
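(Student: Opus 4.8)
The plan is to deduce this resolution of unity directly from Proposition~\ref{TheoUnit}, applied with $l$ replaced by $2l$ so that the functions $\upsilon_{qp}$ form a resolution of unity in $L_2(-2l,2l)$ over the cylinder $\Omega_{2l}$, together with the isometry properties of $\Theta$ and $\Theta^{-1}$ established in the previous proposition. Fix $\varphi,\varkappa\in L_2(-l,l)$ and set $\psi=\Theta^{-1}\varphi$, $\chi=\Theta^{-1}\varkappa$; both are odd functions in $L_2(-2l,2l)$, and by the scalar-product preservation of $\Theta^{-1}$ we have $(\psi,\chi)_{2l}=(\varphi,\varkappa)_l$. Thus it suffices to rewrite the left-hand side of (\ref{36}) as $\frac1{2\pi\hbar}$ times an integral of $\upsilon$-overlaps against $\psi$ and $\chi$ and to match it to $(\psi,\chi)_{2l}$.

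First I would compute a single $\omega$-overlap. Using (\ref{35}) and the identity $(\cdot,\cdot)_l=(\Theta^{-1}\cdot,\Theta^{-1}\cdot)_{2l}$ gives
$$(\varphi,\omega_{qp})_l=(\psi,\Theta^{-1}\omega_{qp})_{2l}=\frac{\sqrt2}2\bigl[(\psi,\upsilon_{q-l,p})_{2l}-(\psi,\upsilon_{l-q,-p})_{2l}\bigr].$$
The symmetry $\upsilon_{qp}(x)=\upsilon_{-q,-p}(-x)$, which follows from $\eta_{qp}(x)=\eta_{-q,-p}(-x)$, combined with the oddness of $\psi$ shows that $(\psi,\upsilon_{l-q,-p})_{2l}=-(\psi,\upsilon_{q-l,p})_{2l}$, so the two terms reinforce and
$$(\varphi,\omega_{qp})_l=\sqrt2\,(\psi,\upsilon_{q-l,p})_{2l},\qquad (\omega_{qp},\varkappa)_l=\sqrt2\,(\upsilon_{q-l,p},\chi)_{2l}.$$
Hence the integrand in (\ref{36}) equals $2(\psi,\upsilon_{q-l,p})_{2l}(\upsilon_{q-l,p},\chi)_{2l}$.

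Next I would substitute $q'=q-l$, turning the integral over $q\in[-l,l]$ into an integral of the $\upsilon$-overlap product over the half-cylinder $q'\in[-2l,0]$, $p\in\mathbb R$, carrying the prefactor $2$. The decisive point is that this product is invariant under $(q',p)\mapsto(-q',-p)$: the same symmetry and oddness give $(\psi,\upsilon_{-q',-p})_{2l}=-(\psi,\upsilon_{q'p})_{2l}$ and likewise for $\chi$, and the two sign changes cancel. Since $(q',p)\mapsto(-q',-p)$ carries $[-2l,0]$ onto $[0,2l]$, the half-cylinder integral equals exactly one half of the integral over the full cylinder $\Omega_{2l}$. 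The factor $\frac12$ thereby produced cancels the prefactor $2$, Proposition~\ref{TheoUnit} on $L_2(-2l,2l)$ identifies the remaining quantity as $(\psi,\chi)_{2l}$, and the isometry of $\Theta^{-1}$ gives $(\psi,\chi)_{2l}=(\varphi,\varkappa)_l$, which is exactly (\ref{36}).

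The main obstacle is not analytic but a matter of careful bookkeeping: one must track the factors of $\sqrt2$ coming from (\ref{33})--(\ref{35}) and confirm that the doubling they produce is precisely compensated by the halving of the integration domain. The conceptual heart is the folding argument — recognizing that the oddness of $\psi$ and $\chi$ renders the $\upsilon$-overlap product symmetric under $(q',p)\mapsto(-q',-p)$, so that integrating over the half-cylinder $[-2l,0]$ recovers exactly half of the full circle resolution. Convergence and the weak (rather than pointwise) sense of the identity require no extra work, as they are inherited verbatim from Proposition~\ref{TheoUnit}.
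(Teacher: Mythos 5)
Your proof is correct and takes essentially the same route as the paper's: both arguments reduce the statement to Proposition~\ref{TheoUnit} applied in $L_2(-2l,2l)$, using the isometry properties of $\Theta$, $\Theta^{-1}$ and the relation between $\omega_{qp}$ and $\upsilon_{qp}$. The only difference is one of direction — the paper starts from $(\varphi,\varkappa)$ and unfolds the cylinder integral over $\Omega_{2l}$ into two half-cylinder pieces via formula (\ref{33}), while you start from the $\omega$-integral and fold it back onto the full cylinder via formula (\ref{35}) together with the reflection symmetry $\upsilon_{qp}(x)=\upsilon_{-q,-p}(-x)$ and the oddness of $\Theta^{-1}\varphi$, $\Theta^{-1}\varkappa$; the two computations are mirror images of each other.
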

\begin{proof}
Using the properties of $\Theta$, $\Theta^{-1}$ and Proposition~\ref{TheoUnit}, we have
\begin{multline*}
(\varphi,\varkappa)=(\Theta^{-1}\varphi,\Theta^{-1}\varkappa)=
\frac1{2\pi\hbar}\iint_{\Omega_{2l}}
(\Theta^{-1}\varphi,\upsilon_{qp})(\upsilon_{qp},\Theta^{-1}\varkappa)\,dqdp=
\\=
\frac1{2\pi\hbar}\iint_{\Omega_{2l}} (\varphi,\Theta\upsilon_{qp})
(\Theta\upsilon_{qp},\varkappa)\,dqdp\\=
\frac1{4\pi\hbar}\iint_{\Omega_{2l}^-}
(\varphi,\omega_{q+l,p})(\omega_{q+l,p},\varkappa)\,dqdp+
\frac1{4\pi\hbar}\iint_{\Omega_{2l}^+}
(\varphi,\omega_{l-q,-p})(\omega_{l-q,-p},\varkappa)\,dqdp\\=
\frac1{2\pi\hbar}\iint_{\Omega}
(\varphi,\omega_{qp})(\omega_{qp},\varkappa)\,dqdp,
\end{multline*}
as required. Here we have written
$\Omega_{2l}^-=[-2l,0]\times\mathbb R$ and
$\Omega_{2l}^+=[0,2l]\times\mathbb R$.
\end{proof}

Proposition~\ref{TheoUnitSegm} along with some properties concerning semiclassical dynamics (to be proved below; see Theorem~\ref{PropDynAsympBox}) enables us to call the functions $\omega_{qp}$, $(q,p)\in\Omega$, coherent states in the infinite potential well (in a box).

One can obtain an analogue of formula (\ref{5}) for the functions $\omega_{qp}$:
\begin{equation*}\begin{split}\omega_{qp}(x)&=
\frac{1}{\sqrt[4]{2\pi\alpha^2}}\,
\theta\left[\frac{(x-q)l}{i\pi\alpha^2}-\frac{2pl}{\pi\hbar},
\frac{4l^2}{\pi\alpha^2}\right]\exp\left\{-\frac{(x-q)^2}{4\alpha^2}+
\frac{ip(x-q)}{\hbar}\right\}\\&- \frac{1}{\sqrt[4]{2\pi\alpha^2}}\,
\theta\left[\frac{(x-2l+q)l}{i\pi\alpha^2}-\frac{2pl}{\pi\hbar},
\frac{4l^2}{\pi\alpha^2}\right]\exp\left\{-\frac{(x-2l+q)^2}{4\alpha^2}-
\frac{ip(x-2l+q)}{\hbar}\right\}.
\end{split}\end{equation*}

Using the modular property of the theta-function, we get the following analogue
of formula (\ref{6}):
\begin{multline}\label{37} 
\omega_{qp}=\sqrt[4]{\frac{\pi\alpha^2}{32l^4}}\,\left[\theta\left(
-\frac{x-q}{4l}-\frac{p\alpha^2}{2il\hbar},\frac{\pi\alpha^2}{4l^2}\right)-
\theta\left(
-\frac{x-2l+q}{4l}-\frac{p\alpha^2}{2il\hbar},\frac{\pi\alpha^2}{4l^2}\right)
\right]e^{-(\frac{\alpha p}{\hbar})^2}=\\
\sqrt[4]{\frac{\pi\alpha^2}{2l^4}}\sum_{k=-\infty}^{+\infty}
\left[\exp\left\{-\alpha^2\left(\frac{\pi}{2l}k-\frac{p}{\hbar}\right)^2-\frac{i\pi
k(q-l)}{2l}\right\}\right.\\\left.-
\exp\left\{-\alpha^2\left(\frac{\pi}{2l}k+\frac{p}{\hbar}\right)^2+\frac{i\pi
k(q-l)}{2l}\right\}\right] \sin\left(\frac{\pi k}{2l}(x-l)\right).
\end{multline}

This yields an expansion of $\omega_{qp}$ with respect to the orthogonal basis
$$f_k=\frac{1}{\sqrt{l}}\sin\left(\frac{\pi k}{2l}(x-l)\right),\quad k=1,2,\ldots.$$
As in the circle case, the same result can be obtained by a direct calculation of the scalar product of $\omega_{qp}$ with the elements of this basis if we use the formula
$$\int_{-l}^l\omega_{qp}(x)\sin\left(\frac{\pi k}{2l}(x-l)\right)dx=
\int_{-\infty}^{+\infty}\eta_{qp}(x)\sin\left(\frac{\pi k}{2l}(x-l)\right)dx$$
for integer $k$. This formula is obtained from the expression
$$\int_{-2l}^{2l}\upsilon_{qp}(x)\sin\left(\frac{\pi}{2l}kx\right)dx=
\int_{-\infty}^{+\infty}\eta_{qp}(x)\sin\left(\frac{\pi}{2l}kx\right)dx$$
(which is proved in a similar way to (\ref{7})) using properties of $\Theta$:
\begin{equation*}\begin{split}\int_{-\infty}^{+\infty}&\eta_{qp}(x)\sin\left(\frac{\pi k}{2l}(x-l)\right)dx=\int_{-\infty}^{+\infty}\eta_{q-l,p}(x)\sin\left(\frac{\pi k}{2l}x\right)dx\\&=
\int_{-2l}^{2l}\upsilon_{q-l,p}(x)\sin\left(\frac{\pi}{2l}kx\right)dx=
\left(\sin\left(\frac{\pi}{2l}kx\right),\upsilon_{q-l,p}\right)_{2l}\\&=
\left(\Theta\left[\sin\left(\frac{\pi}{2l}kx\right)\right],\Theta\upsilon_{q-l,p}\right)_l=
\left(\sin\left(\frac{\pi}{2l}k(x-l)\right),\omega_{qp}\right)_l\\&=
\int_{-l}^l\omega_{qp}(x)\sin\left(\frac{\pi k}{2l}(x-l)\right)\,dx.
\end{split}\end{equation*}

The functions $f_k$ are eigenfunctions of the operator
$$H^b=-\frac{\hbar^2}{2m}\frac{d^2}{dx^2}$$
with the domain
$$D(H^b)=\{\psi\in AC^2(-l,l)|\,\psi(-l)=\psi(l)=0\}.$$
The operator $H^b$ is a Hamiltonian (energy operator) for a free (see Remark~\ref{r1} for the use of the notion ``free'') quantum particle in the infinite square potential well (see \cite{9}). Thus, the functions $\omega_{qp}(x)$ can be expanded in uniformly convergent series with respect to the eigenfunctions of the Hamiltonian for a particle in a box.
The evolution of the state $\omega_{qp}$ over time is given by the formula
$$\omega_{qp,t}=U^b_t\omega_{qp},$$ where
\begin{equation}\label{38} 
U^b_t=\exp(-\frac{it}{\hbar}H^b)
\end{equation}
is the evolution operator for a free quantum particle in a box. The function $\omega_{qp,t}$ satisfies the Schr\"{o}dinger equation with the boundary conditions corresponding to the infinite square well:
\begin{equation}\label{39}
\begin{split}
&i\hbar\frac{\partial\omega_{qp,t}}{\partial t}=-\frac{\hbar^2}{2m}\frac{\partial^2\omega_{qp,t}}{\partial x^2},\\
&\omega_{qp,t}(-l)=\omega_{qp,t}(l)=0,\\
&\omega_{qp,0}(x)=\omega_{qp}(x),
\end{split}
\end{equation}
where $x\in[-l,l]$, $t\in\mathbb R$. Using the reflection method, we arrive at
$$\omega_{qp,t}(x)=\sum_{n=-\infty}^{+\infty}(-1)^n\eta_{qp,t}[(-1)^n(x-2nl)],$$
where $\eta_{qp,t}$ is defined as above (see (\ref{11})).

Note that formulae (\ref{33}) and  (\ref{35})
remain valid if we replace $\upsilon_{qp}$ and $\omega_{qp}$ by $\upsilon_{qp,t}$ and $\omega_{qp,t}$ (respectively) for an arbitrary $t$.

Expanding in eigenfunctions of $H^b$, we get

$$\omega_{qp,t}(x)=\sum_{k=1}^{\infty}b_{k,qp}\sin\left(\frac{\pi}{l}kx\right)\exp\left\{-\frac{i\hbar t}{2m}
\left(\frac{\pi k}{2l}\right)^2\right\},$$ where the coefficients $b_{k,qp}$ can be found from formula (\ref{37}).

The dynamics of coherent states $\omega_{qp,t}$ in a box is similar to that of coherent
states on a circle. For small values of time, a well-localized initial wave packet remains well-localized and its centre moves along the classical trajectory with period
$$T_{cl}=\frac{4lm}p.$$
The wave packet eventually collapses, and at the moment
$$T_{coll}=\frac{2ml\alpha}{\sqrt3\hbar}$$ we observe an approximately uniform spatial distribution. However, at the moment
$$T_{rev}=\frac{16ml^2}{\pi\hbar}$$
we observe a full revival of the wave packet:
$\omega_{qp,T_{rev}}=\omega_{qp}$. Fractional revivals are observed at the moments $\frac{M}NT_{rev}$ for all integer $M$ and $N$. The full revival time now increases by a factor of 4 compared to that for motion on a circle (compare the last formula for $T_{rev}$ with formula (\ref{14}))
because the energy spectrum is now twice as dense as that for a particle on the circle $[-l,l]$.

Thus, as in the case of a particle on a circle, we have three time scales:

\begin{enumerate}[1)]
\item $T_{cl}$, the classical period of motion;

\item $T_{coll}$, the characteristic time of collapse of the quantum wave packet;

\item $T_{rev}$, the period of full revival of the quantum wave packet.
\end{enumerate}
The asymptotic behaviour of these time scales as $\hbar,\alpha,\frac\hbar\alpha\to0$ is again described by formula (\ref{15}).

\subsection{The semiclassical limit of the dynamics of coherent states of a particle in a box}
We now prove an analogue of Theorem~\ref{PropDynAsymp} for the box.
\begin{theorem}\label{PropDynAsympBox} We have the following limit formula in $\mathscr S'(\Omega)$  (where the variables $(q,p)$ are fixed and $(q',p')$ are variables of integration with test functions $\sigma(q',p')\in\mathscr S(\Omega)$):
\begin{equation}\label{40} 
\lim\{\frac{1}{2\pi\hbar}|(\omega_{qp},\omega_{q'p',t})|^2
-\frac1{N'}\sum_{k=0}^{N'-1}\varphi_D[q'-q-\frac{4kl}{N'}-a+\frac
pm(t-cT_{rev})]\delta(p'-p)\}=0.
\end{equation}
The limit is performed as follows: $\hbar\to0$, $\alpha\to0$,
$\frac\hbar\alpha\to0$, $t=t(\hbar)$,
$\frac{\hbar}\alpha(t-cT_{rev})\to2mD$, $\hbar(t-\frac
cT_{rev})\to0$, where $c\in\mathbb R$, $D\in[0,\infty]$ and the numbers $N'$
and $a$ depend on $c$ and $N$. If $c$ is rational (and so, can be written as a reduced fraction $c=\frac MN$), then
$N'=N$ for odd $N$ and $N'=\frac N2$ for even $N$. Further,
$a=\frac{2l}N$ for $N\equiv 2\pmod 4$ and $a=0$ otherwise. If
 $c$ is irrational, then $N'=1$, $a=0$. The parameter $\alpha$
occurs in the definition of $\omega_{qp}$ (see
(\ref{1}) and (\ref{34})), $T_{rev}=\frac{16ml^2}{\pi\hbar}$.

The convergence in (\ref{40}) is uniform with respect to $(q,p)$ on every subset of
$\Omega$ disjoint from some neighbourhood of the closed interval
$\{p=0\}\subset\Omega$. Moreover, if $c$ is integer or half-integer, then the convergence is uniform on every subset of $\Omega$ disjoint from some neighbourhoods of the points $(\pm l,0)$.
\end{theorem}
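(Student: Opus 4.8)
The plan is to reduce everything to the circle of circumference $4l$, where Theorem~\ref{PropDynAsymp} (applied with $l$ replaced by $2l$) is already available, and then fold the resulting distribution back onto the box by means of the isometry $\Theta^{-1}$ and the symmetry built into $\mathscr S(\Omega)$. First I would rewrite the box overlap through circle overlaps. Since $\Theta^{-1}$ preserves scalar products and, by the $t$-dependent version of~(\ref{35}), $\Theta^{-1}\omega_{qp,t}=\tfrac{\sqrt2}2(\upsilon_{q-l,p,t}-\upsilon_{l-q,-p,t})$ on $L_2(-2l,2l)$, I obtain $(\omega_{qp},\omega_{q'p',t})=\tfrac12(A-B-C+D)$ with $A=(\upsilon_{q-l,p},\upsilon_{q'-l,p',t})$, $B=(\upsilon_{q-l,p},\upsilon_{l-q',-p',t})$, $C=(\upsilon_{l-q,-p},\upsilon_{q'-l,p',t})$, $D=(\upsilon_{l-q,-p},\upsilon_{l-q',-p',t})$ (all on the circle $2l$). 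Using $\upsilon_{ab}(-x)=\upsilon_{-a,-b}(x)$ and the fact that the free evolution commutes with parity, I would check $D=A$ and $C=B$, so that
\[
(\omega_{qp},\omega_{q'p',t})=A-B,\qquad B(q',p')=A(2l-q',-p').
\]
This exact identity is the heart of the reduction.

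Next I would pass to $|\cdot|^2$ and integrate against a test function $\sigma\in\mathscr S(\Omega)\subset\mathscr S(K_{2l})$ over the box phase space $[-l,l]\times\mathbb R$. Writing $|A-B|^2=|A|^2+|B|^2-2\Re(A\bar B)$ and using $|B(q',p')|^2=|A(2l-q',-p')|^2$ together with the symmetry $\sigma(2l-q',-p')=\sigma(q',p')$ of functions in $\mathscr S(\Omega)$, the substitution $(q',p')\mapsto(2l-q',-p')$ turns the $|B|^2$-integral over $[-l,l]$ into the $|A|^2$-integral over $[l,3l]$. By $4l$-periodicity of $|A|^2\sigma$ the two pieces combine into a single integral of $|A|^2\sigma$ over one full period $[-2l,2l]$, that is, over the whole circle phase space $\Omega_{2l}$. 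Applying Theorem~\ref{PropDynAsymp} for the circle of circumference $4l$ (for which $T_{rev}=\tfrac{16ml^2}{\pi\hbar}$, matching the box) after the elementary shift $Q'=q'-l$, and then translating back through the reflection symmetry of $\mathscr S(\Omega)$, yields the sum of $N'$ spread packets in~(\ref{40}); the numbers $N'$ and $a$ are precisely those produced by Theorem~\ref{PropDynAsymp} on the doubled circle after this folding. The spurious factor $2$ cannot appear: the passage from $|A|^2+|B|^2$ on the half-domain to $|A|^2$ on the full circle is an \emph{exact} pre-limit identity, so there is only one application of Theorem~\ref{PropDynAsymp}.

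The remaining, and main, difficulty is the cross term $-2\Re(A\bar B)$. Because $A$ is non-negligible only near $p'=p$ while $B=A(2l-\cdot,-\cdot)$ is non-negligible only near $p'=-p$ (the momentum Gaussian in~(\ref{22}) has width $\sim\hbar/\alpha\to0$), the product $A\bar B$ is supported, up to exponentially small errors, in a $\sim\hbar/\alpha$ neighbourhood of $\{p'=p\}\cap\{p'=-p\}=\{p=0\}$. Hence for $(q,p)$ in any set with $|p|\ge\varepsilon>0$ the cross term tends to $0$ uniformly, which gives the stated uniform convergence away from a neighbourhood of $\{p=0\}$. Quantifying this decay uniformly in $(q,p)$ and in the combined limit $\hbar,\alpha,\hbar/\alpha\to0$ is the step I expect to be most delicate, since near $p=0$ the two momentum lobes merge and $A\bar B$ genuinely contributes.

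Finally, for the sharper statement when $c$ is integer or half-integer I would bypass the $A,B$ splitting. A direct spectral computation on the basis $f_k$ gives $U^b_{cT_{rev}}=\pm\mathrm{Id}$ for integer $c$ and $U^b_{cT_{rev}}\psi=\pm\psi(-\,\cdot\,)$ for half-integer $c$; combined with $\omega_{qp}(-x)=\omega_{-q,-p}(x)$ this reduces $|(\omega_{qp},\omega_{q'p',cT_{rev}})|^2$ to the static overlap $|(\omega_{qp},\omega_{\pm q',\pm p'})|^2$ of two coherent states. This overlap concentrates at the single point $(q',p')=(\pm q,\pm p)$ and, by the Gaussian estimates underlying Proposition~\ref{TheoUnitSegm}, does so uniformly on any set avoiding the points where $\omega$ degenerates, namely $(\pm l,0)$ (recall $\omega_{\pm l,0}\equiv0$). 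This yields the improved uniformity near $\{p=0\}$ in the integer/half-integer case.
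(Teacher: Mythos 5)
Your reduction for the main part of the statement coincides with the paper's own proof. The identity $(\omega_{qp},\omega_{q'p',t})=A-B$ (the paper derives it from $\omega_{q'p'}=\frac{\sqrt2}{2}\Theta[\upsilon_{q'-l,p'}-\upsilon_{l-q',-p'}]$ and oddness of the bracket, you from $\Theta^{-1}$ plus parity --- the same computation), the splitting $|A-B|^2=|A|^2+|B|^2-2\Re(A\bar B)$ as in (\ref{42}), the folding of the two squared terms into a single integral over the doubled circle by means of the reflection symmetry built into $\mathscr S(\Omega)$ (the paper phrases this through $T^{-1}[\sigma]$ and its evenness), one application of Theorem~\ref{PropDynAsymp} with $l$ replaced by $2l$, and the momentum-separation argument killing the cross term when $|p|\ge\varepsilon$: all of this is exactly the paper's argument, and your observation that no spurious factor $2$ appears is correct. (A side point in your favour: your prescription for $a$ --- the value produced by Theorem~\ref{PropDynAsymp} on the doubled circle, i.e.\ $\frac{4l}{N}$ for $N\equiv2\pmod 4$ --- is the one consistent with the folding and with the spectral identity $U^b_{T_{rev}/2}\psi=-\psi(-\,\cdot\,)$; the value $\frac{2l}{N}$ printed in the statement appears to be a slip.)

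The genuine gap is in the second uniformity claim (integer or half-integer $c$, uniformity away from $(\pm l,0)$ only). Your spectral bypass treats only the single instant $t=cT_{rev}$, whereas the theorem's limit runs over a whole window of times with $\frac{\hbar}{\alpha}(t-cT_{rev})\to 2mD$, $\hbar(t-cT_{rev})\to0$, for every $D\in[0,\infty]$; for such $t$ the overlap is not static. One can of course write $U^b_t=U^b_{t-cT_{rev}}U^b_{cT_{rev}}$ and use $U^b_{cT_{rev}}=\pm1$ or $\pm$ parity to reduce to the case $c=0$ with $s=t-cT_{rev}$, but $c=0$ is itself an integer, so what must then be proved is precisely the sharper statement for $c=0$ and arbitrary $D$ --- that is, you are thrown back onto controlling the cross term $\Re(A\bar B)$ at and near $p=0$, exactly the step you explicitly leave open. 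Moreover, your heuristic for that step ("near $p=0$ the two momentum lobes merge and $A\bar B$ genuinely contributes") points the wrong way: for $N=1,2$ the cross term still vanishes near $p=0$ provided $q\neq\pm l$, and the mechanism is spatial rather than momentum separation. The cross term is the overlap of the evolved packet with its mirror image; the mirror positions $l(1+2r)+4cjl$, $r,j\in\mathbb Z$, can coincide with $q$ only at $q=\pm l$ when $c$ is integer or half-integer, whereas for $N\ge3$ they hit interior points --- which is exactly why the restriction to $|p|\ge\varepsilon$ cannot be dropped in that case. The paper settles this by an explicit computation: expanding $T^{-1}[\sigma]$ in a Fourier series/integral and using (\ref{21})--(\ref{22}) as in the proof of Theorem~\ref{PropDynAsymp}, it writes the Fourier transform of the cross term at $p=0$ as a sum of Gaussians containing the factor
\begin{equation*}
\exp\left\{-\frac{1}{4\alpha^2}\Bigl[q-l-2rl-\frac{\pi j\hbar t}{4ml}+\nu\hbar\Bigr]^2\right\},
\end{equation*}
which decays uniformly on $[-l+\varepsilon,l-\varepsilon]$ precisely in the limiting cases $N=1,2$ (where $\frac{\pi j\hbar t}{4ml}\to4cjl$ is an even multiple of $l$). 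Some quantitative estimate of this kind is indispensable; without it the second half of the theorem remains unproved in your proposal.
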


Since the meaning of the assertions of Theorem~\ref{PropDynAsympBox} is analogous to that in Theorem~\ref{PropDynAsymp}, we repeat it only briefly here. Some complications appear, in particular,
because we must take into account that the particle reflects from the walls and,
hence, the sign of its momentum changes. Here $\frac1{2\pi\hbar}|(\omega_{qp},\omega_{q'p',t})|^2$
is the probability density for a quantum particle to be in the state $\omega_{qp}$ at time $t$ under the condition that it was in the state $\omega_{q'p'}$ at time 0 (up to the norms $\|\omega_{qp}\|^2$ and
$\|\omega_{q'p'}\|^2$, which tend to 1 in the sense of our limit on the sets under consideration; see Proposition~\ref{LemNormBox} below.

In the semiclassical limit $\hbar,\alpha,\frac\hbar\alpha\to0$ a quantum particle in the state
$\omega_{qp}$ has a well-defined position (equal to $q$) and momentum
(equal to $p$),just like a classical particle. Therefore, we can say that, in the semiclassical limit,
$\frac1{2\pi\hbar}|(\omega_{qp},\omega_{q'p',t})|^2$ is the probability density for a quantum particle in a box to be at the phase point $(q,p)$ at time $t$ under the condition that it was at the phase point $(q',p')$ at time 0.

As above, the case of $c=0$ and $D=0$ ($\frac{\hbar t}\alpha\to0$)
corresponds to the classical time scale $T_{cl}$:
time is either fixed or increases slower than the rate of decrease of the collapse velocity (proportional to $\frac\hbar\alpha$).of the packet. Then the limit formula (\ref{40}) takes the form
\begin{equation*}
\lim[\frac{1}{2\pi\hbar}|(\omega_{qp},\omega_{q'p',t})|^2-
\delta(q'-q+\frac{p}mt,p'-p)]=0.
\end{equation*}
Here $\delta(q' - q + \frac{pt}m, p' - p)$ is the probability density for a classical particle in a box to be at the phase point $(q, p)$ at time $t$ under the condition that it was at the phase point $(q',p')$ at time 0.  Thus, in the semiclassical limit at time scale $T_{cl}$, we have classical dynamics: the quantum probability density of transition to $(q,p)$ for a particle that was at $(q',p')$ at time 0 is equal to the corresponding classical probability density.

The case $c=0$, $D\in(0,\infty)$ ($\frac{\hbar
t}\alpha\to2mD\in(0,\infty)$) corresponds to the second time scale
$T_{coll}$. In this case we observe some spatial spread of the probability distribution. In contrast to the circle case, there is flattening not only of the spatial probability density but also of the probability of the signs of the momentum.

The case $c=0$, $D=\infty$ ($\frac{\hbar t}\alpha\to\infty$)
corresponds to a complete flattening of the spatial probability density (and the probability of the signs of the momentum): formula (\ref{40}) takes the form
\begin{equation}\label{41} 
\lim\frac{1}{2\pi\hbar}|(\omega_{qp},\omega_{q'p',t})|^2=
\frac1{4l}[\delta(p'-p)+\delta(p'+p)].
\end{equation}
Therefore, we also associate this case with the time scale $T_{coll}$ (corresponding to destruction of the localized wave packet). This yields a mathematical justification of the (asymptotic) flattening of the spatial probability density for a quantum particle in a box, which was previously known only from numerical experiments \cite{5,6}.

The case $c\neq0$ corresponds to the third time scale $T_{rev}$.
If $c$ is irrational, then, as in
the previous case, formula (\ref{40}) reduces to (\ref{41}), that is, one observes a complete
flattening of the spatial probability density. The case of rational $c$ corresponds to
a revival of the wave packet (full if $N'=1$, that is, $c$ is integer or half-integer, and fractional otherwise). The case $D>0$ corresponds to the spread revived wave packets. In case $D=\infty$, we again obtain a complete flattening of the spatial density of the distribution 
(\ref{41}).

The difference from the circle case regarding the condition $p\neq 0$ for $N\neq1,2$ (see the statement of Theorem~\ref{PropDynAsympBox}) is explained by a specific structure of fractional revivals of the states $\omega_{q0}$
coming from the symmetry of these states (namely, from the property
$\omega_{2l-q,0}=-\omega_{q0}$).

As in the circle case, since every irrational number can be approximated by rationals, we can (heuristically) say that the case of irrational $c$ in Theorem~\ref{PropDynAsympBox} is a limiting case of rational $c$ as $N\to\infty$: the distance between neighbouring terms in the sum of delta-functions in (\ref{40}) tends to zero, and the sum of the delta-functions tends to the uniform distribution (in the weak sense). In other words, the cases of an irrational $c$ and a very close rational $c'$ are almost indistinguishable.

Note that the distribution of the modulus of the momentum is preserved in all these limiting cases.

Thus, we have traced the whole dynamics of a quantum wave packet in the infinite potential well. A well-localized wave packet eventually collapses with a complete flattening of the spatial density and the momentum sign. At some moments we see that copies of the initial packet simultaneously arise at several points of the well and then again eventually collapse with a complete flattening of the density. Since there are ``more'' irrationals than rationals, we can say that the particle most often stays in states whose spatial density of distribution is close to the uniform density. Thus, Theorem~\ref{PropDynAsympBox} completely describes the free quantum dynamics of a particle in the infinite potential well at all time scales the semiclassical limit. All stages are parametrized by the two real parameters $c$ and $D$.

Let us formulate a simplified version of Theorem~\ref{PropDynAsympBox}, which deals only with principal time scales (classical motion, complete flattening, and exact revivals) without intermediate ones.

\begin{corollary}
We have the following limit formulae in $\mathscr S'(\Omega)$:

1) \begin{equation*}
\lim[\frac{1}{2\pi\hbar}|(\omega_{qp},\omega_{q'p',t})|^2-
\delta(q'-q+\frac{p}mt,p'-p)]=0
\end{equation*}
as $\hbar,\alpha,\frac\hbar\alpha\to0$, $t=const$;

2) $$\lim\frac{1}{2\pi\hbar}|(\omega_{qp},\omega_{q'p',t})|^2=
\frac1{4l}\delta(p'-p)+\frac1{4l}\delta(p'+p)$$
as $\hbar,\alpha,\frac\hbar\alpha\to0$, $t\to\infty$, $\frac{\hbar t}{\alpha}\to\infty$,\\
as well as $\hbar,\alpha,\frac\hbar\alpha\to0$, $t=cT_{rev}\to\infty$, where $c$
is irrational, $T_{rev}=\frac{16ml^2}{\pi\hbar}$;

3)
\begin{equation*}
\lim[\frac{1}{2\pi\hbar}|(\omega_{qp},\omega_{q'p',t})|^2
-\frac1{N'}\sum_{k=0}^{N'-1}\delta(q'-q-\frac{4kl}{N'}-a,p'-p)]=0
\end{equation*}
as $\hbar,\alpha,\frac\hbar\alpha\to0$, $t=cT_{rev}\to\infty$,
where $c=\frac MN$ is rational and the numbers $N'$ and $a$ depend on $N$.

The limits in cases 1), 2) and 3) for $N=1$ or $N=2$ (that is, for integer or half-integer $c$) are uniform with respect to $(q,p)$ on every subset of $\Omega$ disjoint from some neighbourhoods of the points  $(\pm l,0)$. The convergence in case 3) for $N\neq1$, $N\neq2$ is uniform on every subset of $\Omega$, disjoint from some neighbourhood of the closed interval $\{p=0\}\subset\Omega$.
\end{corollary}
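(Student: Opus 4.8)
The plan is to reduce the box dynamics entirely to the circle dynamics of Theorem~\ref{PropDynAsymp} via the isometry $\Theta^{-1}$. First I would use the scalar-product preservation of $\Theta^{-1}$ (part~1 of the Proposition on $\Theta$, valid for \emph{all} functions) to write $(\omega_{qp},\omega_{q'p',t})_l=(\Theta^{-1}\omega_{qp},\Theta^{-1}\omega_{q'p',t})_{2l}$, and then substitute formula (\ref{35}) (which, as noted after (\ref{39}), remains valid for the time-evolved states). This expresses the box overlap as a linear combination of four circle overlaps of $\upsilon$-states on $L_2(-2l,2l)$, the circle of half-length $2l$ whose revival time $\tfrac{4m(2l)^2}{\pi\hbar}$ coincides with the box $T_{rev}=\tfrac{16ml^2}{\pi\hbar}$.

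The four terms collapse to two by a parity argument. Since $H^c$ on $L_2(-2l,2l)$ is parity-even with a parity-invariant domain, $U^c_t$ commutes with $x\mapsto-x$; combined with $\upsilon_{QP}(-x)=\upsilon_{-Q,-P}(x)$ (a consequence of $\eta_{qp}(x)=\eta_{-q,-p}(-x)$), the change of variable $x\mapsto-x$ identifies the two ``reflected'' overlaps with the two ``direct'' ones. Hence
$$(\omega_{qp},\omega_{q'p',t})=\underbrace{(\upsilon_{q-l,p},\upsilon_{q'-l,p',t})_{2l}}_{A}-\underbrace{(\upsilon_{q-l,p},\upsilon_{l-q',-p',t})_{2l}}_{B},$$
so that $\frac1{2\pi\hbar}|(\omega_{qp},\omega_{q'p',t})|^2=\frac1{2\pi\hbar}\bigl(|A|^2+|B|^2\bigr)-\frac1{\pi\hbar}\Re\!\bigl(A\overline B\bigr)$.

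For the diagonal terms I would invoke Theorem~\ref{PropDynAsymp} with $l$ replaced by $2l$: as a distribution in $(q',p')$, $\frac1{2\pi\hbar}|A|^2$ converges in $\mathscr S'$ to the circle limit supported at $p'=p$, while $\frac1{2\pi\hbar}|B|^2$ converges to the same limit reflected to $p'=-p$. Under the folding map $T$ of (\ref{31}), which is precisely the correspondence relating $\mathscr S(K_{2l})$ to $\mathscr S(\Omega)$, these two pieces are the two sheets of a single $\Omega$-distribution; assembling them reproduces the right-hand side of (\ref{40}), with the $\delta(p'-p)$ understood on $\Omega$ (equivalently the symmetric pair $\tfrac1{4l}[\delta(p'-p)+\delta(p'+p)]$ of (\ref{41}) in the flattening case). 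The values $N'$, $a$ and the spatial shift $\tfrac{4kl}{N'}$ emerge directly from the circle data after the substitution $q'\mapsto q'-l$ and the folding $T$, the folding also accounting for the halving of $a$ relative to the circle; these are routine combinatorial identities involving only the parity of $N$.

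The crux, and the main obstacle, is the cross term $\frac1{\pi\hbar}\Re(A\overline B)$. As a function of $p'$, $A$ carries a Gaussian factor $\exp\{-\alpha^2(p'-p)^2/\hbar^2\}$ concentrated at $p'=p$ with width $\sim\hbar/\alpha\to0$ (cf.\ (\ref{22})), whereas $B$ is concentrated at $p'=-p$; their product is bounded by $\exp\{-2\alpha^2p^2/\hbar^2\}$ and therefore tends to $0$ in $\mathscr S'(\Omega)$, but \emph{only} uniformly for $(q,p)$ bounded away from the line $\{p=0\}$, where the two Gaussians coincide and the cross term survives. This estimate is exactly what produces the exceptional sets in the statement: for non-(half-)integer $c$ the symmetry $\omega_{2l-q,0}=-\omega_{q0}$ deforms the $p=0$ fractional-revival pattern, so uniformity is claimed only off a neighbourhood of $\{p=0\}$, whereas for integer or half-integer $c$ the revival reproduces $\omega_{qp}$ up to sign and the only defect is $\omega_{\pm l,0}\equiv0$, giving uniformity off neighbourhoods of $(\pm l,0)$. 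The normalization $\|\omega_{qp}\|\to1$ on these sets is supplied by Proposition~\ref{LemNormBox}.
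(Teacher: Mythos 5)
Your proposal is correct and takes essentially the same approach as the paper: there the corollary is obtained by specializing Theorem~\ref{PropDynAsympBox} (to $(c,D)=(0,0)$, to $D=\infty$ or irrational $c$, and to rational $c$ with $D=0$), and the proof of that theorem is precisely your reduction --- write $(\omega_{qp},\omega_{q'p',t})$ via $\Theta$ as a difference of two circle overlaps on $L_2(-2l,2l)$, apply the circle Theorem~\ref{PropDynAsymp} to the diagonal terms against the folded test function $T^{-1}[\sigma]$, and show the cross term vanishes by Gaussian momentum separation away from $\{p=0\}$, with a special $p=0$ analysis for $N=1,2$ producing the exceptional neighbourhoods of $(\pm l,0)$. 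Your four-term expansion plus parity argument is the paper's use of the oddness of $\upsilon_{q'-l,p'}-\upsilon_{l-q',-p',t}$ in a different guise, and your geometric explanation of the $p=0$ cross term corresponds to the paper's explicit Gaussian computation of that term.
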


\begin{proof}[Proof of Theorem~\ref{PropDynAsympBox}]
Since 
$$\omega_{q'p'}=\sqrt2\Theta\upsilon_{q'-l,p'}=
\frac{\sqrt2}2\Theta[\upsilon_{q'-l,p'}-\upsilon_{l-q',-p'}]$$
and $(\upsilon_{q'-l,p'}-\upsilon_{l-q',-p'})$ is an odd function, we obtain from the properties of $\Theta$ that
$$(\omega_{qp},\omega_{q'p',t})_l=
(\upsilon_{q-l,p},\upsilon_{q'-l,p',t}-\upsilon_{l-q',-p'})_{2l}$$ (we will omit the subindex $2l$ in the remain part of the proof). We have
\begin{multline*}
\iint_\Omega\frac1{2\pi\hbar}|(\upsilon_{q-l,p},
\upsilon_{q'-l,p',t}-\upsilon_{l-q',-p',t})|^2\sigma(q',p')\,dq'dp'\\=
\iint_{\Omega_{2l}}\frac1{2\pi\hbar}|(\upsilon_{q-l,p},
\upsilon_{q'p',t}-\upsilon_{-q',-p',t})|^2T^{-1}[\sigma](q',p')\,dq'dp'.
\end{multline*}
Furthermore,
\begin{multline}\label{42} 
\frac1{2\pi\hbar}|(\upsilon_{q-l,p},
\upsilon_{q'p',t}-\upsilon_{-q',-p',t})|^2=
\frac1{2\pi\hbar}|(\upsilon_{q-l,p},\upsilon_{q'p',t})|^2+
\frac1{2\pi\hbar}|(\upsilon_{q-l,p},\upsilon_{-q',-p',t})|^2\\-
\frac1{\pi\hbar}\Re(\upsilon_{q-l,p},\upsilon_{q',p',t})
(\upsilon_{-q',-p',t},\upsilon_{q-l,p}).
\end{multline}
Clearly, the last summand tends to zero uniformly on every subset of $\Omega$, disjoint from some neighbourhood of the closed interval $\{p=0\}$ (that is, one can find $\varepsilon$ such that $|p|>\varepsilon$ for all elements of the subset). 

Consider a weaker restriction: $(q,p)$ lies in a subset of $\Omega$, disjoint from some neighbourhoods of the points $(\pm l,0)$. Since we already know that the summand tends to zero for $p\neq0$, we consider the case $p=0$. Let us expand the function $T^{-1}[\sigma]$ into a Fourier series with respect to $q'$ and represent it by a Fourier integral with respect to $p'$:
$$T^{-1}[\sigma](q',p')=\frac1{2\sqrt{\pi l}}\sum_{j=-\infty}^{+\infty}\int_{-\infty}^{+\infty}a_j(\nu)e^{i\frac\pi{2l}jq'+i\nu p'}\,dq'dp'.$$
Using formula (\ref{21})  and the method of the proof of Theorem~\ref{PropDynAsymp}, we arrive at the formula
\begin{multline*}
\frac1{2\pi\hbar}\iint_\Omega (\upsilon_{q-l,0},\upsilon_{q'p',t})
(\upsilon_{-q',-p',t},\upsilon_{q-l,0})e^{i\frac\pi{2l}jq'+i\nu
p'}\,dq'dp'\\=\sum_{r=-\infty}^{+\infty} \exp\left\lbrace
-\frac{\alpha^2(4+\gamma^2)}4\left[ \frac{\pi
j}{2l}-\frac{\gamma(q-l-2rl)}{\alpha^2(4+\gamma^2)}\right]^2
-\right.\\\left. -\frac{1}{4\alpha^2}\left[q-l-2rl-\frac{\pi j\hbar
t}{4ml}+\nu\hbar\right]^2-\frac{(q-l-2rl)^2}{\alpha^2(4+\gamma^2)}\right\rbrace.
\end{multline*}
This expression tends to zero for $q\neq\pm l$ in the limiting cases corresponding to the case of rational $c$ with $N=1$ or $N=2$ and, under this condition, the convergence is uniform on every interval $[-l+\varepsilon,l-\varepsilon]$, $\varepsilon>0$. That is,
\begin{equation}\label{43} 
\lim\frac1{2\pi\hbar}\iint_\Omega
(\upsilon_{q-l,p},\upsilon_{q',p',t})
(\upsilon_{-q',-p',t},\upsilon_{q-l,p})T^{-1}[\sigma](q',p')\,dq'dp'=0
\end{equation}
uniformly on $(q,p)$ on every subset of $\Omega$ disjoint from some neighbourhoods of the points $(\pm l,0)$.

Using equation (\ref{42}) with zero last term, Theorem~\ref{PropDynAsymp}, the definition of $T^{-1}$ and the eveness property of $T^{-1}[\sigma](q',p')=T^{-1}[\sigma](-q',-p')$
of the image of $T^{-1}$, we obtain for rational $c$ that
\begin{equation*}\begin{split}
0&=\lim\iint_\Omega\frac1{2\pi\hbar}|(\omega_{qp},\omega_{q'p',t})|^2
\sigma(q',p')\,dq'dp'\\&- \frac2{N'}\sum_{k=0}^{N'-1}\int_{-2l}^{2l}
\varphi_D[q'-q+l-\frac{4kl}{N'}-a+\frac pm(t-\frac
MNT_{rev})]T^{-1}[\sigma](q',p)\,dq'
\\
&=\lim\iint_\Omega\frac1{2\pi\hbar}|(\omega_{qp},\omega_{q'p',t})|^2
\sigma(q',p')\,dq'dp'
\\
&-\frac1{N'}\sum_{k=0}^{N'-1}\int_{-l}^{l}
\varphi_D[q'-q-\frac{4kl}{N'}-a+\frac pm(t-\frac
MNT_{rev})]\sigma(q',p)\,dq'\\
&-\frac1{N'}\sum_{k=0}^{N'-1}\int_{-l}^{l}
\varphi_D[q'-2l+q+\frac{4kl}{N'}+a-\frac pm(t-\frac
MNT_{rev})]\sigma(q',-p)\,dq',
\end{split}\end{equation*}
where $N'$ and $a$ are defined as in the statement of the theorem. This proves the theorem in the case of rational $c$.

For an irrational $c$ we have
\begin{multline*}
\lim\iint_\Omega\frac1{2\pi\hbar}|(\omega_{qp},\omega_{q'p',t})|^2
\sigma(q',p')\,dq'dp'\\=\frac1{4l}\int_{-2l}^{2l}
\{T^{-1}[\sigma](q',p)+T^{-1}[\sigma](q',-p)\}\,dq'=\frac1{4l}\int_{-2l}^{2l}
\sigma(q',p)\,dq'.
\end{multline*}
The uniform convergence in the two last formulae follows from that in (\ref{19}) and the uniform convergence to zero of the expression (\ref{43}) or the last summand in (\ref{42}) (depending on the limiting case under consideration). 

The theorem is proved.
\end{proof}

Now let us prove an analogue of Proposition~\ref{LemNormCirc}.

\begin{proposition}\label{LemNormBox}
The norm of $\omega_{qp}$ tends to unity as $\hbar\to0$,
$\alpha\to0$, $\frac\hbar\alpha\to0$. The convergence is uniform on every subset of $\Omega$ disjoint from some neighbourhoods of the points $(\pm l,0)$.
\end{proposition}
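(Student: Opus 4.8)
The plan is to compute $\|\omega_{qp}\|_l^2$ by transporting the problem back to the circle states $\upsilon$ on $L_2(-2l,2l)$ via the map $\Theta$, and then to combine Proposition~\ref{LemNormCirc} with one explicit Gaussian estimate. First I would use the representation $\omega_{qp}=\frac{\sqrt2}{2}\,\Theta[\upsilon_{q-l,p}-\upsilon_{l-q,-p}]$, which follows from (\ref{33}) and (\ref{35}), and observe that $\psi:=\upsilon_{q-l,p}-\upsilon_{l-q,-p}$ is an odd function. Since $\Theta$ preserves scalar products whenever one argument is odd (the proposition on $\Theta,\Theta^{-1}$ preceding the definition of the coherent states in a box), this yields
$$\|\omega_{qp}\|_l^2=\tfrac12\|\psi\|_{2l}^2=\tfrac12\Bigl(\|\upsilon_{q-l,p}\|_{2l}^2+\|\upsilon_{l-q,-p}\|_{2l}^2-2\Re(\upsilon_{q-l,p},\upsilon_{l-q,-p})_{2l}\Bigr).$$

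Next I would apply Proposition~\ref{LemNormCirc} with $l$ replaced by $2l$ (the $\upsilon$ in this subsection are circle states on $L_2(-2l,2l)$): each of the two norm terms tends to $1$ uniformly in $(q,p)$. Thus the claim $\|\omega_{qp}\|_l^2\to1$ reduces entirely to showing that the cross term $\Re(\upsilon_{q-l,p},\upsilon_{l-q,-p})_{2l}$ tends to $0$ uniformly on the indicated sets; this is the heart of the matter. To estimate it, I would expand the scalar product as in (\ref{21}) (with $2l$ in place of $l$) and insert the Gaussian overlap (\ref{22}) at $t=0$. For the indices $(q-l,p)$ and $(l-q+4kl,-p)$ the combination $p'+p$ vanishes, so every overlap is real and positive, and one obtains
$$(\upsilon_{q-l,p},\upsilon_{l-q,-p})_{2l}=e^{-2\alpha^2p^2/\hbar^2}\sum_{k=-\infty}^{+\infty}\exp\Bigl\{-\frac{(l-q+2kl)^2}{2\alpha^2}\Bigr\}.$$

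The two factors govern the two ways uniformity can fail. Because $\hbar/\alpha\to0$, the prefactor $e^{-2(\alpha/\hbar)^2p^2}$ tends to $0$ uniformly on $\{|p|\ge\varepsilon\}$; because $\alpha\to0$, the Gaussian sum is controlled by its smallest exponent $(l-|q|)^2/(2\alpha^2)$ and tends to $0$ uniformly on $\{|q|\le l-\delta\}$, since no lattice value $l-q+2kl$ vanishes for $|q|<l$. Only when $q\to\pm l$ and $p\to0$ simultaneously does the $k=0$ (or $k=-1$) term survive with value near $1$, so the cross term stays near $1$ exactly at $(\pm l,0)$; note this gives the consistent limiting value $\|\omega_{qp}\|_l^2\to\tfrac12(1+1-2)=0$ there, matching the identity $\omega_{\pm l,0}\equiv0$ recorded after (\ref{35}). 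On any subset of $\Omega$ disjoint from neighbourhoods of $(\pm l,0)$ there exist $\delta,\varepsilon>0$ such that every point satisfies $|q|\le l-\delta$ or $|p|\ge\varepsilon$, and the cross term vanishes uniformly on each regime, hence on their union.

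The main obstacle is precisely this last step: making the uniform vanishing of the cross term rigorous and correctly isolating the exceptional points. The decomposition and the appeal to Proposition~\ref{LemNormCirc} are routine; the delicate part is to bound the theta-type Gaussian sum uniformly for small $\alpha$ so that the single surviving term is genuinely isolated, and then to glue the two regimes $\{|p|\ge\varepsilon\}$ and $\{|q|\le l-\delta\}$ into one uniform estimate valid on every set avoiding neighbourhoods of $(\pm l,0)$.
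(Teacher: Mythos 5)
Your proposal is correct and follows essentially the same route as the paper: reduce $\|\omega_{qp}\|^2$ to circle states via $\Theta$ (using oddness of $\upsilon_{q-l,p}-\upsilon_{l-q,-p}$), invoke Proposition~\ref{LemNormCirc} for the norm terms, and show the cross term $(\upsilon_{q-l,p},\upsilon_{l-q,-p})_{2l}$ vanishes uniformly away from $(\pm l,0)$. The only differences are cosmetic: the paper uses the asymmetric identity $\|\omega_{qp}\|^2=\|\upsilon_{q-l,p}\|^2-(\upsilon_{q-l,p},\upsilon_{l-q,-p})_{2l}$ rather than your symmetric expansion, and it merely asserts the decay of the cross term, whereas you supply the explicit Gaussian formula and the two-regime uniformity argument that the paper leaves implicit.
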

\begin{proof}
Using the properties of $\omega_{qp}$ and $\Theta$, we have the following chain of equalities (see the beginning of the proof of Theorem~\ref{PropDynAsympBox}):
$$\|\omega_{qp}\|^2=(\omega_{qp},\omega_{qp})_l=
(\upsilon_{q-l,p},\upsilon_{q-l,p}-\upsilon_{l-q,-p})_{2l}=
\|\upsilon_{q-l,p}\|^2-(\upsilon_{q-l,p},\upsilon_{l-q,-p})_{2l}.$$
The first summand tends to 1 by Proposition~\ref{LemNormCirc}. The second summand tends to zero for
$(q,p)\in\Omega$, $(q,p)\neq(\pm l,0)$. Both limits are uniform on the sets indicated in the statement of the proposition. The proposition is proved.
\end{proof}

We make three remarks on Theorem~\ref{PropDynAsympBox}.

\begin{remark}\label{r5}The remarks at the end of Section~\ref{s2}, which were made in the circle case, remain valid here except for some inessential differences in the semiclassical dynamics of the mean position and mean momentum. First, the mean momentum is not preserved in the semiclassical limit in a box but periodically changes the sign (because of reflections in the walls) while its absolute value is preserved. Second, since the position of a particle in a box is well defined, we can use a familiar formula for the mean position:
$$\overline q_t=\int_{-l}^lx|\omega_{qp}(x)|^2\,dx.$$
Moreover, the centre of the packet is easily seen to move along the classical trajectory at the time scale $T_{cl}$.
\end{remark}

\begin{remark}\label{r6}We have proved theorems on the semiclassical limit of quantum dynamics on a circle and on an interval (in a box). Of course, these results can easily be extended to the case of multidimensional domains that are Cartesian products of any number of circles and intervals: rectangles, rectangular parallelepipeds, tori, cylinders, and so on. It would be interesting to extend our results to the case of more general domains, including arbitrary domains in three-dimensional space, arbitrary compact manifolds, as well as to the case of interacting particles.
\end{remark}

\begin{remark}
It is worthwhile to mention here the interesting results of M.\,V.~Berry about the fractal images of the graph of $|\psi(x,t)|^2$ as a function of $x$ and $t$. Here $\psi(x,t)$ is a solution of the Schr\"odinger equation for the infinite well (\ref{39}) with the uniform initial state $\psi(x,0)=\frac1{\sqrt{2l}}$. In fact, the result is valid for a box of arbitrary shape and finite surface area in any finite dimensional space \cite{Berry1}.

Also, surprisingly, quantum revivals have an analogue in classical optics (Talbot interference) \cite{Berry2,Berry3}.
\end{remark}

\section{The semiclassical limit of the dynamics for Husimi functions}\label{s4}
\subsection{The case of free quantum dynamics on a circle}
A new (so-called \textit{functional}) formulation of classical mechanics (or \textit{functional mechanics}) was suggested in \cite{14,15} (see also \cite{16,Mikh,30,Pisk,17,18}). The basic concept of functional mechanics is not a material point or an individual trajectory, but the probability density function in a phase space. Accordingly, the fundamental dynamical equations are not the Newton (or, equivalently, Hamilton) equations but the Liouville equation (even in the case of one particle, not an ensemble). The Newton (Hamilton) equations become approximate equations for the mean values of the positions and momenta. Corrections to solutions of the Newton equations have been calculated in some particular cases \cite{14,15,30,Pisk}.

The aim of passing to the new formulation of classical mechanics is to achieve the compatibility of the reversible microscopic dynamics and the irreversible macroscopic dynamics. This problem is known as the irreversibility problem (or reversibility paradox). It is one of the most fundamental problems of mathematical physics.

The dynamics of a material point is known to be reversible and recurrent while the dynamics of the density function satisfies the Liouville equation and has the so-called delocalization (collapse) property, which corresponds to irreversible behaviour. Therefore, if we adopt the description in terms of density functions not only for many-particle systems but even for one particle, then there is no contradiction between the micro- and macroscopic dynamics: both are irreversible (in some sense).

A procedure of constructing the density function of a physical system from directly observable quantities (results of measurements) is described in \cite{17}. The interaction of a system and the measuring instrument is studied from the point of view of functional mechanics in \cite{18}.

Now we want to know whether functional mechanics is preferable to Newtonian one from the quantum-mechanical point of view. To do this, we consider the semiclassical limits for quantum dynamics on a circle and in a box. In this subsection we take the case of the circle.

There is a correspondence between quantum states (density operators) and classical states (distribution density functions), which was considered by Husimi in the case of coherent states on the whole axis \cite{31,32,33}. Here we define an analogous correspondence for coherent states on a circle.

Let $\rho$ be a density operator (in other words, a quantum state) in $L_2(-l,l)$, that is, a positive operator with unit trace. We associate with $\rho$ the following function on the phase space $\Omega$:

\begin{equation}\label{44} 
\widetilde\rho(q,p)=\frac{1}{2\pi\hbar}\Tr P[\upsilon_{qp}]\rho=
\frac{1}{2\pi\hbar}(\upsilon_{qp},\rho\upsilon_{qp}),
\end{equation}
where $\Tr$ denotes the trace of an operator. If the density operator is a projector ($\rho = P[\psi]$, $\psi \in L_2(-l,l)$), which corresponds to the case of a pure quantum state, then formula (\ref{44}) takes the form 
\begin{equation*}
\widetilde\rho(q,p)=\frac{1}{2\pi\hbar}|(\psi,\upsilon_{qp})|^2.
\end{equation*}
Clearly, $\widetilde\rho(q, p) > 0$ and, by (\ref{4}), we have
$$\iint_\Omega\widetilde\rho(q,p)\,dqdp=1,$$
whence $\widetilde\rho(q,p)$ is a probability density function on the phase space (that is, a classical state). The correspondence (\ref{44}) taking each quantum density operator to a classical density of probability distribution is called the \textit{Husimi transform}. It can also be expressed as smoothing the Wigner function of the quantum state with a Gaussian function \cite{31,32,33}. The probability density function $\widetilde\rho$ is called the \textit{Husimi function} of the operator $\rho$.

\begin{remark}The Husimi function is not the only way of mapping  quantum density operators to  classical probability density functions. One problem of the Husimi function is that its marginal distributions of position and of momentum does not coincide with the corresponding quantum-mechanical distributions. For example, if (for simplicity) $\rho=P[\psi]$ for some $\psi\in L_2(-l,l)$, then the equality
\begin{equation}\label{EqMarg}
\int_{-\infty}^{+\infty}\widetilde\rho(x,p)\,dp=|\psi(x)|^2,
\end{equation}
in general, does not hold. Another way of mapping quantum density operators to  classical probability density functions is the so called tomography map \cite{Manko1,Manko2,Manko3}. A relation of the tomography map to the Husimi function is discovered in \cite{Manko4}.

However, we take the Husimi transform for the following reason. Consider the positive operator-valued measure $M$ given by the formula
\begin{equation}\label{EqPOVM}
M(B)=\frac1{2\pi\hbar}\iint_BP[\upsilon_{qp}]\,dqdp,
\end{equation}
where $B\subset\Omega$ is a Borel set. It can be regarded as an approximate simultaneous measurement of the position and momentum of a quantum particle on a circle (such measurements were introduced by J.~von Neumann \cite{Neumann}).

We would like to analyse the correspondence between classical and quantum dynamics on a circle. In classical mechanics, a simultaneous measurement of the position and momentum is allowed. Quantum-mechanically, this corresponds to an approximate measurement of the position and momentum like (\ref{EqPOVM}).

If a quantum particle is in the state $\rho$ and we perform measurement (\ref{EqPOVM}), then the probability density function of the result is exactly $\widetilde\rho(q,p)$ given by (\ref{44}).

Moreover, we will consider the Husimi function only in the semiclassical limit. In this limit, its marginal distributions coincide with the quantum-mechanical distributions (in particular, (\ref{EqMarg}) holds).
\end{remark}

More generally, a classical state $\sigma$ is a distribution\footnote{In the most general case, a classical state is defined as a probability measure on the phase space \cite{34}.}. Suppose that $\sigma\in\mathscr D'(K)$
(see Subsection~\ref{s31} for the definition of $\mathscr D'(K)$). The non-negativity of $\sigma$ is understood as $(\sigma,f)\geq0$ for every non-negative
test function $f$. The unit normalization of a distribution means that there is a limit $\lim\limits_{P\to\infty}
(\sigma,\lambda_P)=1$, where $\lambda_P$, $P>0$, is a family of functions such that $\lambda(x,p)=1$ for $|p|\leq P$,
$0\leq\lambda(x,p)\leq1$ for $P<|p|<P+1$, and $\lambda(x,p)=0$ for
$|p|\geq P+1$. If $\sigma(q,p)$ is an ordinary function (a regular distribution), this condition expresses its integrability on $\Omega$ and $\iint_\Omega\sigma \,dqdp=1$. We denote the set of all non-negative and unit-normalized distributions from $\mathscr D'(K)$ by $\mathscr D'_1(K)$.

We would like to know whether one can obtain any classical state as a semiclassical limit of the Husimi functions of some family of quantum states. This question
can be posed rigorously as follows. Given a (distribution) probability density function  $\sigma\in\mathscr
D'_1(K)$, can one find a family $\rho^{(\hbar)}$, $\hbar>0$,
of density operators in $L_2(-l,l)$ such that
$\widetilde\rho^{(\hbar)}\to\sigma$ in $\mathscr D'(K)$ as
$\hbar\to0$ (where $\widetilde\rho^{(\hbar)}$ is the Husimi function of the operator
$\rho^{(\hbar)}$)? We note that the definition of the functions
$\upsilon_{qp}$ occurring in the Husimi transform
involves a parameter $\alpha$. As in Theorem~\ref{PropDynAsymp}, we shall assume that $\alpha=\alpha(\hbar)$, $\alpha\to0$,
$\frac\hbar\alpha\to0$ (we have already noted that in this limiting case a quantum particle in the
state $\upsilon_{qp}$ has well-defined position and momentum).

A further question concerns the dynamics. Let the quantum system (the density
operator) evolve in time on a circle:
$\rho^{(\hbar)}_t=U^c_t\rho^{(\hbar)} U_t^{c\dag}$, where $U^c_t$
is defined by (\ref{8}). We denote the Husimi function of
$\rho^{(\hbar)}_t$ by $\widetilde\rho_t^{(\hbar)}(q,p)$.
The evolution of
the classical system (the probability density function) is given by the formula $\sigma_t(q,p)=\sigma(q-\frac{p}{m}t,p)$. We may ask whether the correspondence between the
family of quantum states $\rho^{(\hbar)}$, $\hbar>0$, and the classical state $\sigma$ is preserved in time
at various time scales (see Subsection~\ref{s23}). Namely, is it true that
$\widetilde\rho_t^{(\hbar)}-\sigma_t\to0$ in $\mathscr D'(K)$ as
$\hbar\to0$, where $t\in\mathbb R$ is an arbitrary fixed number (time scale $T_{cl}$)) or $t\to\infty$ (time scales $T_{coll}$ and $T_{rev}$; see the various versions of the relation between the limits $\hbar\to0$
and $t\to\infty$ in Theorem~\ref{PropDynAsymp})?

If the answers to these questions differ for $\sigma$ of the form
$\sigma(q,p)=\delta(q-q_0,p-p_0)$ (which corresponds to an individual trajectory of a material point and, hence, to Newtonian mechanics) and for $\sigma\in L_1(\Omega)$
 (which corresponds to a ``bunch'' of trajectories and, hence, to functional mechanics), then one can say that one formulation or the other is preferable.

\begin{remark}\label{r7}
It is known that the asymptotic behaviour of $\sigma_t$ depends strongly on
the initial density function $\sigma$. If
$\sigma(q,p)=\delta(q-q_0,p-p_0)$, then the motion is periodic and the period is $\frac{lm}{p_0}$. If $\sigma\in L_1(\Omega)$, then
the spatial density asymptotically flattens with respect to $q$ on a circle (in the sense of the so-called weak limit). Namely, the Kozlov's second  theorem on diffusion (Theorem~2 in \cite{Kozlov3}) says that there is a limit
\begin{equation}\label{45} 
\lim_{t\to\pm\infty}\iint_{\Omega}\sigma(q-\frac pmt,p)g(q,p)\,dqdp=
\int_{-\infty}^{+\infty}\left[\frac{1}{2l}\int_{-l}^l\sigma(q',p)\,dq'\right]
g(q,p)\,dqdp,\end{equation} if $\sigma,g\in L_2(\Omega)$. Formula  (\ref{45}) is easily seen to be valid for
$\sigma\in L_1(\Omega)$, $g\in\mathscr D(K)$. Thus,
\begin{equation}\label{46} 
\lim_{t\to\pm\infty}\sigma(q-\frac
pmt,p)=\frac1{2l}\int_{-l}^l\sigma(q',p)\,dq'\quad \text{in }\mathscr
D'(K),\end{equation} if $\sigma\in L_1(\Omega)$. This is the sense in which the spatial density flattens.
\end{remark}

The following theorem answers the questions posed above.

\begin{theorem}\label{TheoCoherDynClCirc}
1) Let $\sigma\in\mathscr D_1'(K)$. Then there is a family of density operators $\rho^{(\hbar)}$, $\hbar>0$, in $L_2(-l,l)$
 such that the corresponding Husimi functions
$\widetilde\rho^{(\hbar)}$converge to $\sigma$ in $\mathscr D'(K)$
as $\hbar\to0$, $\alpha\to0$, $\frac\hbar\alpha\to0$.

2) Let, further, $\rho^{(\hbar)}_t=U^c_t\rho^{(\hbar)}U_t^{c\dag}$, and let
$\widetilde\rho^{(\hbar)}_t$ be the Husimi function of $\rho^{(\hbar)}_t$. Then
\begin{equation}\label{47} 
\lim[\widetilde\rho^{(\hbar)}_t(q,p,t)-
\sigma(q-\frac{p}{m}t,p)]=0\quad\text{in } \mathscr
D'(K).\end{equation} Here the limit is performed as follows:
$\hbar\to0$, $\alpha\to0$, $\frac\hbar\alpha\to0$, $t=t(\hbar)$,
$\frac{\hbar t}\alpha\to0$.

3) If $\sigma\in L_1(\Omega)$, then equality
(\ref{47}) remains valid in the
following limit: $\hbar\to0$, $\alpha\to0$,
$\frac\hbar\alpha\to0$, $t\to\infty$, $\hbar t\to0$, $\frac{\hbar
t}\alpha\to2mD\in(0,\infty]$.
\end{theorem}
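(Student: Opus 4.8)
The plan is to realize every classical state by a \emph{$P$-representation} density operator and then observe that its Husimi function is a smoothing of $\sigma$ against the overlap kernel $\frac1{2\pi\hbar}|(\upsilon_{qp},\upsilon_{q'p',t})|^2$, to which Theorem~\ref{PropDynAsymp} applies verbatim. Since a non-negative, unit-normalized element of $\mathscr D_1'(K)$ is a probability measure, I would define (weakly)
\[
\rho^{(\hbar)}=\frac1{Z_\hbar}\,\frac1{2\pi\hbar}\iint_\Omega P[\upsilon_{q'p'}]\,d\sigma(q',p'),\qquad Z_\hbar=\frac1{2\pi\hbar}\iint_\Omega\|\upsilon_{q'p'}\|^2\,d\sigma(q',p').
\]
The operators $P[\upsilon_{q'p'}]$ are positive and uniformly bounded in norm, so the integral is a positive operator; its (finite) trace is $2\pi\hbar Z_\hbar$, so $\rho^{(\hbar)}$ is a genuine density operator, and by Proposition~\ref{LemNormCirc} one has $2\pi\hbar Z_\hbar\to1$, i.e.\ $W_\hbar:=1/(2\pi\hbar Z_\hbar)\to1$.

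The Husimi function of $\rho^{(\hbar)}_t=U^c_t\rho^{(\hbar)}U_t^{c\dagger}$ is
\[
\widetilde\rho^{(\hbar)}_t(q,p)=\frac1{2\pi\hbar}(U_{-t}\upsilon_{qp},\rho^{(\hbar)}U_{-t}\upsilon_{qp})=W_\hbar\,\frac1{2\pi\hbar}\iint_\Omega|(\upsilon_{qp},\upsilon_{q'p',t})|^2\,d\sigma(q',p'),
\]
which at $t=0$ gives part~1 as the special case recorded below. Pairing with a test function $g\in\mathscr D(K)$ and moving $g$ onto the kernel, I would use the symmetry $|(\upsilon_{qp},\upsilon_{q'p',t})|^2=|(\upsilon_{q'p'},\upsilon_{qp,-t})|^2$ to form
\[
G_\hbar(q',p')=\frac1{2\pi\hbar}\iint_\Omega|(\upsilon_{q'p'},\upsilon_{qp,-t})|^2\,g(q,p)\,dqdp,\qquad (\widetilde\rho^{(\hbar)}_t,g)=W_\hbar\iint_\Omega G_\hbar\,d\sigma.
\]
Now Theorem~\ref{PropDynAsymp}, applied with fixed index $(q',p')$, integration index $(q,p)$ and time $-t$, states precisely that the kernel minus its explicit target tends to zero in $\mathscr S'(K)$ uniformly in the fixed index; pairing that uniform statement with $g$ gives $G_\hbar\to\Psi_t$ uniformly, where $\Psi_t(q',p')=(\varphi_D\ast_q g(\cdot,p'))(q'+\frac{p'}m t)$ in the regime $c=0$, $D\in(0,\infty]$ (with the reading $\varphi_\infty\ast_q g=\frac1{2l}\int_{-l}^l g\,dq$ for $D=\infty$), and $\Psi_t=g(q'+\frac{p'}m t,p')$ for $D=0$.

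I would then split $(\widetilde\rho^{(\hbar)}_t,g)-(\sigma_t,g)$ into (i) $W_\hbar\iint(G_\hbar-\Psi_t)\,d\sigma$, (ii) $(W_\hbar-1)\iint\Psi_t\,d\sigma$, and (iii) $(\sigma_t,\tilde g)$ with $\tilde g=\varphi_D\ast_q g-g$. Terms (i) and (ii) vanish from the uniform convergence $G_\hbar-\Psi_t\to0$ together with $W_\hbar\to1$, using only that $\sigma$ has finite total mass (finite $L_1$ norm); crucially no smoothness of $\sigma$ is needed, because the test function $g$ has already been absorbed into the kernel. For $D=0$ one has $\tilde g\equiv0$, so (iii) drops out and part~2 follows at once, while the $t=0$ instance of the same computation is exactly part~1.

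The whole weight of part~3 therefore sits in term (iii), and this is the main obstacle. The key point is that $\tilde g=\varphi_D\ast_q g-g$ is again a test function in $\mathscr D(K)$ and has vanishing $q$-average: since $g$ is $2l$-periodic in $q$ and $\int_{\mathbb R}\varphi_D=1$, one has $\int_{-l}^l(\varphi_D\ast_q g)(q,p)\,dq=\int_{-l}^l g(q,p)\,dq$, whence $\int_{-l}^l\tilde g(q,p)\,dq=0$. Because $\sigma\in L_1(\Omega)$ and $t\to\infty$ throughout this regime, the classical flattening result~(\ref{46}) of Remark~\ref{r7} gives $\sigma_t\to\frac1{2l}\int_{-l}^l\sigma(q',\cdot)\,dq'$ in $\mathscr D'(K)$; pairing this $q$-independent limit with the zero-$q$-average function $\tilde g$ yields $(\sigma_t,\tilde g)\to0$. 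Thus the residual quantum spreading $\varphi_D$ is annihilated by the classical diffusion precisely because it preserves $q$-averages. Note that this decomposition separates the quantum limit $\hbar\to0$ (terms (i),(ii), using only the uniform estimates of Theorem~\ref{PropDynAsymp}) from the classical limit $t\to\infty$ (term (iii), using only~(\ref{46})), so that no delicate interchange of the two simultaneous limits is required.
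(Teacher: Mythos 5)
Your proof is correct in substance, but it takes a genuinely different route from the paper's, and the difference is worth recording. The paper feeds $\sigma$ itself into Theorem~\ref{PropDynAsymp}: for smooth $\sigma\in\mathscr S(K)$ it defines $\rho^{(\hbar)}$ by (\ref{48}) and reads off $\widetilde\rho^{(\hbar)}_t(q,p)-\sigma(q-\frac{p}{m}t,p)\to0$ uniformly on $\Omega$, precisely because a smooth $\sigma$ is an admissible test function for that theorem; general $\sigma\in\mathscr D'_1(K)$ are then reached by the diagonal regularization (\ref{50}) with $\sigma_{r(\hbar)}\in\mathscr S(K)$, $r(\hbar)\to\infty$, and delta functions by the special choice (\ref{51}). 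You avoid all approximation: you treat $\sigma$ as a probability measure (legitimate, since a non-negative distribution is a positive measure and the unit normalization forces total mass one, but this is a fact the paper never invokes and deserves an explicit sentence), define a single $\rho^{(\hbar)}$ for every $\sigma$, and transpose the kernel so that the test function $g$ --- not $\sigma$ --- occupies the slot that Theorem~\ref{PropDynAsymp} reserves for elements of $\mathscr S(K)$, after which $\sigma$ is only ever integrated against the uniformly convergent bounded functions $G_\hbar$. This duality trick buys uniformity of treatment (no diagonal sequence, no separate case for deltas) and, in part 3), a clean separation of the quantum limit (your terms (i)--(ii)) from the classical limit (term (iii)); by contrast the paper's part 3) for non-smooth $\sigma\in L_1(\Omega)$ tacitly still rests on (\ref{50}) and leaves the corresponding approximation error implicit. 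On part 3) itself the two arguments are morally the same: both combine Kozlov's flattening (\ref{46}) with the fact that convolution with $\varphi_D$ preserves $q$-averages; the paper phrases it as ``both sides tend to the same flat profile,'' you phrase it as ``the flat limit annihilates the zero-mean function $\tilde g$.''

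One step needs repair: Theorem~\ref{PropDynAsymp} cannot be applied ``verbatim'' at time $-t$. In the regime of part 3) one has $\frac{\hbar}{\alpha}(-t)\to-2mD<0$, which violates the theorem's hypothesis $D\in[0,\infty]$. The fix is one line and uses only facts available in the paper: since $\overline{\upsilon_{qp}}=\upsilon_{q,-p}$ and $\overline{U^c_t\psi}=U^c_{-t}\overline{\psi}$, one has $|(\upsilon_{q'p'},\upsilon_{qp,-t})|=|(\upsilon_{q',-p'},\upsilon_{q,-p,t})|$; so apply the theorem at time $+t$ with fixed index $(q',-p')$, integration variables $(q,\tilde p)$ where $\tilde p=-p$, and momentum-reflected test function $g(q,-\tilde p)$, and then use the evenness of $\varphi_D$ to land exactly on your $\Psi_t(q',p')=(\varphi_D*_qg(\cdot,p'))(q'+\frac{p'}{m}t)$. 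With that substitution (and a word on why $G_\hbar$ is continuous, so that the integral against $d\sigma$ makes sense), the rest of your argument goes through.
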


\begin{proof}
let $\sigma=\sigma(q,p)$ be a function from $\mathscr S(K)$ (a particular case of distribution from $\mathscr D'(K)$) satisfying the conditions of non-negativity and unit normalization. Then we define
\begin{equation}\label{48} 
\rho^{(\hbar)}=\iint_\Omega \sigma(q',p')P[\upsilon_{q'p'}]
\frac{dq'dp'}{\|\upsilon_{q'p'}\|^2}.\end{equation} The right-hand side depends on $\hbar$ because of the definition of the function
$\upsilon_{q'p'}$. Moreover, we choose $\alpha$ in the definition of $\upsilon_{qp}$ to be a function of $\hbar$ (that is, $\alpha=\alpha(\hbar)$),
such that $\alpha\to0$ and $\frac\hbar\alpha\to0$ as $\hbar\to0$.
For example, we can take $\alpha=C\sqrt\hbar$. The integral is understood in the weak sense: $\rho^{(\hbar)}$ is an operator such that for all $\psi,\chi\in L_2(-l,l)$ we have

$$(\psi,\rho^{(\hbar)}\chi)= \iint_\Omega\sigma(q',p')
(\psi,P[\upsilon_{q'p'}]\chi) \frac{dq'dp'}{\|\upsilon_{q'p'}\|^2}=
\iint_\Omega\sigma(q',p')
(\psi,\upsilon_{q'p'})(\upsilon_{q'p'},\chi)
\frac{dq'dp'}{\|\upsilon_{q'p'}\|^2}.$$ We claim that
$\rho^{(\hbar)}$
is a density operator, that is, a positive operator with unit trace. Indeed, the positivity of $\rho^{(\hbar)}$ follows directly from the non-negativity of $\sigma$. To prove that $\rho^{(\hbar)}$
is a trace class operator and to calculate its trace, let us take an arbitrary orthonormal basis $u_i$, $i=1,2,\ldots$, in
$L_2(-l,l)$ and calculate
\begin{multline}\label{49} 
\sum_{i=1}^\infty(u_i,\rho^{(\hbar)}u_i)=
\sum_{i=1}^\infty\iint_\Omega\sigma(q',p') |(u_i,\upsilon_{q'p'})|^2
\frac{dq'dp'}{\|\upsilon_{q'p'}\|^2}\\=
\iint_\Omega\sigma(q',p')\sum_{i=1}^\infty |(u_i,\upsilon_{q'p'})|^2
\frac{dq'dp'}{\|\upsilon_{q'p'}\|^2}=
\iint_\Omega\sigma(q',p')\,dq'dp'=1.
\end{multline}
Here the third equation follows from the Parseval--Steklov identity. To justify interchanging the sum and the integral in the second equation, we note that since the sequence $\sum_{i=1}^N|(u_i,\upsilon_{q'p'})|^2$,
$N=1,2,\ldots$, converges, it is bounded by some constant $A$. Then the integrand is majorized by the integrable function $A\sigma$ and Lebesgue theorem enables us to pass to the limit under the integral sign. Since the orthonormal basis was arbitrary, we have proved that
$\rho^{(\hbar)}$ is a density operator, as required.

For $t>0$, we have
$$\rho^{(\hbar)}_t=
\iint_\Omega\sigma(q',p') P[\upsilon_{q'p',t}]
\frac{dq'dp'}{\|\upsilon_{q'p'}\|^2},$$
$$\widetilde\rho^{(\hbar)}_t(q,p)=
\frac{1}{2\pi\hbar}\iint_\Omega\sigma(q',p')
|(\upsilon_{qp},\upsilon_{q'p',t})|^2
\frac{dq'dp'}{\|\upsilon_{q'p'}\|^2}.$$ Using formula
(\ref{44}), Theorem~\ref{PropDynAsymp} (the case $c=0$,
$D=0$) and Proposition~\ref{LemNormCirc}, we get
$$\lim[\widetilde\rho^{(\hbar)}_t(q,p)-\sigma(q-\frac{p}{m}\,t,p)]=0,$$
uniformly on $\Omega$ (and, hence, in $\mathscr D'(K)$), where the limit is performed as indicated in part 2) of the theorem. This proves parts 1), 2) of the theorem in the case when $\sigma\in\mathscr S(K)$  (part 1) is obtained by putting $t=0$). Parts 1), 2) for an arbitrary function  $\sigma\in\mathscr D'(K)$ follow since
$\mathscr S$ is dense in the set of locally integrable functions (that is, regular distributions), and the set of regular distributions is dense in the space
$\mathscr D$ of all distributions \cite{26}.

Namely, to obtain an arbitrary distribution
$\sigma\in\mathscr D'_1(K)$ as a limiting case of a Husimi functions, we take a sequence of functions
$\sigma_r\in\mathscr S(K)$, tendingto $\sigma$ in $\mathscr D(K)$
as $r\to\infty$ and put
\begin{equation}\label{50} 
\rho^{(\hbar)}=\iint_\Omega\sigma_{r(\hbar)}(q',p')
P[\upsilon_{q'p'}] \frac{dq'dp'}{\|\upsilon_{q'p'}\|^2}.
\end{equation}
Here $r(\hbar)$ is a function tending to infinity as
$\hbar\to0$. Then the limit of the Husimi transforms is the desired distribution $\sigma$.

For a distribution of the form $\sigma(q,p)=\delta(q-q_0,p-p_0)$,
$(q_0,p_0)\in\Omega$ there is another regularization, which is more straightforward than the general one (just described). We put
\begin{equation}\label{51} 
\rho^{(\hbar)}=P[\frac{\upsilon_{q_0p_0}}{\|\upsilon_{q_0p_0}\|}].
\end{equation}
Then
$\rho^{(\hbar)}_t=\rho^{(\hbar)}=P[\frac{\upsilon_{q_0p_0,t}}{\|\upsilon_{q_0p_0}\|}]$.
By Theorem~\ref{PropDynAsymp} (the case $c=0$, $D=0$) we have
\begin{equation}\label{52} 
\widetilde\rho^{(\hbar)}_t(q,p)=\frac{1}{2\pi\hbar\|\upsilon_{q_0p_0}\|^2}
|(\upsilon_{qp},\upsilon_{q_0p_0,t})|^2
\to\delta(q-q_0-\frac{p}{m}t,p-p_0).\end{equation}

We now prove part 3) of the theorem. Suppose that $\sigma\in L_1(\Omega)$ and
$\rho^{(\hbar)}$ is defined (\ref{48}) or
(\ref{50}). By Theorem~\ref{PropDynAsymp} (the case
$c=0$, $D\in(0,\infty]$) we have
\begin{equation*}
\lim[\widetilde\rho^{(\hbar)}_t(q,p)-\int_{-l}^l\sigma(q-\frac
pmt+q',p)\varphi_D(q')\,dq']=0
\end{equation*}
uniformly on $\Omega$, where the limit is realized as indicated in part 3). On the other hand, we have the convergence (\ref{46}). Substituting this in the last formula, we see
that
\begin{equation}\label{53} \lim\widetilde\rho^{(\hbar)}_t(q,p)=\frac{1}{2l}\int_{-l}^l\sigma(q',p)\,dq'
\end{equation}
in $\mathscr D'(K)$. Comparing (\ref{53}) and (\ref{46}), we conclude that (\ref{47}) remains valid. The
theorem is proved.
\end{proof}

We note that formula (\ref{47}) does not hold if 
$\sigma(q,p)=\delta(q-q_0,p-p_0)$, $(q_0,p_0)\in\Omega$, and the limit is realized as in part 3) of Theorem~\ref{TheoCoherDynClCirc}. Indeed, define
$\rho^{(\hbar)}$ by formula (\ref{51}). Then
$$
\widetilde\rho^{(\hbar)}_t(q,p)=\frac{1}{2\pi\hbar\|\upsilon_{q_0p_0}\|^2}
|(\upsilon_{qp},\upsilon_{q_0p_0,t})|^2,
$$
\begin{equation*}
\lim[\widetilde\rho^{(\hbar)}_t(q,p)-\varphi_D(q-q_0-\frac{p}{m}t)\delta(p-p_0)]=0,
\end{equation*}
where $\varphi_D(q-q_0-\frac{p}{m}t)\neq\delta(q-q_0-\frac{p}{m}t)$ since $D\neq0$.

Formula (\ref{47}) with $\sigma\in L_1(\Omega)$ also fails to hold if the limit is performed in the following way:
$\hbar\to0$, $\alpha\to0$, $\frac\hbar\alpha\to0$, $t\to\infty$,
$t-\frac MNT_{rev}\to0$, where $\frac MN\neq0$ is a rational fraction and
$T_{rev}$ is defined by formula (\ref{14}). For example,
if $\frac MN=1$, then
$$
\lim\widetilde\rho^{(\hbar)}_t(q,p)=\sigma(q,p)
$$
while $\sigma(q-\frac pmt,p)$ satisfies formula (\ref{46}).

Theorem~\ref{TheoCoherDynClCirc} shows that an appropriate choice of a family of density operators enables us to obtain in the semiclassical limit any classical density function. Thus, in this sense, there are at least as many quantum states as classical states. This density function evolves in accordance with the laws of classical mechanics at time scale $T_{cl}$ independently of whether it is a delta-function or an integrable function. 

A difference appears at time scale $T_{coll}$. We have seen that if $\sigma$ is a delta-function, then the classical dynamics deviates from the quantum dynamics because a quantum packet collapses while a classical point-like particle remains the point-like particle for all time. Thus, the Newtonian mechanics holds only at time scale $T_{cl}$. From the other side, if $\sigma$ is an integrable function, then Theorem~\ref{TheoCoherDynClCirc} shows that the classical dynamics remains valid from the quantum-mechanical point of view even at time scale $T_{coll}$ since it exhibits a collapse of spatial density just as the quantum dynamics does.

In both cases, the classical dynamics deviates from the quantum dynamics at time scale $T_{rev}$ because of the purely quantum effect of fractional revivals of packets.

Thus, functional classical mechanics remains valid at a larger time scale than the Newtonian classical mechanics. Thus, it is preferable from the quantum-mechanical point of view.

\subsection{The case of quantum dynamics in a box}Since the family $\omega_{qp}$, $(q,p)\in\Omega$ is a resolution of unity in $L_2(-l, l)$, the Husimi transform may be defined as
\begin{equation*}
\widetilde\rho(q,p)=\frac{1}{2\pi\hbar}\Tr P[\omega_{qp}]\rho=
\frac{1}{2\pi\hbar}(\omega_{qp},\rho\,\omega_{qp}),
\end{equation*}
where $\rho$ is a density operator (a quantum state) in
$L_2(-l,l)$ and $\widetilde\rho(q,p)$ is a density function on the phase space  $\Omega$ (a classical state). If $\rho=P[\psi]$ for some
$\psi\in L_2(-l,l)$, then
\begin{equation*}
\widetilde\rho(q,p)=\frac{1}{2\pi\hbar}|(\psi,\omega_{qp})|^2.
\end{equation*}
The Husimi transform based on the functions $\omega_{qp}$ is suitable for studying the correspondence between the classical and quantum mechanics for a particle in a box. We pose the same questions as in the case of a particle on a circle: is it true that every classical state can be obtained as a semiclassical limit of Husimi functions of quantum states, and at which time scale is this correspondence preserved for various classical states?

A difference of the box case from the circle case is that $\omega_{\pm
l,0}\equiv0$, whence we have $\widetilde\rho(\pm l,0)=0$. Therefore, we restrict ourselves to those density functions $\sigma$ which vanish in some neighbourhoods of the points $(\pm l,0)$. This restriction is physically inessential since the neighbourhoods can be arbitrarily small.

Let the quantum system evolve in time in the box:
$\rho_t=U^b_t\rho U_t^{b\dag}$, where
$U^b_t$ is the evolution operator in a box as defined by formula (\ref{38}) and $\rho$ is an initial density operator. We again denote the Husimi function of $\rho_t$ by $\widetilde\rho_t(q,p)$.

We can now prove an analogue of Theorem~\ref{TheoCoherDynClCirc} for dynamics in a box. The non-negativity and unit normalization of distributions in $\mathscr D'(\Omega)$ is defined as in $\mathscr D'(K)$. We denote the set of all non-negative and unit-normalized distributions in $\mathscr D'(\Omega)$ by $\mathscr D'_1(\Omega)$.

\begin{theorem}\label{TheoCoherDynClSegm}
1) Let $\sigma\in\mathscr D'_1(\Omega)$ and the support of $\sigma$ is disjoint from some neighbourhoods of the points $(\pm l,0)$. Then there is a family of density operators $\rho^{(\hbar)}$,
$\hbar>0$, in $L_2(-l,l)$, such that their Husimi functions $\widetilde\rho^{(\hbar)}$ converge to $\sigma$ in $\mathscr
D'(\Omega)$ as $\hbar\to0$, $\alpha\to0$, $\frac\hbar\alpha\to0$;

2) Let, further, $\rho^{(\hbar)}_t=U^b_t\rho^{(\hbar)}U_t^{b\dag}$,
$\widetilde\rho^{(\hbar)}_t$ and let $\rho^{(\hbar)}_t$
be the Husimi function of $\rho^{(\hbar)}_t$. Then
\begin{equation}\label{54} 
\lim[\widetilde\rho^{(\hbar)}_t(q,p,t)-
\sigma(q-\frac{p}{m}t,p)]=0\quad\text{in } \mathscr
D'(\Omega).\end{equation} Here the limit is performed as follows: $\hbar\to0$, $\alpha\to0$, $\frac\hbar\alpha\to0$,
$t=t(\hbar)$, $\frac{\hbar t}\alpha\to0$;

3) If $\sigma\in L_1(\Omega)$, then equality
(\ref{54}) remains valid in the following limit: $\hbar\to0$, $\alpha\to0$,
$\frac\hbar\alpha\to0$, $t\to\infty$, $\hbar t\to0$, $\frac{\hbar
t}\alpha\to2mD\in(0,\infty]$.
\end{theorem}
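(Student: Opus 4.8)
The plan is to follow the proof of Theorem~\ref{TheoCoherDynClCirc} step by step, replacing every circle ingredient by its box counterpart, and to supply the one genuinely new analytic input: a box version of Kozlov's diffusion formula (\ref{46}) that also symmetrizes the sign of the momentum. Concretely, for a non-negative, unit-normalized $\sigma\in\mathscr S(\Omega)$ whose support avoids neighbourhoods of $(\pm l,0)$, I would set
\begin{equation*}
\rho^{(\hbar)}=\iint_\Omega \sigma(q',p')\,P[\omega_{q'p'}]\,\frac{dq'dp'}{\|\omega_{q'p'}\|^2},
\end{equation*}
exactly as in (\ref{48}) with $\upsilon$ replaced by $\omega$. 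Since $\omega_{\pm l,0}\equiv0$, the factor $\|\omega_{q'p'}\|^{-2}$ is singular only at $(\pm l,0)$, where $\sigma$ vanishes, so the integrand is well behaved. Positivity of $\rho^{(\hbar)}$ is immediate from $\sigma\ge0$, and the trace is computed as in (\ref{49}): for any orthonormal basis $u_i$ of $L_2(-l,l)$ the Parseval--Steklov identity gives $\sum_i|(u_i,\omega_{q'p'})|^2=\|\omega_{q'p'}\|^2$, the denominator cancels, and $\Tr\rho^{(\hbar)}=\iint_\Omega\sigma\,dq'dp'=1$. Here Proposition~\ref{TheoUnitSegm} plays the role of Proposition~\ref{TheoUnit}.

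For parts 1) and 2) I would write the evolved Husimi function as
\begin{equation*}
\widetilde\rho^{(\hbar)}_t(q,p)=\frac{1}{2\pi\hbar}\iint_\Omega \sigma(q',p')\,|(\omega_{qp},\omega_{q'p',t})|^2\,\frac{dq'dp'}{\|\omega_{q'p'}\|^2}
\end{equation*}
and apply Theorem~\ref{PropDynAsympBox} in the case $c=0$, $D=0$ (so $N'=1$, $a=0$ and the kernel tends to $\delta(q'-q+\frac{p}{m}t,\,p'-p)$) together with Proposition~\ref{LemNormBox} ($\|\omega_{q'p'}\|^2\to1$). This yields $\widetilde\rho^{(\hbar)}_t(q,p)\to\sigma(q-\frac{p}{m}t,p)$, the translate being understood through the reflection symmetry (\ref{30}) built into $\mathscr S(\Omega)$. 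As $c=0$ is an integer, both Theorem~\ref{PropDynAsympBox} and Proposition~\ref{LemNormBox} provide convergence uniform on subsets of $\Omega$ away from $(\pm l,0)$; since $\sigma$ is supported there and the Husimi function vanishes at $(\pm l,0)$, this upgrades to convergence in $\mathscr D'(\Omega)$. Part 1) is the case $t=0$. A general $\sigma\in\mathscr D'_1(\Omega)$ with support away from $(\pm l,0)$ is then treated exactly as in the circle case: approximate it in $\mathscr D(\Omega)$ by functions $\sigma_r\in\mathscr S(\Omega)$ supported away from $(\pm l,0)$ and use the regularization (\ref{50}) with a suitable $r(\hbar)\to\infty$.

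For part 3), with $\sigma\in L_1(\Omega)$, I would invoke Theorem~\ref{PropDynAsympBox} in the case $c=0$, $D\in(0,\infty]$. For finite $D$ this gives $\widetilde\rho^{(\hbar)}_t(q,p)\to\int_{-l}^l\sigma(q-\frac{p}{m}t+u,p)\varphi_D(u)\,du$, a fixed Gaussian smoothing of the translate, while for $D=\infty$ the kernel tends to $\frac1{4l}[\delta(p'-p)+\delta(p'+p)]$ (formula (\ref{41})), so that $\widetilde\rho^{(\hbar)}_t(q,p)\to\frac1{4l}\int_{-l}^l[\sigma(q',p)+\sigma(q',-p)]\,dq'$. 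The decisive step is the box diffusion formula
\begin{equation*}
\lim_{t\to\pm\infty}\sigma\!\left(q-\frac{p}{m}t,p\right)=\frac1{4l}\int_{-l}^l[\sigma(q',p)+\sigma(q',-p)]\,dq'\quad\text{in }\mathscr D'(\Omega),
\end{equation*}
valid for $\sigma\in L_1(\Omega)$. Granting this, both $\widetilde\rho^{(\hbar)}_t$ and the classical evolute $\sigma(q-\frac{p}{m}t,p)$ tend to the same limit (for $D=\infty$ directly; for finite $D$ after one further application of the diffusion formula to the $\varphi_D$-smoothing, which is again in $L_1$), so their difference tends to zero, which is (\ref{54}). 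This mirrors the comparison of (\ref{53}) and (\ref{46}) in the circle proof.

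The hard part will be the box diffusion formula, since it produces not only the spatial flattening of the circle case but also the symmetrization over the sign of $p$ coming from reflections at the walls. The clean way to obtain it is through the covering correspondence of Subsection~\ref{s31}: writing $\sigma=T[\rho]$ with $\rho=T^{-1}\sigma\in L_1(\Omega_{2l})$ a (necessarily even) function on the circle of circumference $4l$, the box flow $\sigma(q-\frac{p}{m}t,p)$ pulls back to the free circle flow $\rho(q'-\frac{p'}{m}t,p')$, to which the circle formula (\ref{46}) with $l$ replaced by $2l$ applies; pushing the resulting $q'$-independent limit forward by $T$ via (\ref{31}) reproduces the symmetrized right-hand side above, which one checks coincides with the $D=\infty$ kernel of Theorem~\ref{PropDynAsympBox}. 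The remaining technical nuisance, as throughout the box analysis, is the uniform control near the degenerate points $(\pm l,0)$ where $\omega_{qp}$ vanishes; this is precisely why the support of $\sigma$ is required to avoid those points and why all convergence statements are asserted only on sets disjoint from their neighbourhoods.
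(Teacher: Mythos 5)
Your proposal is correct and takes essentially the same approach as the paper: the paper's own proof of Theorem~\ref{TheoCoherDynClSegm} is precisely the remark that one repeats the proof of Theorem~\ref{TheoCoherDynClCirc} with $\omega_{q'p'}$, Theorem~\ref{PropDynAsympBox} and Proposition~\ref{LemNormBox} replacing $\upsilon_{q'p'}$, Theorem~\ref{PropDynAsymp} and Proposition~\ref{LemNormCirc}, the degeneracy at $(\pm l,0)$ being handled by the support assumption (or, as the paper notes, by the Cauchy--Bunyakovsky inequality). The one ingredient the paper leaves implicit, the momentum-symmetrized Kozlov diffusion formula needed to compare the quantum and classical limits in part 3), is exactly what you state and prove via the covering map $T$, and your derivation is correct and consistent with the paper's own reduction of box dynamics to circle dynamics.
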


The proof is similar to the proof of Theorem~\ref{TheoCoherDynClCirc} but we now use Theorem~\ref{PropDynAsymp} instead of Theorem~\ref{PropDynAsymp}. In particular, if $\sigma\in\mathscr S(\Omega)$, then
$$\rho^{(\hbar)}=\iint_\Omega\sigma(q',p')P[\omega_{q'p'}]
\frac{dq'dp'}{\|\omega_{q'p'}\|^2},$$ and every distribution from $\mathscr D'(\Omega)$ can be approximated by functions from $\mathscr
S(\Omega)$. Note that the integral on the right-hand side is well defined even though $\omega_{\pm l,0}=0$. This is because we assumed that $\sigma$ vanishes in some neighbourhoods of the points
$(\pm l,0)$. But even if we take an arbitrary $\sigma\in\mathscr S(\Omega)$,
the points $(\pm l,0)$ are not poles for the integrand by the Cauchy-–Bunyakovsky inequality:
$$\frac{1}{2\pi\hbar}\iint_\Omega
\frac{|(\omega_{qp},\omega_{q'p',t})|^2}{\|\omega_{q'p'}\|^2}\,
\sigma(q',p')\,dq'dp'\leq\frac{\|\omega_{qp}\|^2}{2\pi\hbar}
\iint_\Omega\sigma(q',p')\,dq'dp'.$$

If $\sigma$ has the form $\sigma(q,p)=\delta(q-q_0,p-p_0)$ for some
$(q_0,p_0)\neq(\pm l,0)$, then we can propose a special approximating family of operators
$\rho^{(\hbar)}=P[\frac{\omega_{q_0p_0}}{\|\omega_{q_0p_0}\|}]$.

Thus, our conclusions for dynamics in a box are the same as for dynamics on a circle. The classical dynamics of a material point and the dynamics of a density function are both obtained in the limit $\hbar\to 0$ at time scale $T_{cl}$. But the Newtonian fails to describe the collapse of the probability density and, hence, ceases to be valid at time scale $T_{coll}$. Functional mechanics, as well as quantum mechanics, exhibits the collapse of the probability density. But it does not describe the revival of wave packets, which is, thus, a purely quantum phenomenon having no analogue in classical mechanics. Hence, functional mechanics remains valid at time scale $T_{coll}$ but not at time scale $T_{rev}$. We see that  functional mechanics remains valid at a longer time scale than Newtonian one and, therefore, it is preferable from the quantum-mechanical point of view.

\begin{remark}\label{r8}In classical mechanics, analogues of the theorems on diffusion are proved for systems of interacting particles \cite{35}. Therefore, it would be useful to generalize the  obtained results to quantum dynamics of interacting particles on a circle and in a box, as mentioned in Remark~\ref{r6}.
\end{remark}

\section{Quantum dynamics in a box of size known with a random error}\label{s5}
We have so far assumed that the length $2l$ of the box is known with the infinite accuracy. But one of the basic postulates of functional mechanics says that the parameters of a physical system cannot be measured with the infinite accuracy. Hence, if we wish to predict the probability that the particle is in some place at a given time, we must replace the fixed value $l \in \mathbb R$ by the density function $f(l)$ of some probability distribution of the parameter $l$. In particular, one can construct this function from the results of measurements \cite{17}. The dispersion of such a distribution may be very small, but it is always non-zero. Since $T_{rev}$ depends on $l$, this breaks the exact periodicity of free quantum dynamics. Moreover, it turns out that the spatial density in a finite volume has a limit as $t \to \infty$.

\begin{theorem}Let $\psi_l\in L_2(-l,l)$, $l>0$, be any family of state vectors such that $|\psi_l(x)|^2$ is integrable with respect to $(x,l)$ on $[-l,l]\times[0,\infty)$, and let $f(l)$ be a continuous probability distribution density supported on $[0,\infty)$ (that is, $f(l)\geq0$, $\int_0^\infty f(l)dl=1$). Moreover, let $f(l)=o(l^\varepsilon)$, $\varepsilon>0$, as $l\to0$). Then, there is a limit
$$\lim_{t\to\pm\infty}P(x,t)=P_\infty(x)=\int_0^\infty\frac{\chi_l(x)}{2l}
\left[1-\frac12(\psi_l(y),\psi_l(y+2x-2l)+\psi_l(y-2x+2l))\right]f(l)\,dl,$$
where
$$P(x,t)=\int_0^\infty\chi_l(x)|\psi_{t}(x,t)|^2f(l)\,dx,$$
$\chi_l(x)$ is the characteristic function of the interval $[-l,l]$, $\psi_l(x,t)=U^b_t\psi_l(x)$, and the functions $\psi_l$ are extended to the whole real line by the formula $\psi_l(x+2nl)=(-1)^n\psi_l[(-1)^n(x-2nl)]$, $n=\pm1,\pm2,\ldots$.
\end{theorem}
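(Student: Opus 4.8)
The plan is to exploit the exact periodicity of the box dynamics for each fixed $l$ and to show that averaging over $l$ against the continuous density $f$ destroys every oscillating Fourier mode in $t$, leaving only the time-mean. For fixed $l$ the Hamiltonian $H^b$ has the orthonormal eigenbasis $f_k^{(l)}(x)=\tfrac1{\sqrt l}\sin\bigl(\tfrac{\pi k}{2l}(x-l)\bigr)$ with eigenvalues $E_k(l)=\tfrac{\hbar^2\pi^2k^2}{8ml^2}$, so writing $\psi_l=\sum_k c_k(l)f_k^{(l)}$ gives
\[
|\psi_l(x,t)|^2=\sum_{k,k'}c_k(l)\overline{c_{k'}(l)}\,f_k^{(l)}(x)f_{k'}^{(l)}(x)\,e^{-i(E_k-E_{k'})t/\hbar}.
\]
Since $(E_k-E_{k'})t/\hbar=2\pi(k^2-k'^2)\,t/T_{rev}(l)$ with $T_{rev}(l)=\tfrac{16ml^2}{\pi\hbar}$, the density is periodic in $t$ with period $T_{rev}(l)$, and its Fourier modes are indexed by $n=k^2-k'^2$. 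The $n=0$ mode (which forces $k=k'$, as $k^2=k'^2$ has no solutions with distinct positive $k,k'$) is the time-mean $a_0(x,l)=\sum_k|c_k(l)|^2|f_k^{(l)}(x)|^2$; all other modes oscillate. Substituting this into $P(x,t)=\int_0^\infty\chi_l(x)|\psi_l(x,t)|^2f(l)\,dl$ splits it as $P_\infty(x)$ (coming from the $n=0$ part) plus an oscillating remainder.

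First I would dispose of the oscillating remainder. Changing variables $u=1/l^2$ turns the phase $(E_k-E_{k'})t/\hbar$ into $\tilde t\,n\,u$ with $\tilde t=\tfrac{\hbar\pi^2 t}{8m}$, so each off-diagonal contribution becomes a genuine Fourier integral $\int_0^\infty g_n(x,u)\,e^{i\tilde t n u}\,du$ evaluated at frequency $n\tilde t$. The Riemann--Lebesgue lemma then sends every such term to zero as $t\to\pm\infty$, since $n\neq0$ forces $n\tilde t\to\pm\infty$. Here both integrability hypotheses enter: the $f$-weighted integrability of $|\psi_l|^2$ guarantees $g_n\in L^1(du)$ near $u=0$ (i.e.\ $l\to\infty$), while the condition $f(l)=o(l^\varepsilon)$ controls the apparent $l^{-1}$ behaviour of $a_n(x,l)$ as $l\to0$ (relevant only at $x=0$), making both Fubini and Riemann--Lebesgue legitimate.

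Next I would identify the surviving $n=0$ term with the stated $P_\infty(x)$. This is a direct computation of $a_0(x,l)=\sum_k|c_k(l)|^2|f_k^{(l)}(x)|^2$: expanding $|f_k^{(l)}(x)|^2=\tfrac1{2l}\bigl(1-\cos\tfrac{\pi k(x-l)}{l}\bigr)$ and $|c_k|^2$ as a double integral over $\psi_l(y)\overline{\psi_l(y')}$, the kernel $\sum_k f_k^{(l)}(y)f_k^{(l)}(y')|f_k^{(l)}(x)|^2$ collapses, via product-to-sum identities and the cosine sum $\sum_k\cos(\tfrac{\pi k}{l}\xi)$ (a Dirac comb of period $2l$), onto shifted combs in $y'$. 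Reading off the shifts $y'\mapsto y\pm(2x-2l)$ and using the reflection extension $\psi_l(x+2nl)=(-1)^n\psi_l[(-1)^n(x-2nl)]$ to interpret arguments outside $[-l,l]$ reproduces exactly $\tfrac1{2l}\bigl[1-\tfrac12(\psi_l(y),\psi_l(y+2x-2l)+\psi_l(y-2x+2l))\bigr]$; integrating against $\chi_l(x)f(l)$ then gives $P_\infty(x)$.

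The main obstacle is justifying the interchange of the infinite sum over Fourier modes $n$ with the $l$-integral and the limit $t\to\pm\infty$: termwise Riemann--Lebesgue does not suffice unless the tail is controlled uniformly in $t$, and the crude bound $\sum_n|a_n|\le(\sum_k|c_k|\,|f_k^{(l)}(x)|)^2$ diverges. I would circumvent this by a density argument: establish the limit first for $\psi_l$ that are finite combinations of the $f_k^{(l)}$ depending smoothly on $l$ (finitely many modes, so the remainder is a finite sum and Riemann--Lebesgue applies directly), and then pass to general $\psi_l$ by approximation, using the $f$-weighted $L^1$ bound to dominate the error in $P(x,t)-P_\infty(x)$ uniformly in $t$. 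This reduces the whole statement to the finite-mode case together with a single uniform tail estimate.
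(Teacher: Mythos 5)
Your proposal follows essentially the same route as the paper's proof: expand $\psi_l$ in the eigenbasis of $H^b$, kill the off-diagonal ($k\neq k'$) terms in the $l$-integral by the Riemann--Lebesgue lemma, and identify the surviving diagonal part with $P_\infty(x)$ via the identity $\sin^2\theta=\tfrac12(1-\cos2\theta)$ and the autocorrelation formula. If anything you are more careful than the paper, which applies Riemann--Lebesgue termwise without the change of variables $u=1/l^2$ and without addressing the interchange of the infinite sum over modes with the limit $t\to\pm\infty$ --- precisely the gap your closing density argument is designed to fill.
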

The function $P(x,t)$ is the probability distribution density at time $t$ taking account of the random error in the determination of $l$. The value of $P(x,t)$ for a given $x$ contains contributions from the functions $|\psi_{l}(x,t)|^2$ over all $l$ such that $x$ lies in the interval $[-l,l]$. This explains the factor $\chi_l(x)$ in the expression for $P(x,t)$.

Let us analyse the limit probability distribution $P_\infty(x)$. The term
$$\int_0^\infty\frac{\chi_l(x)}{2l}f(l)\,dl$$
corresponds to the uniform distribution: $\chi_l(x)/2l$ is the uniform distribution on the interval $[-l,l]$, while the factor $f(l)$ corresponds to the density of the probability that
the half-length of the interval is $l$. The additional term
$$\Delta(x)=\int_0^\infty\frac{\chi_l(x)}{4l}
(\psi_l(y),\psi_l(y+2x-2l)+\psi_l(y-2x+2l))f(l)\,dl$$
is, thus, a correction to the uniform distribution. Interestingly, it produces a dependence of the limiting final probability distribution on the initial one. We shall prove below that this correction tends to zero in the semiclassical limit. We also note that the limits as $t\to+\infty$ and $t\to-\infty$, coincide, just as in the Kozlov's theorems. This reflects the time symmetry of quantum dynamics.

\begin{proof}
Let us expand the functions $\psi_l(x,t)$ into a Fourier series
$$\psi_l(x,t)=\frac1{\sqrt l}\sum_{k=1}^\infty a_k(l)\sin\left(\frac{\pi k}{2l}(x-l)\right)
\exp\left\{-\frac{i\pi^2\hbar k^2t}{8ml^2}\right\}$$
and substitute this in the expression for $P(x,t)$:
\begin{multline*}P(x,t)=\int_0^\infty\frac{\chi_l(x)}l\sum_{k,n=1}^\infty a_k\overline{a_n}
\sin\left(\frac{\pi k}{2l}(x-l)\right)\sin\left(\frac{\pi n}{2l}(x-l)\right)\\\times\exp\left\{-\frac{i\pi^2\hbar (k^2-n^2)t}{8ml^2}\right\}
f(l)\,dl.\end{multline*}
By the Riemann--Lebesgue lemma, all terms with $k\neq n$ tend to zero as $t\to\pm\infty$. Hence,
$$\lim_{t\to\pm\infty}P(x,t)=P_\infty(x)=
\int_0^\infty\frac{\chi_l(x)}l\sum_{k=1}^\infty
|a_k|^2\sin^2\left(\frac{\pi k}{2l}(x-l)\right) f(l)\,dl.$$ The theorem now follows from the formula
\begin{multline*}
\sum_{k=1}^\infty|a_k|^2\sin^2\left(\frac{\pi k}{2l}(x-l)\right)=
\frac12\sum_{k=1}^\infty |a_k|^2\left[1-\cos\left(\frac{\pi k}{2l}(x-l)\right)\right]\\=
\frac12\left[1-\frac12(\psi_l(y),\psi_l(y+2x-2l)+\psi_l(y-2x+2l))\right].
\end{multline*}
\end{proof}

This result is rather unexpected since the quantum dynamics in a bounded domain is commonly regarded as being almost periodic \cite{1} and, hence, has no limit for large values of time.

We have already mentioned that the limit distribution differs from the uniform one. To estimate their difference $\Delta(x)$ in the semiclassical limit, we first assume that $\psi_l = \omega_{qp}$ is a coherent state (we omit the subscript $l$) and then represent an arbitrary state by an integral over coherent states.

\begin{proposition}
Let $\psi_l=\omega_{qp}$ for all $l>0$. Consider the semiclassical limit  $\hbar,\alpha,\frac\hbar\alpha\to0$. Then $\Delta(x)$ tends to zero in the weak sense, that is, for every
element $\sigma$ of the space $\mathscr S(\mathbb R)$ of rapidly decaying functions we have
$$\int_{-\infty}^{+\infty}\Delta(x)\sigma(x)\,dx\to0.$$
\end{proposition}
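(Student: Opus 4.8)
The plan is to substitute $\psi_l=\omega_{qp}$ into the definition of $\Delta(x)$, pair against the test function $\sigma$, and reduce everything to the autocorrelation of a single coherent state. Writing
$$B_l(x)=(\omega_{qp}(y),\,\omega_{qp}(y+2x-2l)+\omega_{qp}(y-2x+2l))$$
for the $y$-scalar product, with $\omega_{qp}$ understood as its reflection-extension to $\mathbb R$, one has
$$\int_{-\infty}^{+\infty}\Delta(x)\sigma(x)\,dx=\int_0^\infty\frac{f(l)}{4l}\left[\int_{-l}^{l}B_l(x)\sigma(x)\,dx\right]dl .$$
The whole assertion then follows once I show that the inner $x$-integral is $O(\alpha)$ uniformly in $l$ and that the weight $f(l)/l$ is integrable on $(0,\infty)$; the interchange of the two integrations is justified a posteriori by the same uniform bound.

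Next I would compute $B_l(x)$. Since $\omega_{qp}$ is the reflection sum (\ref{34}) of the line packets $\eta_{qp}$, each autocorrelation $(\omega_{qp}(y),\omega_{qp}(y+s))$ expands into a double sum of overlaps of shifted and reflected Gaussians, every one of which is given in closed form by (\ref{22}) with $t=0$. For the line packet itself this yields exactly $(\eta_{qp}(y),\eta_{qp}(y+s))=\exp\{-s^2/8\alpha^2+ips/\hbar\}$, a Gaussian in $s$ of width $\sim\alpha$. Hence, as a function of $x$ (with $s=2(x-l)$), $B_l(x)$ is, up to exponentially small remainders, a finite sum of Gaussian bumps of width $O(\alpha)$ and height $O(1)$, located near the walls $x=\pm l$ and near the reflected centre (the shifts at which an original bump aligns with a reflected copy). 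I therefore expect $\int_{-l}^l|B_l(x)|\,dx\le C\alpha$ with $C$ absolute, and the oscillating phase $e^{ips/\hbar}$ is not even needed for this bound, only improving it.

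The real work is to make this concentration estimate uniform in $l$, and this I expect to be the main obstacle because of two crossover regimes. For $l\gtrsim\alpha$ the bumps are narrow compared with the box, there are only $O(1)$ of them, and their total $L_1$-mass in $x$ is $O(\alpha)$ even when two of them coalesce, as happens near $l\approx|q|$ when the reflected bump reaches a wall. For $l\lesssim\alpha$ the bumps fill the interval, and I would bound crudely instead: by the Cauchy--Schwarz inequality $\sup_x|B_l(x)|\le2\|\omega_{qp}\|^2$, while the chain of equalities in the proof of Proposition~\ref{LemNormBox} together with (\ref{27}) taken for the box of size $2l$ gives $\|\omega_{qp}\|^2\le2\|\upsilon_{q-l,p}\|^2=2\sum_k e^{-2l^2k^2/\alpha^2}e^{4iplk/\hbar}$, whose modulus is of order $\alpha/l$ for small $l/\alpha$ by Poisson summation. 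Thus $\int_{-l}^l|B_l|\,dx\le 4l\,\|\omega_{qp}\|^2=O(\alpha)$ here as well, and in all regimes $\int_{-l}^l|B_l(x)|\,dx\le C\alpha$ with $C$ independent of $l$. It is precisely this estimate $\|\omega_{qp}\|^2=O(\alpha/l)$, obtained from (\ref{27}) via the modular/Poisson transformation, that rescues the small-$l$ part against the $1/l$ weight.

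Finally I would combine the pieces. The uniform bound gives
$$\left|\int_{-\infty}^{+\infty}\Delta(x)\sigma(x)\,dx\right|\le C\alpha\,\|\sigma\|_\infty\int_0^\infty\frac{f(l)}{4l}\,dl ,$$
and the last integral is finite: near $l=0$ the hypothesis $f(l)=o(l^\varepsilon)$ makes $f(l)/l=o(l^{\varepsilon-1})$ integrable, while near $l=\infty$ one has $1/l\le1$ and $f\in L_1$. Since $\alpha\to0$ in the limit $\hbar,\alpha,\frac\hbar\alpha\to0$, the right-hand side tends to $0$, which is the desired weak convergence $\int\Delta\sigma\to0$ for every $\sigma\in\mathscr S(\mathbb R)$.
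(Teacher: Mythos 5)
Your proposal is correct, but it follows a genuinely different route from the paper's. The paper expands the test function $\sigma$ in a Fourier integral, rewrites the shifted states via $\omega_{qp}(y\pm(2x-2l))=\pm\,\omega_{\mp q\mp 2x,-p}(y)$, and then explicitly evaluates the resulting Gaussian $x$-integrals of coherent-state overlaps ``as in Section~\ref{s3}'', concluding that they vanish in the limit; that computation also exhibits the extra suppression factor $e^{-2\alpha^2p^2/\hbar^2}$ coming from the momentum flip $p\mapsto -p$. You instead avoid Fourier analysis of $\sigma$ entirely and prove a direct concentration estimate: unfolding the reflection sum (\ref{34}) and using (\ref{22}) at $t=0$, you show the autocorrelation $B_l(x)$ is a sum of Gaussian bumps of width $O(\alpha)$, so $\int_{-l}^{l}|B_l(x)|\,dx\le C\alpha$ with $C$ independent of $l$, $\hbar$, $\alpha$; the delicate small-$l$ regime is rescued by $\|\omega_{qp}\|^2=O(\alpha/l)$, which you correctly extract from Proposition~\ref{LemNormBox} and (\ref{27}) via Poisson summation. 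This buys you two things the paper leaves implicit: (i) the passage of the limit through the $l$-integral with its $1/l$ weight is rigorously justified by dominated convergence, and this is exactly where the hypothesis $f(l)=o(l^\varepsilon)$ enters (the paper never says where it is used); (ii) your bound $C\alpha\|\sigma\|_\infty\int_0^\infty f(l)/(4l)\,dl$ needs only $\alpha\to0$, not the oscillatory phase or $\hbar/\alpha\to0$, so the conclusion is in fact slightly stronger. The trade-off is that your key estimate is sketched rather than computed (the tail summation over distant reflected copies, and the constant in the Cauchy--Schwarz step, where the shifted norm satisfies $\|\omega_{qp}(\cdot+s)\|_{L_2(-l,l)}\le\sqrt2\,\|\omega_{qp}\|$ rather than equality), whereas the paper's route, though terse, reduces everything to closed-form Gaussian integrals already worked out in Section~\ref{s3}.
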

\begin{proof}
Expanding $\sigma$ into the Fourier integral
$$\sigma(x)=\frac1{\sqrt{2\pi}}\int_{-\infty}^{+\infty}\widetilde\sigma(\lambda)e^{i\lambda x}d\lambda$$
and using the formulae $\omega_{qp}(y+2x-2l)=\omega_{-q-2x,-p}(y)$ and $\omega_{qp}(y-2x+2l)=\omega_{-q+2x,-p}(y)$, we get
$$\int_{-\infty}^{+\infty}\Delta(x)\sigma(x)\,dx=
\frac1{4\sqrt{2\pi}l}\int_{-\infty}^{+\infty}d\lambda\,\widetilde\sigma(\lambda)
\int_0^\infty dl\,f(l)\int_{-\infty}^{+\infty}
(\omega_{qp},\omega_{-q-2x,-p}+\omega_{-q+2x,-p})\,dx.$$ 
Calculating the integral over $x$ (the calculation is analogous to those in Section~\ref{s3}), we see that this expression tends to zero. The proposition is proved.
\end{proof}
Consider again arbitrary functions  $\psi_l\in L_2(-l,l)$, $l>0$, and express them as integrals over coherent states (a corollary of (\ref{36})):
$$\psi_l=\frac1{2\pi\hbar}\iint_\Omega g(q,p)\,\omega_{qp}\,dqdp,$$
where $g(q,p)=(\omega_{qp},\psi_l)$. We perform the semiclassical limit in the following way. As usual, let $\hbar,\alpha,\frac\hbar\alpha$ tend to zero as parameters of the state $\omega_{qp}$, but let the function $g$ remains constant. In other word, the dependence of $\psi_l$ on $\hbar$ and $\alpha$ is such that the scalar product $(\omega_{qp},\psi_l)$ coincides with the fixed function $g(q,p)$.

\begin{corollary}
Under this semiclassical limit of arbitrary functions $\psi_l$, we have $\Delta(x)\to0$ in the weak sense.
\end{corollary}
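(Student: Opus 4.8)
The plan is to reduce the statement to the single--coherent--state Proposition just proved, using the sesquilinearity of $\Delta(x)$ in $\psi_l$ together with the decomposition $\psi_l=\frac1{2\pi\hbar}\iint_\Omega g(q,p)\,\omega_{qp}\,dqdp$ with fixed symbol $g$. First I would insert this decomposition into the two autocorrelations occurring in
\begin{equation*}
\Delta(x)=\int_0^\infty\frac{\chi_l(x)}{4l}\bigl(\psi_l(y),\psi_l(y+2x-2l)+\psi_l(y-2x+2l)\bigr)f(l)\,dl .
\end{equation*}
Because the pairing is conjugate--linear in the first slot and linear in the second, this yields a double integral over two independent label pairs $(q,p)$ and $(q',p')$, weighted by $\overline{g(q,p)}\,g(q',p')$, whose integrand is exactly the object handled in the Proposition, but now carrying two labels instead of one. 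Testing against an arbitrary $\sigma\in\mathscr S(\mathbb R)$ we thus obtain
\begin{equation*}
\int_{-\infty}^{+\infty}\Delta(x)\sigma(x)\,dx=\frac1{(2\pi\hbar)^2}\iint_\Omega\iint_\Omega\overline{g(q,p)}\,g(q',p')\,I_{qp,q'p'}\,dqdp\,dq'dp',
\end{equation*}
where $I_{qp,q'p'}$ is the two--label version of the quantity shown to vanish in the Proposition.

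Next I would establish the pointwise vanishing $I_{qp,q'p'}\to0$ for every fixed pair of labels. This is the very computation carried out in the Proposition: one expands $\sigma$ in its Fourier integral, applies the translation--reflection identities for $\omega_{qp}$ (the two--label analogues of $\omega_{qp}(y+2x-2l)=\omega_{-q-2x,-p}(y)$ and $\omega_{qp}(y-2x+2l)=\omega_{-q+2x,-p}(y)$ used there), and carries out the resulting Gaussian integration over $x$. The presence of two distinct labels only shifts the centres and phases of the Gaussians and does not affect the mechanism by which the expression tends to zero as $\hbar,\alpha,\frac\hbar\alpha\to0$.

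The main obstacle is to pass the limit through the double integral over the coherent--state labels, since the prefactor $\frac1{(2\pi\hbar)^2}$ diverges. For this I would exploit the rapid Gaussian decay of the overlaps $(\omega_{qp},\omega_{q'p'})$ in the differences $q-q'$ and $p-p'$, which follows from the overlap formula (\ref{22}) and its box analogue underlying (\ref{37}); this decay supplies a compensating factor of order $\hbar$ upon integrating out one label, so that, via a Cauchy--Bunyakovsky estimate combined with the resolution of unity (\ref{36}), the whole double integral is bounded by a constant multiple of $\|\psi_l\|^2$ uniformly in $\hbar$. Since the $\psi_l$ are state vectors and $f\in L_1$, the integrand in $l$ is then dominated by an integrable function independent of $\hbar$, and the Lebesgue dominated convergence theorem permits interchanging the limit with the integrations over $l$ and over the labels. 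Combining this with the pointwise vanishing of $I_{qp,q'p'}$ gives $\int_{-\infty}^{+\infty}\Delta(x)\sigma(x)\,dx\to0$; as $\sigma\in\mathscr S(\mathbb R)$ was arbitrary, this is the asserted weak convergence $\Delta(x)\to0$.
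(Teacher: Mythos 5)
Your overall plan (sesquilinear expansion of $\Delta$ over coherent states, reduction to a two--label version of the Proposition, then an interchange of limits) is the natural way to flesh out the paper's one-line ``direct calculation,'' and your first two steps are sound: the expansion with weights $\overline{g(q,p)}\,g(q',p')$ is correct, and the pointwise vanishing of $I_{qp,q'p'}$ for fixed labels does follow by the same Gaussian mechanism as in the Proposition. The genuine gap is in your third step. A bound of the form $\bigl|\text{double integral}\bigr|\le C\|\psi_l\|^2$ uniformly in $\hbar$ is \emph{not} domination of the integrand by a fixed integrable function, so the dominated convergence theorem cannot be invoked to pass the limit through the label integrals: the integrand carries the prefactor $(2\pi\hbar)^{-2}|g(q,p)||g(q',p')|$, which blows up as $\hbar\to0$, and it blows up precisely on the set where the overlap kernel concentrates (momenta matched up to $O(\hbar/\alpha)$, positions matched up to $O(\alpha)$). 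Uniform boundedness of the integrals plus pointwise vanishing of the integrands off that set proves nothing, for the usual reason that mass can concentrate (compare $\int_0^1 n\,\mathbf 1_{[0,1/n]}\,dt=1$ with integrand tending to $0$ a.e.). Indeed, if one keeps only the decay of $|(\omega_{qp},\omega_{q'p'})|$ in the \emph{differences} $q-q'$, $p\mp p'$, as you propose, then after the $x$--integration (which consumes the position Gaussian and yields the factor $O(\alpha)$) the Schur--type estimate gives exactly $O(\alpha)\cdot\hbar^{-2}\cdot(\hbar/\alpha)\cdot\iint|g|^2\sim O(\|\psi_l\|^2)$, i.e.\ $O(1)$ and not $o(1)$: the computation stalls at boundedness.

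What actually makes the expression vanish is a piece of structure your estimate discards: the overlaps in (\ref{22}) carry the oscillatory phase $e^{-i(p'+p)(q'-q)/2\hbar}$, so as a function of $x$ each image term oscillates at frequency proportional to the \emph{sum} of the paired momenta divided by $\hbar$. Since it is integrated against the $\hbar$--independent function $\sigma(x)\chi_l(x)$, the exact Gaussian $x$--integration produces not only the factor $O(\alpha)$ but also Gaussian decay $e^{-c\,\alpha^2(p^2+p'^2)/\hbar^2}$ in \emph{each} momentum label separately (sum and difference decay combine to $p^2+p'^2$). With this refined bound on $I_{qp,q'p'}$, and using that the fixed symbol $g$ is bounded and square integrable (so its mass in the strip $|p|,|p'|\lesssim\hbar/\alpha$ is of order $(\hbar/\alpha)^2$, rather than concentrated there), a direct estimate of the double label integral gives $O(\hbar/\alpha)\to0$ with no appeal to dominated convergence at all. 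Note also that the fixedness of $g$ enters quantitatively here; your argument uses it only through ``the $\psi_l$ are state vectors,'' which is in tension with the hypothesis $(\omega_{qp},\psi_l)=g$ fixed, since the resolution of unity (\ref{36}) then forces $\|\psi_l\|^2=\frac1{2\pi\hbar}\iint_\Omega|g|^2\,dqdp$ to grow; the statement should be read with the transform fixed in shape and the states normalized, and the final estimate must be carried out at that level rather than delegated to a domination argument.
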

The proof can be performed by a direct calculation.

We recall that the weak convergence of probability density functions is sufficient from a physical point of view \cite{Kozlov3,11,12}.

Thus, taking account of the inevitable random error in measuring the size of the box, we see that the probability distribution of the position of a particle in a box has a limit as $t \to\infty$, and the limiting distribution is non-uniform and depends on the initial one. But in the semiclassical approximation, the limiting distribution becomes uniform.

\section*{Acknowledgements}
The authors are grateful for useful remarks and discussions to M.\,V.~Berry, S.\,V.~Bolotin, S.\,Yu.~Dobrokhotov, J.\,R.~Klauder, V.\,V.~Kozlov, V.\,I.~Man'ko, A.\,G.~Sergeev, O.\,G.~Smolyanov, A.\,D.~Sukhanov, D.\,V.~Treshchev, B.\,L.~Voronov, V.\,V.~Vedenyapin, and E.\,I.~Zelenov. This work was partially supported by the Russian Foundation for Basic Research (project 11-01-00828-a), the Russian Federation's President Programme for the Support of Leading Scientific Schools (project NSh-2928.2012.1), and the Programme of the Division of Mathematics of the Russian Academy of Sciences.

\appendix

\section*{Appendix}

Here we make a remark on the modular property of the theta-function

\begin{equation*}
\theta(z,\tau)=\sum_{k=-\infty}^{+\infty}\exp\{-\pi\tau k^2+2\pi
ikz\},\quad z,\tau\in\mathbb C,\quad \Re\tau>0,
\end{equation*}
in a form applicable to the properties of quantum coherent states on an interval.

We have the well-known modular relation
\begin{equation}\label{55} 
\theta\left(\frac{z}{i\tau},\frac{1}{\tau}\right)=\sqrt\tau e^{\frac{\pi
z^2}{\tau}}\theta(z,\tau).
\end{equation}

To see this, rewrite it in the form
\begin{equation}\label{56} 
\frac1{\sqrt\tau}\sum_{n=-\infty}^{+\infty}\exp\left\{-\frac{\pi(z-n)^2}{\tau}\right\}=
\sum_{k=-\infty}^{+\infty}\exp\{-\pi\tau k^2+2\pi ikz\}
\end{equation}

Express $z\in\mathbb C$ in the form $z=x+iy$, $x,y\in\mathbb R$. The left-hand side of  (\ref{56}) belongs to $L_2(0,1)$ as a function of $x$ for fixed $y$ and $\tau$.
Hence, it can be expanded into a Fourier series with respect to the orthonormal basis $\{e^{i\pi kx},\: k=0,\pm1,\pm2,\ldots\}$:
\begin{equation}\label{57} 
\frac1{\sqrt\tau}\sum_{n=-\infty}^{+\infty}\exp\left\{-\frac{\pi(x+iy-n)^2}{\tau}\right\}=
\sum_{k=-\infty}^{+\infty}a_ke^{i\pi kx}.
\end{equation} 
We find the Fourier coefficients in the following way:
\begin{multline*}
a_k=\frac1{\sqrt{\tau}}\sum_{n=-\infty}^{+\infty}\int_0^1
\exp\left\{-\frac{\pi(x+iy-n)^2}{\tau}-i\pi kx\right\}dx\\=
\frac1{\sqrt{\tau}}\int_{-\infty}^{+\infty}
\exp\left\{-\frac{\pi(x+iy)^2}{\tau}-i\pi kx\right\}dx=\exp\{-\pi\tau k^2+2\pi i
k(iy)\}.
\end{multline*}
Substituting this expression in (\ref{57}), we obtain (\ref{56}).

Thus, the modular property of the theta-function can be viewed as a rephrase of the Fourier series
expansion of the function which is the sum of the Gaussian functions centred at the integer points.

\end{document}